\tikzstyle{startstop} = [rectangle, rounded corners, minimum width=3cm, minimum height=1cm,text centered, draw=black, fill=red!30, font=\footnotesize]
\tikzstyle{io} = [trapezium, trapezium left angle=70, trapezium right angle=110, minimum width=3cm, minimum height=1cm, text centered, draw=black, fill=blue!30, font=\footnotesize]
\tikzstyle{process} = [rectangle, minimum width=3cm, minimum height=1cm, text centered, draw=black, fill=orange!30, font=\footnotesize]
\tikzstyle{arrow} = [thick,->,>=stealth]
\newtheorem{definition}{{Definition}}
\newtheorem{proposition}{{Proposition}}
\newtheorem{theorem}{{Theorem}}
\newtheorem{lemma}{{Lemma}}
\newcommand{\uX}       {\mbox{\boldmath$X$}}
\algnewcommand{\LineComment}[1]{\State $\triangleright$ #1}
\algnewcommand{\NoNumberInput}{\item[\textbf{Input:}]}
\algnewcommand{\NoNumberOutput}{\item[\textbf{Output:}]}
\newcommand{\ubeta}             {\mbox{\boldmath$\beta$}}
\newcommand{\uiota}             {\mbox{\boldmath$\uiota$}}
\DeclareMathOperator*{\argmin}{argmin}
\def\l{\left}
\def\r{\right}
\def\sym#1{\ifmmode^{#1}\else\(^{#1}\)\fi}
\def\marginnote#1{\setbox0=\vtop{\hsize4pc
		\small\raggedright\noindent\baselineskip9pt \rightskip=0.5pc plus
		1.5pc #1}\leavevmode \vadjust{\dimen0=\dp0
		\kern-\ht0\hbox{\kern-4.00pc\box0}\kern-\dimen0}}
\def\lboxit#1{\vbox{\hrule\hbox{\vrule\kern6pt
			\vbox{\kern6pt#1\kern6pt}\kern6pt\vrule}\hrule}}
\begin{document}
	
	\title{Robust transfer regression with corrupted labels}

	\author{{
			Sheng Pan\footnote{pansheng@staff.ynu.edu.cn.},
		}\\
		{\em\footnotesize  School of Mathematics and Statistics,Yunnan University, Kunming 650500, China}\\
		
	}
	
	\date{}
	
	\maketitle

	\begin{abstract}
		\noindent 
		In this paper, we introduce a robust transfer regression method designed to handle  corrupted labels in target data, under the scenarios that the corruption affects a substantial portion of the labels and the locations of these corruptions are unknown. Theoretical analysis substantiates our approach, illustrating that the estimation error consists of three components: the first relates to the source data; the second encompasses the domain shift ; and the third captures the estimation error attributed to the corrupted vector. Our theoretical framework ensures that the proposed method surpasses estimations based solely on target data. We validate our method through numerical experiments aimed at reconstructing corrupted compressed signals. Additionally, we apply our method to analyze the association between O6-methylguanine-DNA methyltransferase (MGMT) methylation and gene expression in Glioblastoma (GBM) patients.
		
		\vspace{0.5cm}
		
		\bigskip
		\noindent \textit{Keywords}: robust transfer regression; adversarial corruption; lasso; high-dimensional;\\signal recovery\\

	\end{abstract}
	
	
	\section{Introduction}
	\paragraph{}
	In the field of data collection and analysis, data corruption refers to information that has been changed or damaged, leading to inaccuracies and reducing reliability. Especially in networked data compressive sensing (CS), it's not uncommon for a small number of sensors to report incorrect measurements or, in some cases, provide data points that are completely irrelevant(\cite{haupt2008compressed}). Similarly, real-world studies often face challenges from measurement errors such as misclassification and irregular assessment frequencies, which can harm the accuracy and credibility of research findings. While some inconsistencies are easy to spot and fix or remove during data cleaning, others fall within normal variability ranges, making them hard to detect (\cite{ackerman2024measurement}). These complexities require careful attention to maintain the integrity of conclusions based on the data.

	In this work, we focus on problems caused by corrupted labels. The corruption can be adversarial, covering scenarios like Huber's $\epsilon$-contamination model, which might affect a significant portion of the observations whose locations are unknown. In such cases, the data no longer follows an independently and identically distributed (i.i.d.) pattern, and the noise may not be symmetrically distributed.
	Traditional methods like Lasso \cite{tibshirani1996regression} and L1-Norm Quantile Regression \cite{li20081} do not perform well under these conditions. To tackle these challenges, several advanced methods have been developed for high-dimensional data.
	Extended Lasso techniques \cite{nguyen2012robust, descloux2022robust} aim to recover the true signal while also identifying error locations. The Median-of-Means approach \cite{Lecue2017RobustML, lecue2019learning, geoffrey2020robust} enhances robustness by dividing the dataset into smaller groups, calculating the mean for each group, and then taking the median of these means. This reduces the impact of outliers and heavy-tailed distributions.
	The robust gradient estimation method \cite{liu2019high, holland2019efficient} proposes estimating more reliable gradients during each iteration.
	These methods typically operate under the assumption that only target data is accessible for analysis. However, when source data is also available, transfer learning provides a potent alternative. By leveraging structural similarities across different but related domains or tasks, transfer learning has found successful application in numerous real-world scenarios. 
	
	In this paper, we focus on robust high-dimensional transfer regression. Various adaptation methods have been developed for transductive transfer learning, which can be applied to scenarios involving corrupted labels. 
	The marginal adaptation method proposed by\cite{pan2010domain} assumes that the conditional distributions of the target and source data are identical. Under this assumption, the Maximum mean discrepancy (MMD) introduced by Gretton et al.~\cite{gretton2012kernel} can be employed to measure the difference in predictor distributions between the two domains.
	Both joint distribution adaptation (JDA) by Long et al.~\cite{long2013transfer} and balanced distribution adaptation (BDA) by \cite{wang2017balanced} rely on the assumption that the MMD of class-conditional distributions can be approximated by replacing the true target labels with pseudo labels.
	However, these assumptions do not hold in our scenario, necessitating the development of alternative methods to achieve robust high-dimensional transfer regression. \cite{bastani2021predicting}, \cite{li2022transfer}, \cite{tian2023transfer} and \cite{li2024estimation} developed supervised transfer regression technique to improve the conventional estimation with L1 penalty. \cite{cai2024semi} proosed a semi-supervised triply robust inductive transfer learning under the assumption of scarce label of target data and   covariate shift.
	
	In this paper, we present a robust transfer Lasso algorithm specifically designed for signal reconstruction from potentially corrupted labels. Our contributions and findings can be summarized as follows:
	
	\begin{itemize}
		\item We propose a source data selection algorithm aimed at identifying suitable source datasets in the presence of potentially corrupted target data labels. By comparing the reconstructed signals derived from integrating target data with each source dataset to those obtained solely from the source datasets, our method effectively identifies and excludes source datasets that exhibit significant domain shifts.
		
		\item We present a transfer regression strategy designed for Lasso estimators that adjusts for both domain shifts and label corruption. Theoretical analysis reveals that the estimation error consists of three components: the Lasso estimation error on the aggregated selected source data, the impact of domain shift, and the estimation error due to label corruption. Furthermore, we establish the sign consistency property of our proposed algorithm.
		
		\item To validate our approach, we conducted numerical experiments focusing on the reconstruction of corrupted compressed signals. Notably, our method demonstrates a breakdown point exceeding 50\%. Additionally, we applied our method to explore the relationship between O6-methylguanine-DNA methyltransferase (MGMT) methylation and gene expressions in brain tissues of Glioblastoma (GBM) patients. Gene Ontology (GO) enrichment analysis of our results highlighted several pathways closely associated with GBM, underscoring the potential clinical relevance of our findings. 
	\end{itemize}

	The paper is outlined as follows: In Section 2.2, we first introduce the oracle version of robust transfer learning and present its theoretical results. In Section 2.3, we describe the source data selection process and detail the robust transfer Lasso algorithm. Sections 3 and 4 cover the simulations and the analysis of MGMT methylation and gene expression associations, respectively. All proofs are provided in the Appendix.

	\section{Methodology and main results\label{sec2}}
	\subsection{Notations}
	For a matrix $ A=\{a_{ij}\}_{i\in[n],j\in[p]} $, define the following norms:
	\begin{equation*}
		\|A\|_{\infty} = \max_{i,j}|a_{ij}|, \quad \|A\|_{L_1} = \max_i\sum_{j=1}^p|a_{ij}|.
	\end{equation*}
	
	For sets of indices $ T $ and $E$, $ A_{T} $ denotes the submatrix obtained by extracting those columns indexed by $ T $ and $ A_{r(E)} $ denotes the submatrix obtained by extracting those rows indexed by $ E $. $A_{ET}$ denotes the submatrix obtained by extracting those rows indexed by $ E $ and those columns indexed by $ T $. $\lambda_{\min}(A) $ represents the smallest eigenvalue of matrix $ A $; $ \text{diag}(A) $ denotes the vector composed of the diagonal elements of matrix $ A $.
	
	For a vector $ a=\{a_i\}_{i\in[n]} $, its norms are defined as follows:
	\begin{equation*}
		\|a\|_1 = \sum_{i\in[n]}|a_i|, \quad \|a\|_2 = \sqrt{\sum_{i\in[n]}a_i^2}, \quad \|a\|_0 = |\{j:a_j\neq 0\}|.
	\end{equation*}
	
	$ a_E $ denotes a vector where $ a_{E_j}=a_j $ if $ j \in E $, and $ a_{E_j}=0 $ if $ j \notin E $. $ a_{(E)} $ indicates extracting the elements indexed by $ E $. $HT_{\lambda}(a)=a\mathbf{1}\{|a_j|\geq \lambda\}$ denotes $a$ threshold at $\lambda$. $\mathrm{S}(a)=\{j:a_j\neq0\}$ denotes the support set of vector $a$.
	
	For a random sequence $ x_n $, $ x_n\xrightarrow{P}0 $ means that $ x_n $ converges in probability to 0 as $ n \to \infty $.
	
	\subsection{Oracle robust transfer Lasso}
	Throughout this article, we interpret the corruptions as follows:
	\begin{equation}\label{model0}
		Y_i = \uX_i^\top \ubeta^{\ast} + e_i^{\ast} + \varepsilon_i, \quad i=1,\ldots,n_0,
	\end{equation}
	where $\varepsilon_i$ denotes natural noise, and $e_i^{\ast}$ represents the corruption terms. Let $k := \|e^{\ast}\|_0$ denote the number of corrupted labels. Here, we assume that the source datasets are "clean," represented by:
	\begin{equation}\label{model1}
		Y_i^{(\mathrm{S}_j)} = \uX_i^{(\mathrm{S}_j)\top} \ubeta^{(\mathrm{S}_j)} + \varepsilon_i^{(\mathrm{S}_j)}, \quad i=1,\ldots,n_j, \; j=1,\ldots,L.
	\end{equation}
	This assumption can be validated using an extended Lasso method developed by \cite{nguyen2012robust}:
	$$
	\left(\hat{\ubeta}^{Rlasso}, \hat{e}^{Rlasso}\right) = \argmin_{e, \beta } \left\{ \frac{1}{2n_0}\|\mathbb{Y}-\mathbb{X}\beta- \sqrt{n_0}e \|_2^2 + \lambda_{1} \|\beta\|_1 + \lambda_{2} \|e\|_1 \right\},
	$$
	where $\lambda_{1},\lambda_{2}$ are hyperparameters. If the source data is "clean", then $|\{j: |\hat{e}^{Rlasso}_j| \geq \tilde{C}\sqrt{\log(n)/n}\}|$ should be small, where $\tilde{C}$ denotes some constant.
	 Rewriting Equations \eqref{model0} and \eqref{model1} in matrix form yields:
	\begin{align*}
		\mathbb{Y} &= \mathbb{X}\ubeta^{\ast} + e^{\ast} + \epsilon, \\
		\mathbb{Y}^{(\mathrm{S}_j)} &= \mathbb{X}^{(\mathrm{S}_j)}\ubeta^{(\mathrm{S}_j)} + \epsilon^{(\mathrm{S}_j)}, \quad j=1,\ldots,L.
	\end{align*}
	
	In this section, we introduce an oracle transfer regression algorithm designed to leverage source datasets sharing structural similarities with the target dataset. This approach is particularly beneficial when prior knowledge indicates which source datasets can provide valuable insights into the structure of the target data.
	
	Denote
	$$
	\mathcal{A}_h = \left\{ 1 \leq j \leq L : \| \Delta^{(S_j)} \|_1 \leq h \right\},
	$$
	where $\Delta^{(S_j)}$ represents the domain shift: $\ubeta^{\ast} - \ubeta^{(S_j)}$. Here, $\mathcal{A}_h$ is a set that includes indices $j$ of the source datasets where the domain shift $\Delta^{(S_j)}$  has an $L_1$ norm less than or equal to $h$. This set helps identify which source datasets are sufficiently similar to the target dataset in terms of their parameters. In the oracle scenario, we possess prior knowledge enabling the selection of source datasets that belong to this set.
	
	Upon selecting the appropriate source datasets, we aggregate the Lasso estimator using an iterative distributed calculation approach \cite{wang2017efficient}, which eliminates the need for a debiasing procedure. Denote 
	\begin{align*}
		\mathcal{L}_j(\ubeta) &= \frac{1}{2n_j}\left\|\mathbb{Y}^{(\mathrm{S}_j)} - \mathbb{X}^{(\mathrm{S}_j)}\ubeta\right\|_2^2.
	\end{align*}The details of the iterative distributed calculation are provided in Algorithm \ref{alg:edsl}.
	
	\begin{algorithm}
		\setstretch{0.7}
		\caption{Efficient Distributed Sparse Learning (EDSL)}
		\label{alg:edsl}
		\begin{algorithmic}[1]
			\Require Source datasets $\{\mathbb{X}^{(\mathrm{S}_i)}, \mathbb{Y}^{(\mathrm{S}_i)}\}_{i \in \mathcal{A}}$ and selected index set $\mathcal{A}$.
			\Ensure Distributed Lasso estimator $\hat{\ubeta}^{D(\mathcal{A})}$.
			\State \textbf{Initialization:} Select an element $v \in \mathcal{A}$. Compute the Lasso estimator $\hat{\ubeta}^{(\mathrm{S}_v)}$ on $\{\mathbb{X}^{(\mathrm{S}_v)}, \mathbb{Y}^{(\mathrm{S}_v)}\}$.
			\For{$t = 0, 1, \ldots$}
			\For{$j = 2, 3, \ldots, m$}
			\If{Receive $\widehat{\boldsymbol{\beta}}_t$ from the master}
			\State Calculate the gradient $\nabla \mathcal{L}_j(\widehat{\boldsymbol{\beta}}_t)$ 
			\EndIf
			\EndFor
			\State Update $\widehat{\ubeta}_{t+1}$ as follows:
			$$
			\widehat{\ubeta}_{t+1} = \arg\min_{\beta} \left\{ \mathcal{L}_{v}(\beta) + \left\langle \frac{1}{|\hat{\mathcal{A}}_h|} \sum_{j \in \hat{\mathcal{A}}_h} \nabla \mathcal{L}_j(\widehat{\ubeta}^{(t)}) - \nabla \mathcal{L}_{v}(\widehat{\ubeta}^{(t)}), \beta \right\rangle + \lambda_{t+1} \|\beta\|_1 \right\},
			$$
			where 
			$$
			\lambda_t = c_{\lambda,1}\sqrt{\frac{\log p}{\sum_{i \in \mathcal{A}_h} n_i}} + \sqrt{\frac{\log p}{n}} \left(c_{\lambda,2} s \sqrt{\frac{\log p}{n}} \right)^{t},
			$$
			with constants $c_{\lambda,1}, c_{\lambda,2}$.
			\EndFor
	
		\end{algorithmic}
	\end{algorithm}

The core idea of robust transfer regression lies in utilizing source data to enhance prediction accuracy when dealing with corrupted target data. This approach addresses two critical aspects: domain shift and data corruption, both of which are modeled as parametric components. The methodology proceeds by sequentially estimating these parameters, followed by the reconstruction of the target data signal through the integration of the aggregated estimated signal from source data and the computed domain shift. Given selected source data index $\mathcal{A}$ and hyperparameters $( \lambda_{\Delta}, \lambda_{e})$, the reconstructed signal is 
\begin{equation}\label{rg}
	\hat{\ubeta}(\mathcal{A}, \lambda_{\Delta}, \lambda_{e}) = \hat{\ubeta}^{D(\mathcal{A})} + \hat{\Delta}^{\mathcal{A}}( \lambda_{\Delta}, \lambda_{e}),
\end{equation}
where
$$
\l(\hat{\Delta}^{\mathcal{A}}( \lambda_{\Delta}, \lambda_{e}), \hat{e}^{\mathcal{A}}\r) = \argmin_{e, \Delta } \left\{ \frac{1}{2n_0}\|\mathbb{Y}-\mathbb{X}(\hat{\ubeta}^{D(\mathcal{A})}+\Delta)- \sqrt{n_0}e \|_2^2 + \lambda_{\Delta} \|\Delta\|_1 + \lambda_{e} \|e\|_1 \right\}.
$$

The selection of hyperparameters is performed adaptively. If the estimated fraction of corruptions is small, hyperparameters are selected using a cross-validation method on the target data. Conversely, if the estimated fraction of corruptions is large, hyperparameters are chosen based on selected validation data. The adaptive hyper-parameter selection algorithm is detailed in Algorithm \ref{alg:AHT}, where $c_h$ denotes the threshold for the fraction of corruptions, and $\tilde{c}$ is used to control the domain shift, ensuring that the reconstructed signal does not overfit the validation data.

To address potential non-sparsity in the aggregated Lasso estimator, a thresholding mechanism is applied during the final estimation phase. This step ensures the exclusion of extraneous variables that lie outside the support of the true signal. Thresholding for noise reduction is a well-documented practice in signal processing; for example, Donoho (1994) introduced "universal thresholds" set at $\sqrt{2\log n}$ for wavelet shrinkage.
The complete algorithmic implementation of this methodology is formally presented in Algorithm \ref{alg:oracle-trans-lasso}. 
\begin{algorithm}[H]
	\setstretch{0.9}
	\caption{Oracle Robust Transfer Lasso}
	\label{alg:oracle-trans-lasso}
	\begin{algorithmic}[1]
		\Require Target data $(\mathbb{X}, \mathbb{Y})$ and source datasets $\{\mathbb{X}^{(\mathrm{S}_i)}, \mathbb{Y}^{(\mathrm{S}_i)}\}_{i \in \mathcal{A}_h}$,threshold $c_h$, $\tilde{c}$, $\gamma_1$, fold number $k_0$
		\Ensure $\hat{\ubeta}^{\mathrm{oracle}}$, $HT_{\gamma_1}(\hat{\ubeta}^{\mathrm{oracle}})$
		\State \textbf{Aggregate Estimation on Source Data:} Compute the distributed Lasso estimator 
		$$
		\hat{\ubeta}^{D(\mathcal{A})} \leftarrow \mathrm{EDSL}\left(\{\mathbb{X}^{(\mathrm{S}_i)}, \mathbb{Y}^{(\mathrm{S}_i)}\}_{i \in \mathcal{A}_h}, \mathcal{A}_h\right).
		$$
		\State \textbf{Select hyperparameters by algorithm \ref{alg:AHT}: select $v$ with smallest domain shift,
			$$(\lambda_{\Delta},\lambda_e) \leftarrow  \mathrm{AHT}(c_h, \tilde{c}, k_0, \mathcal{A}_h,v).
			$$
			} 
		\State \textbf{Transfer Regression:}
			$$\hat{\ubeta}^{\mathrm{oracle}} \leftarrow \hat{\ubeta}(\mathcal{A}_h, \lambda_{\Delta}, \lambda_{e}).
		$$
		\State \textbf{Hard Thresholding:} Apply hard thresholding to obtain
		$$
		HT_{\gamma_1}(\hat{\ubeta}^{\mathrm{oracle}}) \leftarrow \hat{\ubeta}^{\mathrm{oracle}}.
		$$
	\end{algorithmic}
\end{algorithm}
\begin{algorithm}[H]
	\setstretch{0.9}
	\caption{Adaptive Hyper-parameter Tuning(AHT)}
	\label{alg:AHT}
	\begin{algorithmic}[1]
		\Require 
		\Statex Target data $(\mathbb{X}, \mathbb{Y})$ and source data $\{\mathbb{X}^{(\mathrm{S}_i)}, \mathbb{Y}^{(\mathrm{S}_i)}\}_{i \in \mathcal{A}}$, threshold $c_h$, $\tilde{c}$, fold number $k_0$, selected source data index $\mathcal{A}$, validation data index $v$.
		\Ensure 
		\Statex Optimal hyperparameters $(\lambda_{\delta},\lambda_e)$.
		\If{$|\{j:\hat{r}^{Rlasso}>\tilde{c}\log(n)/n\}| > c_h$}
		\State Choose the parameters $(\lambda_{\delta},\lambda_e)$ that minimize $\mathcal{L}_v(\hat{\ubeta}(\mathcal{A}, \lambda_{\Delta}, \lambda_{e}))+1000*\mathbb{I}\l(\| \hat{\ubeta}^{D(\mathcal{A})} - \hat{\ubeta}(\mathcal{A}, \lambda_{\Delta}, \lambda_{e})\|_1 > \tilde{c}\r)$.
		\Else
		\State Select $(\lambda_{\delta},\lambda_e)$ using $k_0$-fold cross-validation on target data.
		\EndIf
	\end{algorithmic}
\end{algorithm}
When the covariates follow a standard Gaussian distribution and no prior information about $\gamma_1$ is available, a feasible choice for threshold $\gamma_1$ of Algorithm~\ref{alg:oracle-trans-lasso} is $t_n$, as provided by Lemma~\ref{Lemma:3}:
\begin{align}\label{tn}
	\begin{split}
		t_n = (1+o(1))\bigg(9\hat{\sigma}_{\epsilon}\sqrt{\frac{\log p}{n_0}} &+ 12\hat{\sigma}_{\epsilon}\lambda_{t} + 3\lambda_{t}+ 4\hat{\sigma}_{\epsilon}\lambda_{\Delta} + \lambda_{\Delta}\bigg),
	\end{split}
\end{align}
where $\hat{\sigma}_{\epsilon}$ denotes a consistent estimator of $\sigma_{\epsilon}$.

For simplicity in the theoretical analysis and technical proofs, we assume that 
\[
n_0 = n_j = n, \quad j = 1, \dots, L.
\]
Before delving into the theoretical guarantees of the proposed algorithm, we first introduce several definitions that will be employed throughout the analysis.
		Denote
	\begin{itemize}
		\item $\bar{\ubeta}^{\mathcal{A}_h}=\sum_{j\in\mathcal{A}_h}\ubeta^{(\mathrm{S}_j)}/|\mathcal{A}_h|$, $\Delta^{\mathcal{A}_h}=\ubeta^*-\bar{\ubeta}^{\mathcal{A}_h}$
		\item $\bar{T}_h=\mathrm{S}(\bar{\ubeta}^{\mathcal{A}_h})$, $T=\mathrm{S}(\Delta^{\mathcal{A}_h})$
		\item 
		$s_{\Delta}=\|\Delta^{\mathcal{A}_h}\|_0$, $\bar{s}=\|\bar{\ubeta}^{\mathcal{A}_h}\|_0$
		\item $C_{\min}=\lambda_{\min}\left(\mathbb{X}_{T}^{\top} \mathbb{X}_{T}/n\right)$
		\item $\bar{C}_{\min}=\lambda_{\min}\left(\mathbb{X}^{\mathrm{S}_v\top}_{T} \mathbb{X}^{\mathrm{S}_v}_{T}/n\right)$
	\end{itemize}
	\begin{definition}[Extended Restricted Eigenvalue condition]\label{ere}
		A matrix $A$ satisfies the extended Restricted Eigenvalue (RE) condition if for  any sequences $a_n$:
		\begin{equation}
			\frac{1}{\sqrt{n}} \left\| A z + \sqrt{n}v \right\|_2 \geq \kappa_l (\left\| z \right\|_2 + \left\| v \right\|_2) + Ca_n\sqrt{\frac{\log p}{n}},
		\end{equation}
		for all vectors $ (z, v)$ and  $ \lambda>0$ satisfy
		\begin{equation}
			\left\| z_{T_0^c} \right\|_1 + \lambda \left\| v_{E^c} \right\|_1 \leq 3 \left\| z_{T_0} \right\|_1 + 3 \lambda \left\| v_E \right\|_1 +a_n,
		\end{equation}
		for any $T_0\subset [p]$, where $C$ is a universal constant, $E=\mathrm{S}(e^*)$.
	\end{definition}
		\begin{definition}[Mutual incoherence condition]\label{MI}
		A $n \times p$  matrix $A$ satisfies  mutual incoherence condition if there exists some $\gamma\in(0,1)$ such that,
	$$
	\max_{j \in T^c} \| (A_T^{\top} A_T)^{-1} A_T^{\top} \mathbf{a}_j \|_1 \leq 1 - \gamma, \quad 	\max_{j \in \bar{T}_h^c} \| (A_{\bar{T}_h}^{\top} A_{\bar{T}_h})^{-1} A_{\bar{T}_h}^{\top} \mathbf{a}_j \|_1 \leq 1 - \gamma,
	$$
		where $\mathbf{a}_j$ is the $j$-th column of $A$.
	\end{definition}
	\begin{definition}[Normalized columns]
	We assume that a $n \times p$ matrix $A$ has normalized columns, satisfying
	\begin{equation}
		\max_{j \in [p]} \frac{\|\mathbf{a}_j\|_2}{\sqrt{n}} \leq K_{\text{clm}},
	\end{equation}
	for some constant $K_{\text{clm}}$, where $\mathbf{a}_j$ is the $j$-th column of $A$.
   \end{definition}

	The following conditions are required for the asymptotic guarantees:
	\begin{enumerate}[{(C}1{)}]
		\item \textbf{Assumptions on covariates}: The $p$-dimensional covariates of both the target and source data are zero-mean sub-Gaussian random vectors sharing a common absolutely continuous distribution. The population covariance matrix $\Sigma$ has its smallest eigenvalue bounded away from zero and its largest eigenvalue bounded from above. These covariates have normalized columns and  the mean of each column is zero. Additionally, the design matrix $\mathbb{X}$ satisfies both the extended restricted eigenvalue condition and the mutual incoherence condition.
		
		\item \textbf{Assumptions on noise}: The noises $\epsilon_i, \epsilon_i^{(\mathrm{S}_j)}, i=1,...,n, j=1,...,L$ are zero-mean Gaussian variables with variance $\sigma_{\epsilon}$.
		
		\item \textbf{Assumptions on sample size}: As $n\to \infty$,
	
		$$
			|\mathcal{A}_h|h\sqrt{\frac{\log p\vee n}{n}}+ k\frac{\log(p\vee n)}{n}\to 0.
		$$
		
		\item \textbf{Assumptions on signal}:
		\begin{align*}
			\min_{j : \ubeta_j^* \neq 0} |\ubeta_j^*| \geq \gamma_1>0, \quad \min_{j : \ubeta_j^{(\mathrm{S}_j)} \neq 0} |\ubeta_j^{(\mathrm{S}_j)}|\geq \gamma_1>0
		\end{align*}
		for some constant $\gamma_1>0$.
	\end{enumerate}
	All these regularity assumptions are sufficiently general to apply to many real-world scenarios.
	
	For Condition C1, the extended Restricted Eigenvalue (RE) conditions can be satisfied in the case of Gaussian design; see \cite{nguyen2012robust} and \cite{raskutti2010restricted}. In the context of compressed sensing, the design matrix can be selected by the user. For other domains, a two-sample test technique developed in \cite{gretton2012kernel} can be employed to verify the distribution difference of covariates. Given two covariate datasets $\mathbb{X}$ and $\mathbb{X}^{(\mathrm{S}_i)}$, the Maximum Mean Discrepancy (MMD) proposed in \cite{gretton2012kernel} is defined as:
	\begin{equation}
		\mathrm{MMD} \left[ \mathcal{F}, \mathbb{X}, \mathbb{X}^{(\mathrm{S}_i)} \right] := \sup_{f \in \mathcal{F}} \left( \frac{1}{n} \sum_{i=1}^{n} f(\mathbb{X}_i) - \frac{1}{n} \sum_{i=1}^{n} f(\mathbb{X}^{(\mathrm{S}_i)}_i) \right),
	\end{equation}
	where $\mathcal{F}$ represents the unit ball in a reproducing kernel Hilbert space. If the marginal distributions of the covariates are similar, then the MMD should be small. The mutual incoherence condition can be satisfied if the columns of the covariates are nearly orthogonal.
	
	Condition C2 holds approximately if the distribution of natural noise is symmetric, not heavy-tailed, and does not contain outliers. Condition C3 specifies the sample size requirement for signal recovery. While the true signal may not be sparse, Condition C4 involves a sparse approximation of the true signal. This assumption is not unrealistic; for example, in many image processing applications, the gray levels of pixels belonging to an object are significantly higher than those of background pixels \cite{sezgin2004survey}.

	In the following lemma, we provide the $l_{\infty}$ bound for $\hat{\ubeta}^{\mathrm{oracle}}-\ubeta^*$ and $\hat{e}^{\mathcal{A}_h}-e^*$.
	
	\begin{lemma}\label{Lemma:3}
		Assume conditions (C1)-(C4) hold. Further suppose that 
		$$
		\lambda_{\Delta} = \frac{2\|\mathbb{X}^{\top}\epsilon\|_{\infty}}{n}, \quad \lambda_{e} = \frac{2\|\epsilon\|_{\infty}}{\sqrt{n}},
		$$
		then with probability approaching 1 as $n \to \infty$,
	\begin{align}\label{lem3:r1}
			\setstretch{0.9}
		\begin{split}
			&\|\hat{\ubeta}^{\mathrm{oracle}}-\ubeta^*\|_{\infty} \leq \mathcal{T}_1 +\mathcal{T}_2+\mathcal{T}_3 ,
		\end{split}
	\end{align}
		where 
		\begin{align*}
		\mathcal{T}_1=&9\sigma_{\epsilon}K_{clm}\sqrt{\frac{\log p}{n}}+12\frac{1}{\sqrt{\bar{C}_{\min}}}\sigma_{\epsilon}\lambda_{t}+ 3\left\|\left(\frac{\mathbb{X}_{\bar{T}_h}^{\mathrm{S}_v\top} \mathbb{X}_{\bar{T}_h}^{\mathrm{S}_v}}{n}\right)^{-1}\mathrm{sign}(\bar{\ubeta}^{(t)}_{(\bar{T}_h)})\right\|_{\infty}\lambda_{t}\\
		\mathcal{T}_2=&4\frac{1}{\sqrt{C_{\min}}}\sigma_{\epsilon}\lambda_{\Delta} + \left\|\left(\frac{\mathbb{X}_{T}^\top \mathbb{X}_{T}}{n}\right)^{-1}\mathrm{sign}(\Delta^{\mathcal{A}_h}_{(T)})\right\|_{\infty}\lambda_{\Delta} \\
		\mathcal{T}_3=& \left\|\left(\frac{\mathbb{X}_{T}^{\top} \mathbb{X}_{T}}{n}\right)^{-1}\right\|_{L_1}\frac{\|\Sigma\|_{\infty}\sqrt{\log(s_{\Delta}n)}}{n} \times \\
		&\qquad \left[4\max\left(\sqrt{s_{\Delta}}\lambda_{\Delta}/\lambda_e,\sqrt{k}\right) \left(\sqrt{\frac{s_{\Delta}\log p}{n}}+\sqrt{\frac{k\log n}{n}}\right)\right. \\
		&\qquad \left.+ O\l(\frac{\bar{s}\log p}{|\mathcal{A}_h|n}\r)\right].
		\end{align*}
		 If additional suppose that
	  $
	  \|\Omega_{TT}\|_{L_1}=O(1),
	  $
	  where $\Omega_{TT}$ is e the inverse of $\Sigma_{TT}$, then
		
		\begin{align*}
			\|\hat{e}^{\mathcal{A}_h}-e^*\|_{\infty}=O_P\left( \sqrt{\frac{\log n}{n}}\right).
		\end{align*}
	\end{lemma}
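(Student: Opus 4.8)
<br>

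The plan is to establish the $\ell_\infty$ bound for $\hat{\ubeta}^{\mathrm{oracle}} - \ubeta^*$ by exploiting the decomposition in \eqref{rg}, namely $\hat{\ubeta}^{\mathrm{oracle}} = \hat{\ubeta}^{D(\mathcal{A})} + \hat{\Delta}^{\mathcal{A}}$, and then writing the total error as a telescoping sum against the true parameter. Since $\ubeta^* = \bar{\ubeta}^{\mathcal{A}_h} + \Delta^{\mathcal{A}_h}$, I would decompose
\begin{align*}
	\hat{\ubeta}^{\mathrm{oracle}} - \ubeta^* = \underbrace{\left(\hat{\ubeta}^{D(\mathcal{A})} - \bar{\ubeta}^{\mathcal{A}_h}\right)}_{\text{source aggregation error}} + \underbrace{\left(\hat{\Delta}^{\mathcal{A}} - \Delta^{\mathcal{A}_h}\right)}_{\text{domain-shift / corruption error}},
\end{align*}
which naturally suggests that $\mathcal{T}_1$ captures the distributed-Lasso aggregation error on the source data, while $\mathcal{T}_2$ and $\mathcal{T}_3$ arise from the joint estimation of $\Delta$ and $e$ in the robust regression step. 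The first term is controlled by the EDSL convergence rate governed by $\lambda_t$ from Algorithm~\ref{alg:edsl}; the KKT/primal-dual witness construction on the source model, using the mutual incoherence condition (Definition~\ref{MI}) over $\bar{T}_h$ together with $\bar{C}_{\min}$, yields the three summands of $\mathcal{T}_1$.

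For the second block, I would set up the primal-dual witness argument for the joint optimization defining $(\hat{\Delta}^{\mathcal{A}}, \hat{e}^{\mathcal{A}})$. Treating $\mathbb{X}\hat{\ubeta}^{D(\mathcal{A})}$ as a known offset, the effective response becomes $\mathbb{Y} - \mathbb{X}\hat{\ubeta}^{D(\mathcal{A})} = \mathbb{X}\Delta^{\mathcal{A}_h} + e^* + \epsilon + \mathbb{X}(\bar{\ubeta}^{\mathcal{A}_h} - \hat{\ubeta}^{D(\mathcal{A})})$, so the objective is an extended-Lasso problem in $(\Delta, e)$ with an additional perturbation coming from the source aggregation residual. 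The stationarity conditions restricted to the support pairs $(T, E)$ give a linear system whose inverse involves $(\mathbb{X}_T^\top \mathbb{X}_T / n)^{-1}$; applying the triangle inequality and the dual-feasibility scaling $\lambda_\Delta = 2\|\mathbb{X}^\top\epsilon\|_\infty/n$, $\lambda_e = 2\|\epsilon\|_\infty/\sqrt{n}$ produces the $C_{\min}$-scaled noise term and the explicit $\mathrm{sign}(\Delta^{\mathcal{A}_h}_{(T)})$ term of $\mathcal{T}_2$. The coupling term $\mathcal{T}_3$ requires bounding $\|(\mathbb{X}_T^\top\mathbb{X}_T/n)^{-1}\|_{L_1}$ times the cross-interaction between the corruption estimate and the design, where the factor $\sqrt{\log(s_\Delta n)}$ and the $\max(\sqrt{s_\Delta}\lambda_\Delta/\lambda_e, \sqrt{k})$ balance reflect the interplay of the two penalties; the $O(\bar{s}\log p/(|\mathcal{A}_h| n))$ piece is precisely the propagated source aggregation error re-expressed through the extended RE condition (Definition~\ref{ere}).

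I would verify dual feasibility — that the off-support components of the subgradient are strictly bounded by $1$ — using the mutual incoherence condition on $T$ and $E^c$, controlling the stochastic terms $\|\mathbb{X}^\top\epsilon\|_\infty$ and $\|\epsilon\|_\infty$ by sub-Gaussian/Gaussian maximal inequalities under (C1)–(C2), which yield the $\sqrt{\log p / n}$ and $\sqrt{\log n / n}$ rates and justify the stated choices of $\lambda_\Delta$ and $\lambda_e$. For the final claim on $\|\hat{e}^{\mathcal{A}_h} - e^*\|_\infty$, the additional assumption $\|\Omega_{TT}\|_{L_1} = O(1)$ lets me pass from the population precision matrix to the sample Gram inverse $(\mathbb{X}_T^\top\mathbb{X}_T/n)^{-1}$ via a concentration argument, so the corruption error inherits the $\sqrt{\log n / n}$ rate directly from the $\|\epsilon\|_\infty$-driven term without the design-inversion blow-up.

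The main obstacle I expect is handling the coupling in $\mathcal{T}_3$ rigorously: the joint estimation means the errors in $\hat{\Delta}$ and $\hat{e}$ are not independent, so I must carefully track how the extended RE condition simultaneously constrains $\|z_{T_0^c}\|_1 + \lambda\|v_{E^c}\|_1$ against $\|z_{T_0}\|_1 + \lambda\|v_E\|_1$ and show that the cross-term does not dominate. Ensuring the propagated source error $O(\bar{s}\log p/(|\mathcal{A}_h| n))$ is genuinely lower-order — which is exactly what the sample-size condition (C3) guarantees — and that the primal-dual witness certificate remains valid after introducing the data-dependent offset $\hat{\ubeta}^{D(\mathcal{A})}$ (whose randomness correlates with $\epsilon$) will require the most delicate argument.
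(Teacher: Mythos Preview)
Your proposal is correct and follows essentially the same route as the paper's proof: the telescoping decomposition, a primal--dual witness on the EDSL iteration (the paper's Lemma~\ref{Lemma:5}) to produce $\mathcal{T}_1$, a second primal--dual witness on the joint $(\Delta,e)$ KKT system for $\mathcal{T}_2$, and a preliminary $\ell_1$ bound on $\hat e^{\mathcal{A}_h}-e^*$ obtained from the basic inequality and the extended RE condition (the paper's Lemma~\ref{Lemma:2}) to feed $\mathcal{T}_3$. The only caveat is that your last worry is unnecessary --- $\hat{\ubeta}^{D(\mathcal{A})}$ is built solely from source data and is therefore independent of the target noise $\epsilon$, so no delicate conditioning argument is needed for the data-dependent offset.
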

	As a consequence of Lemma \ref{Lemma:3}, we may be able to establish the support recovery property of $\hat{\ubeta}^{\mathrm{oracle}}$, as the following proposition states.
	
\begin{proposition}\label{prop0}
	Under conditions of Lemma \ref{Lemma:3},
	it follows that
	\begin{equation*}
		\|\hat{\ubeta}^{\mathrm{oracle}} - \ubeta^*\|_2 + \|\hat{e}^{\mathcal{A}_h} - e^*\|_2 = O_P\left( \sqrt{\frac{\bar{s}\log p}{|\mathcal{A}_h|n}} + h \wedge \sqrt{\frac{s_{\Delta}\log p}{n}} + \sqrt{\frac{k\log n}{n}}\right),
	\end{equation*}
	where $s_{\Delta} = \|\bar{\Delta}^{\mathcal{A}_h}\|_0$ and $\bar{s} = \|\bar{\ubeta}^{\mathcal{A}_h}\|_0$. Furthermore,
	\begin{equation*}
		P\left(\mathrm{sign}\left(HT_{\gamma_1}\left(\hat{\ubeta}^{\mathrm{oracle}}\right)\right) = \mathrm{sign}(\ubeta^*)\right) \to 1.
	\end{equation*}
	If the covariates follow a standard Gaussian design, then
	\begin{equation*}
	P\left(\mathrm{sign}\left(HT_{t_n}\left(\hat{\ubeta}^{\mathrm{oracle}}\right)\right) = \mathrm{sign}(\ubeta^*)\right) \to 1.
    \end{equation*}
	where $t_n$ is defined as in (\ref{tn}).
\end{proposition}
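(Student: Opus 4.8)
The plan is to establish the three displayed conclusions separately: the $\ell_2$ rate by a restricted-eigenvalue argument on a natural decomposition, and the two sign-consistency statements directly from the $\ell_\infty$ bound of Lemma~\ref{Lemma:3}. Writing $\Delta' := \ubeta^* - \hat{\ubeta}^{D(\mathcal{A})}$, note from \eqref{rg} and $\ubeta^* = \bar{\ubeta}^{\mathcal{A}_h} + \Delta^{\mathcal{A}_h}$ that $\hat{\ubeta}^{\mathrm{oracle}} - \ubeta^* = \hat{\Delta}^{\mathcal{A}} - \Delta'$, so the oracle error is exactly the error of the robust subproblem in \eqref{rg} in recovering the effective signal $\Delta'$. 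I would split $\hat{\ubeta}^{\mathrm{oracle}} - \ubeta^* = (\hat{\ubeta}^{D(\mathcal{A})} - \bar{\ubeta}^{\mathcal{A}_h}) + (\hat{\Delta}^{\mathcal{A}} - \Delta^{\mathcal{A}_h})$ and handle the pieces in turn. For the first piece, once the iteration count in Algorithm~\ref{alg:edsl} is large the schedule $\lambda_t$ is dominated by its leading term $c_{\lambda,1}\sqrt{\log p/(|\mathcal{A}_h| n)}$, and the standard Lasso oracle inequality over the aggregated sample of effective size $|\mathcal{A}_h| n$ and sparsity $\bar{s}$ gives $\|\hat{\ubeta}^{D(\mathcal{A})} - \bar{\ubeta}^{\mathcal{A}_h}\|_2 = O_P(\sqrt{\bar{s}\log p/(|\mathcal{A}_h| n)})$, the first term of the rate.

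For the second piece I would write the basic inequality for the minimizer $(\hat{\Delta}^{\mathcal{A}}, \hat{e}^{\mathcal{A}})$ of the objective in \eqref{rg}, exploiting $\mathbb{Y} - \mathbb{X}\hat{\ubeta}^{D(\mathcal{A})} = \mathbb{X}\Delta' + e^* + \epsilon$. Bounding the noise cross-terms by Hölder's inequality with the prescribed $\lambda_{\Delta} = 2\|\mathbb{X}^\top\epsilon\|_\infty/n = O_P(\sqrt{\log p/n})$ and $\lambda_e = 2\|\epsilon\|_\infty/\sqrt{n} = O_P(\sqrt{\log n/n})$ confines the error pair to the cone of Definition~\ref{ere}; applying the extended RE condition to $(\hat{\Delta}^{\mathcal{A}} - \Delta',\, \hat{e}^{\mathcal{A}} - e^*)$ then delivers a joint bound of order $\sqrt{s_{\Delta}}\lambda_{\Delta} + \sqrt{k}\lambda_e \asymp \sqrt{s_{\Delta}\log p/n} + \sqrt{k\log n/n}$, which also controls $\|\hat{e}^{\mathcal{A}_h} - e^*\|_2$. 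Comparing this sparse estimate with the coarser bound obtained from the budget $\|\Delta^{\mathcal{A}_h}\|_1 \le h$ produces the $h \wedge \sqrt{s_{\Delta}\log p/n}$ factor, and summing the three contributions gives the stated rate.

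For the sign-consistency claims I would first check that each $\mathcal{T}_i$ in Lemma~\ref{Lemma:3} is $o_P(1)$: condition (C3) sends $\lambda_t$, $\lambda_{\Delta}$ and the higher-order $\mathcal{T}_3$ to zero while (C1) keeps $C_{\min}, \bar{C}_{\min}, K_{clm}$ and the mutual-incoherence factors bounded, so $\|\hat{\ubeta}^{\mathrm{oracle}} - \ubeta^*\|_\infty \to 0$. Since by (C4) each nonzero $\ubeta_j^*$ has magnitude at least the fixed $\gamma_1$, the off-support coordinates of $\hat{\ubeta}^{\mathrm{oracle}}$ are of size at most $\mathcal{T}_1+\mathcal{T}_2+\mathcal{T}_3 \to 0$ and are annihilated by $HT_{\gamma_1}$, whereas the on-support coordinates retain magnitude close to $|\ubeta_j^*|\ge\gamma_1$ and the correct sign and are preserved, giving the first display with probability tending to one. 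For the data-driven threshold I would specialize to $\Sigma = I$, under which $C_{\min}, \bar{C}_{\min}, K_{clm} \to 1$ and $(\mathbb{X}_T^\top\mathbb{X}_T/n)^{-1}$, $(\mathbb{X}_{\bar{T}_h}^{\mathrm{S}_v\top}\mathbb{X}_{\bar{T}_h}^{\mathrm{S}_v}/n)^{-1}$ concentrate at the identity, so that (with $\hat{\sigma}_\epsilon \to \sigma_\epsilon$) $t_n$ of \eqref{tn} is, up to its $(1+o(1))$ factor, an asymptotic upper bound for $\mathcal{T}_1+\mathcal{T}_2$ dominating $\mathcal{T}_3$; as $t_n \to 0$ sits strictly below the signal floor $\gamma_1$, thresholding at $t_n$ separates noise from signal and yields the final display.

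The step I expect to be the main obstacle is the $\ell_2$ analysis of the robust subproblem: one must install the cone condition of Definition~\ref{ere} for the coupled pair $(\Delta, e)$ while carrying the aggregation bias $\mathbb{X}(\bar{\ubeta}^{\mathcal{A}_h} - \hat{\ubeta}^{D(\mathcal{A})})$ through the basic inequality, and one must justify passing from the $\ell_\infty$ corruption bound to the $\sqrt{k}$-scaled $\ell_2$ bound, which requires $\mathrm{S}(\hat{e}^{\mathcal{A}})$ not to greatly exceed $E = \mathrm{S}(e^*)$. A secondary delicate point is the calibration at the threshold $\gamma_1$: since the on-support estimates only satisfy $|\hat{\ubeta}^{\mathrm{oracle}}_j| \ge \gamma_1 - o_P(1)$, strict recovery needs the minimal signal to lie strictly above $\gamma_1$ rather than merely at it, in contrast to the vanishing $t_n$ where the separation is automatic.
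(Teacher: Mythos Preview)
Your plan is correct and follows essentially the same route as the paper: split $\hat{\ubeta}^{\mathrm{oracle}}-\ubeta^*$ into the aggregation error $\hat{\ubeta}^{D(\mathcal A_h)}-\bar\ubeta^{\mathcal A_h}$ (controlled by the distributed-Lasso rate, which is Lemma~\ref{lem:dl}/Lemma~\ref{Lemma:1} here) and the robust-subproblem error $\hat\Delta^{\mathcal A_h}-\Delta^{\mathcal A_h}$, then run a basic-inequality/extended-RE argument coupled with the $\ell_\infty$ bound of Lemma~\ref{Lemma:3} for sign consistency. Two remarks on the points you flagged as obstacles, since the paper resolves them in a specific way you may not have in mind.

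First, the $\sqrt{k}$-scaling: you do not need to control $\mathrm S(\hat e^{\mathcal A_h})$ at all. The paper simply feeds the $\ell_\infty$ bound of Lemma~\ref{Lemma:3} back into the basic inequality, replacing the term $\bigl(\tfrac{1}{\sqrt n}\|\epsilon\|_\infty+\lambda_e\bigr)\|v_E\|_1$ by $\bigl(\tfrac{1}{\sqrt n}\|\epsilon\|_\infty+\lambda_e\bigr)\,k\,\|v\|_\infty = O_P\!\bigl(k\log n/n\bigr)$, since $|E|=k$. This converts the $\lambda_e\|v_E\|_1$ contribution into the desired $k\log n/n$ term directly, and the extended RE (with the slack $a_n$ in Definition~\ref{ere} carrying the aggregation bias $\tfrac{1}{n}\|\mathbb X(\bar\ubeta^{\mathcal A_h}-\hat\ubeta^{\mathcal A_h})\|_2^2$) then gives the joint $\ell_2$ bound.

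Second, the $h$-factor: the paper obtains it not from a generic ``$\|\Delta^{\mathcal A_h}\|_1\le h$ budget'' argument but from a two-case split according to whether $\tfrac{1}{n}\|\mathbb X(\bar\ubeta^{\mathcal A_h}-\hat\ubeta^{\mathcal A_h})\|_2^2$ is below or above $\lambda_\Delta\|\Delta^{\mathcal A_h}\|_1$. In the first case the basic inequality collapses to $-\tfrac12\lambda_\Delta\|z\|_1+3\lambda_\Delta\|\Delta^{\mathcal A_h}\|_1+O_P(k\log n/n)$, giving the $h$-type rate; in the second case Lemma~\ref{Lemma:1} gives the $\bar s\log p/(|\mathcal A_h|n)$ rate. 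Taking the minimum of this with the sparse rate of Lemma~\ref{Lemma:2} yields the displayed $h\wedge\sqrt{s_\Delta\log p/n}$. Your description of ``comparing with the coarser bound'' is the right instinct but this case split is how it is actually executed. Your observation about the borderline case $|\ubeta_j^*|=\gamma_1$ in the thresholding step is well taken; the paper simply invokes Lemma~\ref{Lemma:3} and (C4) and does not discuss it, so your proposal matches the paper's level of rigor there.
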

	
	From the results of Lemma \ref{Lemma:3} and Proposition \ref{prop0}, we observe that the estimation error comprises three components: the first is associated with the source data; the second involves the domain shift and its estimation error; and the last term represents the estimation error of the corrupted vector. Theoretical results indicate that if the sparsity pattern in the source data does not closely resemble that of the target data, $s_{\Delta}$ and $h$ will be significantly larger, potentially leading to negative transfer. In Section 2.3, we will offer an algorithm to make source data selection to ensure small $s_{\Delta}$ and $h$.
	
	\subsection{Robust transfer Lasso}
	In many scenarios, knowledge regarding which source data shares a similar structure with the target data is unavailable. Consequently, source data selection becomes a critical step in transfer regression. Under Condition C4, $s_{\Delta}$ can be regulated by $h$. Therefore, it is essential to ensure that the domain shift $\|\Delta^{(j)}\|_1 := \|\ubeta^* - \ubeta^{(\mathrm{S}_j)}\|_1$ for each selected source dataset is minimal.
	To estimate $\|\Delta^{(j)}\|_1$, for $j=1,\ldots,L$, a straightforward approach involves computing $\|\hat{\ubeta}^{(\mathrm{S}_j)} - \hat{\ubeta}^{\mathrm{(rlasso)}}\|_1$, where $\hat{\ubeta}^{(\mathrm{S}_j)}$ are the Lasso estimators of $\ubeta^{(\mathrm{S}_j)}$. However, this method may suffer from significant bias.
	
	To mitigate this issue, we first estimate $\ubeta^{(j)} := (\ubeta^{(\mathrm{S}_i)} + \ubeta^*)/2$ by combining two datasets:
	\begin{equation}\label{v1}
		(\hat{\ubeta}^{(j)},\hat{e}^{(j)}) = \argmin_{ e , \beta} \left\{ \frac{1}{2(n_0+n_j)}\left\|\mathbb{Y}^{(j)}-\mathbb{X}^{(j)}\beta -\sqrt{n_0+n_j}e\right\|_2^2 + \lambda_{\beta}^{(j)} \|\beta\|_1+\lambda_{e}^{(j)} \|e\|_1 \right\}.
	\end{equation}
	where 
	$$
	\mathbb{Y}^{(j)} = 
	\begin{bmatrix}
		\mathbb{Y}^{(\mathrm{S}_i)} \\
		\mathbb{Y}
	\end{bmatrix}
	,\quad
	\mathbb{X}^{(j)} = 
	\begin{bmatrix}
		\mathbb{X}^{(\mathrm{S}_i)} \\
		\mathbb{X}
	\end{bmatrix}.
	$$
	We then estimate $\|\Delta^{(j)}\|_1$ by $\hat{h}_j := 2\left\|\hat{\ubeta}^{(j)} - \hat{\ubeta}^{(\mathrm{S}_j)}\right\|_1$. By merging two datasets, the estimation bias can be reduced due to the larger sample size, compared to $\|\hat{\ubeta}^{(\mathrm{S}_j)} - \hat{\ubeta}^{(\mathrm{rlasso})}\|_1$. This allows us to propose the Source Data Selection (SDS) algorithm \ref{SDS}.
	
	The performance of the robust transfer Lasso algorithm is sensitive to hyperparameters. To address this, we select a validation dataset during the source data selection procedure for tuning these parameters. The detailed robust transfer Lasso (RTL) algorithm is outlined in Algorithm \ref{alg:robust-trans-lasso}.
\begin{algorithm}[H]
	\setstretch{0.9}
	\caption{Source Data Selection (SDS)}
	\label{SDS}
	\begin{algorithmic}[1]
		\Require 
		\Statex Target data $(\mathbb{X}, \mathbb{Y})$ and source data $\{\mathbb{X}^{(\mathrm{S}_i)}, \mathbb{Y}^{(\mathrm{S}_i)}\}_{i \in \mathcal{A}}$, $h$, $A$
		\Ensure 
		\Statex $\hat{\mathcal{A}}_h$, validation dataset $\hat{v}$
		
		\State \textbf{Select Source Data with Small Domain Shift}: 
		$$
		\hat{\mathcal{A}}_h \leftarrow \left\{ j \in \mathcal{A} \mid \hat{h}_j \text{ is among the smallest } A \text{ values} \right\} \cap \{ j \in \mathcal{A} : \hat{h}_j \leq h \}.
		$$
		
		\State \textbf{Select Validation Dataset}: 
		$$
		\hat{v} \leftarrow \arg\min_{1 \leq j \leq L} \hat{h}_j.
		$$
	\end{algorithmic}
\end{algorithm}

\begin{algorithm}[H]
	\setstretch{0.8}
	\caption{Robust Transfer Lasso Algorithm}
	\label{alg:robust-trans-lasso}
	\begin{algorithmic}[1]
		\Require 
		\Statex  $(\mathbb{X}, \mathbb{Y})$, $\{\mathbb{X}^{(\mathrm{S}_i)},  \mathbb{Y}^{(\mathrm{S}_i)}\}_{i \in [L]}$, threshold $c_h$, $\tilde{c}$, $\gamma_1$, fold number $k_0$
		\Ensure 
		\Statex $\hat{\ubeta}^{\hat{\mathcal{A}}_h},\quad HT(\hat{\ubeta}_{\gamma_1}^{\hat{\mathcal{A}}_h})$
		
		\State \textbf{Source Data Selection}: Perform source data selection:
		$$
		(\hat{\mathcal{A}}_h,\hat{v}) \leftarrow \mathrm{SDS}\left(\{\mathbb{X}^{(\mathrm{S}_i)}, \mathbb{Y}^{(\mathrm{S}_i)}\}_{i \in [L]}\cup (\mathbb{X}, \mathbb{Y})\right).
		$$
	\State \textbf{Aggregate Estimation on Source Data:} Compute the distributed Lasso estimator:
	$$
	\hat{\ubeta}^{D(\hat{\mathcal{A}}_h)} \leftarrow \mathrm{EDSL}\left(\{\mathbb{X}^{(\mathrm{S}_i)}, \mathbb{Y}^{(\mathrm{S}_i)}\}_{i \in \hat{\mathcal{A}}_h},\hat{\mathcal{A}}_h\right).
	$$
	\State \textbf{Select hyperparameters by algorithm \ref{alg:AHT}:
		$$(\lambda_{\Delta},\lambda_e) \leftarrow  \mathrm{AHT}(c_h, \tilde{c}, k0, \hat{\mathcal{A}}_h,\hat{v}).
		$$
	} 
	\State \textbf{Transfer Regression: Using (\ref{rg}),}
	$$\hat{\ubeta}^{\hat{\mathcal{A}}_h} \leftarrow \hat{\ubeta}(\hat{\mathcal{A}}_h, \lambda_{\Delta}, \lambda_{e}).
	$$
	\State \textbf{Hard Thresholding:} Apply hard thresholding to obtain
	$$
	HT_{\gamma_1}(\hat{\ubeta}^{\hat{\mathcal{A}}_h}) \leftarrow \hat{\ubeta}^{\hat{\mathcal{A}}_h}.
	$$
	\end{algorithmic}
\end{algorithm}
		Without prior information about $\gamma_1$, a possible choice of $\gamma_1$ in the step 5 of algorithm \ref{alg:robust-trans-lasso} is  $t_n$  defined as in (\ref{tn}).
	
	\begin{lemma}\label{lem:tr}
		Under conditions C1-C4, further suppose that
		\begin{align*}
		\lambda_{\beta}^{(j)}&\geq\frac{1}{n} \|\mathbb{X}^\top \epsilon\|_\infty + \frac{1}{n} \|\mathbb{X}^{(\mathrm{S}_j)\top} \epsilon^{(\mathrm{S}_j)}\|_\infty ,\\
		\lambda^{(j)}_{e}&\geq \frac{2}{\sqrt{n}} \|\epsilon\|_\infty+ \frac{1}{2\sqrt{n}}\l(\|\mathbb{X}\|_{\infty}+\|\mathbb{X}^{(\mathrm{S}_j)}\|_{\infty}\r)\|\Delta^{(\mathrm{S}_j)}\|_1,
		\end{align*}
		then,
		$$
		|\hat{h}_j - \|\Delta^{(j)}\|_1| = O_P\l(s_{j,0}\vee k\sqrt{\frac{\log p}{\log n}}\sqrt{\frac{\log (np)}{n}}\|\Delta^{(\mathrm{S}_j)}\|_1 +s_{j,0}\sqrt{\frac{\log p}{n}}\r),
		$$
		for $j=1,...,L$, where $\Delta^{(\mathrm{S}_j)} = \ubeta^{(\mathrm{S}_j)} - \ubeta^*$ and $s_{j,0} = \| \Delta^{(\mathrm{S}_j)} \|_0$.
	\end{lemma}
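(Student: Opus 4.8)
The plan is to turn the estimation of $\|\Delta^{(j)}\|_1=\|\Delta^{(\mathrm{S}_j)}\|_1$ into control of the contrast $\hat\ubeta^{(j)}-\hat\ubeta^{(\mathrm{S}_j)}$ around its population counterpart. Since $\ubeta^{(j)}-\ubeta^{(\mathrm{S}_j)}=(\ubeta^*-\ubeta^{(\mathrm{S}_j)})/2=-\Delta^{(\mathrm{S}_j)}/2$, the population version of $\hat h_j=2\|\hat\ubeta^{(j)}-\hat\ubeta^{(\mathrm{S}_j)}\|_1$ is exactly $\|\Delta^{(\mathrm{S}_j)}\|_1$. Writing $u=\hat\ubeta^{(j)}-\ubeta^{(j)}$ and $w=\hat\ubeta^{(\mathrm{S}_j)}-\ubeta^{(\mathrm{S}_j)}$, the reverse triangle inequality together with $\Delta^{(\mathrm{S}_j)}_{T_j^c}=0$ for $T_j:=\mathrm{S}(\Delta^{(\mathrm{S}_j)})$ gives
$$
|\hat h_j-\|\Delta^{(j)}\|_1|\le 2\|(u-w)_{T_j}\|_1+2\|(u-w)_{T_j^c}\|_1,
$$
so I would bound the $T_j$-part by $s_{j,0}(\|u\|_\infty+\|w\|_\infty)$ and then argue that the off-$T_j$ part collapses to the same scale.

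First I would analyze the combined fit (\ref{v1}). Fixing $\beta=\ubeta^{(j)}$, the residual decomposes into the natural noise $(\epsilon^{(\mathrm{S}_j)},\epsilon)$, the sparse target corruption $e^*$ (absorbed by the $\sqrt{n_0+n_j}\,e$ block), and the new dense object, the domain-shift bias $b=(\mathbb{X}^{(\mathrm{S}_j)}\Delta^{(\mathrm{S}_j)}/2,\,-\mathbb{X}\Delta^{(\mathrm{S}_j)}/2)^\top$, which is representable through neither $\mathbb{X}^{(j)}\beta$ nor a sparse $e$. The prescribed tuning parameters exist precisely to dominate the two gradients of $b$: the second summand of $\lambda_e^{(j)}$ bounds $\tfrac{1}{\sqrt n}\|b\|_\infty\le\tfrac{1}{2\sqrt n}(\|\mathbb{X}\|_\infty+\|\mathbb{X}^{(\mathrm{S}_j)}\|_\infty)\|\Delta^{(\mathrm{S}_j)}\|_1$, while for the $\beta$-gradient I would use the cancellation $\mathbb{X}^{(j)\top}b=\tfrac12(\mathbb{X}^{(\mathrm{S}_j)\top}\mathbb{X}^{(\mathrm{S}_j)}-\mathbb{X}^\top\mathbb{X})\Delta^{(\mathrm{S}_j)}$ together with the concentration of both Gram matrices around the common $\Sigma$ (condition C1), so that $\tfrac1n\|\mathbb{X}^{(j)\top}b\|_\infty=O_P(\sqrt{\log p/n})\|\Delta^{(\mathrm{S}_j)}\|_1$ stays below the natural-noise gradient already covered by $\lambda_\beta^{(j)}$. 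On the resulting high-probability event the extended restricted eigenvalue condition (Definition \ref{ere}) yields a cone inequality for $(u,\hat e^{(j)}-e^*)$ and an $\ell_\infty$ and $L_1$ oracle bound, exactly in the spirit of Lemma \ref{Lemma:3} and \cite{nguyen2012robust}, governed by the shift support size $s_{j,0}$, the corruption count $k$, and the noise levels carried by $\lambda_\beta^{(j)},\lambda_e^{(j)}$.

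The source estimator $\hat\ubeta^{(\mathrm{S}_j)}$ is the corruption-free, bias-free special case, so the ordinary restricted-eigenvalue and mutual-incoherence argument delivers sign consistency (under C4) and an $\ell_\infty$ error of order $\sqrt{\log p/n}$ for $w$. Combining the two $\ell_\infty$ bounds, the $T_j$-localized term contributes $s_{j,0}\sqrt{\log p/n}$, which is the second term of the stated rate, whereas propagating $b$ through the coupled estimation of $\beta$ and $e$ — where the corruption-detection scale $\sqrt{\log n/n}$ of Lemma \ref{Lemma:3} and the union bounds over the $\approx np$ design entries enter — produces the first term $(s_{j,0}\vee k)\sqrt{\log p/\log n}\,\sqrt{\log(np)/n}\,\|\Delta^{(\mathrm{S}_j)}\|_1$, proportional to the domain shift and to the combined count $s_{j,0}\vee k$ of the shift and corruption supports.

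I expect the main obstacle to be twofold and intertwined. The dense bias $b$, unlike the sparse $e^*$, cannot be absorbed into either penalized block, so one must verify that its two gradients are simultaneously dominated by $\lambda_\beta^{(j)}$ and $\lambda_e^{(j)}$ and then track the $\|\Delta^{(\mathrm{S}_j)}\|_1$-proportional residue it leaves in the oracle inequality; this is what distinguishes the lemma from a routine extended-Lasso estimate. Equally delicate is showing that the off-$T_j$ discrepancy $(u-w)_{T_j^c}$ does not inflate the rate to the full support size $\|\ubeta^{(\mathrm{S}_j)}\|_0$: here I would exploit that off $T_j$ the two fits target the identical vector $\ubeta^{(\mathrm{S}_j)}=\ubeta^*$ and share the source data, using sign consistency together with the Gram concentration to reduce the surviving contribution to the $s_{j,0}\vee k$ scale that appears in the final bound.
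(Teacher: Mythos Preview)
Your identification of the structural ingredients is accurate and matches the paper: the dense bias $b$ coming from the domain shift, the Gram cancellation $\mathbb{X}^{(j)\top}b=\tfrac12(\mathbb{X}^{(\mathrm{S}_j)\top}\mathbb{X}^{(\mathrm{S}_j)}-\mathbb{X}^\top\mathbb{X})\Delta^{(\mathrm{S}_j)}$, and the role of the second summand of $\lambda_e^{(j)}$ in dominating $\tfrac1{\sqrt n}\|b\|_\infty$ are exactly the devices the paper uses. The route you take from there, however, is considerably more elaborate than the paper's.

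The paper does \emph{not} invoke the primal--dual witness / $\ell_\infty$ machinery of Lemma~\ref{Lemma:3}; it follows instead the basic-inequality pattern of Lemma~\ref{Lemma:2}. Starting from the optimality of $(\hat\ubeta^{(j)},\hat e^{(j)})$ at the comparison point $(\ubeta^{(j)},e^{(j)})$, it obtains a cone-type inequality for $(\tilde z,\tilde v)=(\hat\ubeta^{(j)}-\ubeta^{(j)},\hat e^{(j)}-e^{(j)})$ in which the cross terms generated by $b$ are bounded via $\|\hat\Sigma-\hat\Sigma^{(\mathrm{S}_j)}\|_\infty\|\Delta^{(\mathrm{S}_j)}\|_1\|\tilde z\|_1$ and $\tfrac1{\sqrt n}(\|\mathbb{X}\|_\infty+\|\mathbb{X}^{(\mathrm{S}_j)}\|_\infty)\|\Delta^{(\mathrm{S}_j)}\|_1\|\tilde v\|_1$. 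The extended RE condition then gives an $\ell_2$ bound on $\|\tilde z\|_2+\|\tilde v\|_2$, and the cone converts it to an $\ell_1$ bound on $\tilde z$; the ratio $\lambda_\beta^{(j)}/\lambda_e^{(j)}\asymp\sqrt{\log p/\log n}$ is what produces the $\sqrt{\log p/\log n}$ factor you are trying to recover via Lemma~\ref{Lemma:3}. Finally the paper applies the \emph{crude} triangle inequality
\[
|\hat h_j-\|\Delta^{(j)}\|_1|\le 2\|\hat\ubeta^{(j)}-\ubeta^{(j)}\|_1+2\|\hat\ubeta^{(\mathrm{S}_j)}-\ubeta^{(\mathrm{S}_j)}\|_1,
\]
bounding the second term by the standard Lasso $\ell_1$ rate. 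There is no on/off-$T_j$ split, no sign consistency, and no attempt to exploit that $u$ and $w$ might cancel on $T_j^c$.

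What your approach buys is potentially a cleaner dependence on $s_{j,0}$ rather than on $\|\ubeta^{(\mathrm{S}_j)}\|_0$ in the second term; the paper's crude bound nominally involves the full source-support size through the Lasso $\ell_1$ error, which it then records as $s_{j,0}\sqrt{\log p/n}$ without comment. The cost is the PDW analysis and, more seriously, the off-$T_j$ cancellation step: your claim that ``the two fits target the identical vector and share the source data'' is not enough, since $\hat\ubeta^{(j)}$ also depends on the corrupted target block and on $b$, so $(u-w)_{T_j^c}$ does not vanish by construction. If you pursue your route you would need a genuine coupling argument there; the paper simply avoids the issue by not splitting.
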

	The theoretical results of Lemma \ref{lem:tr} indicate that the estimation error associated with domain shift is influenced by both the corruption fraction and the true domain shift. Given prior knowledge of the corruption fraction, we recommend selecting a relatively large value of $h$ in Algorithm \ref{SDS} to mitigate these effects.
	\begin{theorem}\label{th:1}
		Assume that the conditions of Proposition \ref{prop0}  hold. Then,
		$$
		P\left(\hat{\mathcal{A}}_h \subset \mathcal{A}_h\right) \to 1,
		$$
		and on the event $\{|\hat{\mathcal{A}}_h| > 1\}$, it holds with probability approaching 1 as $n \to \infty$,
		$$
		\|\hat{\ubeta}^{\hat{\mathcal{A}}_h} - \ubeta^*\|_2 + \|\hat{e}^{\hat{\mathcal{A}}_h} - e^*\|_2 \leq C \left( \sqrt{\frac{\bar{s}^{\hat{\mathcal{A}}_h} \log p}{|\hat{\mathcal{A}}_h| n}} + h \wedge \sqrt{\frac{s^{\hat{\mathcal{A}}_h}_{\Delta} \log p}{n}} + \sqrt{\frac{k \log n}{n}} \right),
		$$
		for some universal constant $C$, where $\bar{s}^{\hat{\mathcal{A}}_h} = \left\|\sum_{j \in \hat{\mathcal{A}}_h} \ubeta^{(\mathrm{S}_j)}\right\|_0$ and $s^{\hat{\mathcal{A}}_h}_{\Delta} = \left\|\sum_{j \in \hat{\mathcal{A}}_h} \frac{\ubeta^{(\mathrm{S}_j)}}{|\hat{\mathcal{A}}_h|} - \ubeta^*\right\|_0$. Furthermore,
		\begin{equation*}
			P\left(\mathrm{sign}\left(HT_{\gamma_1}\left(\hat{\ubeta}^{\hat{\mathcal{A}}_h} \right)\right) = \mathrm{sign}(\ubeta^*)\right) \to 1.
		\end{equation*}
		If the covariates follow a standard Gaussian design, then
		\begin{equation*}
	P\left(\mathrm{sign}\left(HT_{t_n}\left(\hat{\ubeta}^{\hat{\mathcal{A}}_h} \right)\right) = \mathrm{sign}(\ubeta^*)\right) \to 1.
     \end{equation*}
		where $t_n$ is defined as in (\ref{tn}).
	\end{theorem}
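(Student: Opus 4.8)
The plan is to prove the three assertions in sequence: first the selection-consistency statement $P(\hat{\mathcal{A}}_h \subset \mathcal{A}_h)\to 1$, then the estimation-error bound on $\{|\hat{\mathcal{A}}_h|>1\}$, and finally the two sign-consistency claims, both of which I would derive from an $\ell_\infty$ bound together with (C4). For the first assertion I would start from Lemma~\ref{lem:tr}, which controls $|\hat h_j - \|\Delta^{(\mathrm{S}_j)}\|_1|$ for each fixed $j$. Since there are only $L$ candidate sources, a union bound upgrades this to the uniform statement $\max_{1\le j\le L}|\hat h_j - \|\Delta^{(\mathrm{S}_j)}\|_1|$ being of smaller order than the separation scale, where the rate in Lemma~\ref{lem:tr} vanishes under the sample-size condition (C3). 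On the resulting high-probability event every index retained by the rule $\{j:\hat h_j\le h\}$ obeys $\|\Delta^{(\mathrm{S}_j)}\|_1 \le h + o_P(1)$ and hence lies in $\mathcal{A}_h$; because the SDS step defines $\hat{\mathcal{A}}_h$ as a subset of $\{j:\hat h_j\le h\}$, this gives $P(\hat{\mathcal{A}}_h\subset\mathcal{A}_h)\to 1$. The delicate point is the boundary: indices whose true shift lies within $o_P(1)$ of $h$ must be absorbed into $\mathcal{A}_h$, which is where I would use that the Lemma~\ref{lem:tr} error is $o_P(h\wedge 1)$ under (C3) to secure the needed margin.

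For the error bound I would condition on the event $\mathcal{E} = \{\hat{\mathcal{A}}_h\subset\mathcal{A}_h,\ |\hat{\mathcal{A}}_h|>1\}$, whose probability tends to one once restricted, as in the statement, to $|\hat{\mathcal{A}}_h|>1$. On $\mathcal{E}$ the robust transfer Lasso reduces exactly to the oracle procedure of Algorithm~\ref{alg:oracle-trans-lasso} run on a family of sources all having domain shift at most $h$, so Proposition~\ref{prop0} governs its behavior. The subtlety is that $\hat{\mathcal{A}}_h$ is data-dependent and reuses the very source data that produced the $\hat h_j$; to decouple selection from aggregation I would make the conclusion of Proposition~\ref{prop0} hold simultaneously over all subsets $\mathcal{A}\subset\mathcal{A}_h$ by a union bound (there are at most $2^{|\mathcal{A}_h|}\le 2^L$ of them, and the factor $L\log 2$ is negligible against the Gaussian tail exponents when $L$ is fixed or grows slowly), and then instantiate the bound at the random choice $\mathcal{A}=\hat{\mathcal{A}}_h$. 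This produces the stated rate with the data-driven quantities $\bar s^{\hat{\mathcal{A}}_h}$, $s_\Delta^{\hat{\mathcal{A}}_h}$ and $|\hat{\mathcal{A}}_h|$.

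The sign-consistency claims then follow from the $\ell_\infty$ bound. The same subset-union argument upgrades Lemma~\ref{Lemma:3} to the selected set, giving $\|\hat{\ubeta}^{\hat{\mathcal{A}}_h}-\ubeta^*\|_\infty \le \mathcal{T}_1+\mathcal{T}_2+\mathcal{T}_3 = o_P(1)$ on $\mathcal{E}$ under (C3). Since (C4) keeps $\min_{j:\ubeta_j^*\neq 0}|\ubeta_j^*|\ge\gamma_1$ bounded away from zero while this $\ell_\infty$ error vanishes, a hard threshold placed between the error rate and $\gamma_1$ zeroes out every true-zero coordinate and preserves the sign and magnitude of every true-nonzero coordinate; a union bound over the $p$ coordinates then yields $P(\mathrm{sign}(HT_{\gamma_1}(\hat{\ubeta}^{\hat{\mathcal{A}}_h}))=\mathrm{sign}(\ubeta^*))\to 1$. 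For the standard Gaussian design, Lemma~\ref{Lemma:3} shows that $t_n$ is, up to $(1+o(1))$, an upper bound for this $\ell_\infty$ error with $t_n\to 0<\gamma_1$, so the identical separation argument goes through with threshold $t_n$.

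I expect the main obstacle to be the second step: making the passage from the deterministic-set Proposition~\ref{prop0} to the random, data-dependent $\hat{\mathcal{A}}_h$ fully rigorous in spite of the reuse of source data across the selection and aggregation stages. The union-bound-over-subsets device is clean only when $L$ is fixed or grows slowly relative to $\log p$; if $L$ is large I would instead resort to a sample-splitting argument that computes the $\hat h_j$ and runs EDSL on disjoint portions of the data, and verify that (C3) still forces all error terms to vanish after the split.
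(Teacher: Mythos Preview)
Your proposal is correct and follows essentially the same route as the paper, which proves Theorem~\ref{th:1} in a single line by declaring it ``a direct consequence of Proposition~\ref{prop0} and Lemma~\ref{lem:tr}.'' You simply supply the connective tissue the paper omits: the union bound over the $L$ sources to pass from Lemma~\ref{lem:tr} to $P(\hat{\mathcal{A}}_h\subset\mathcal{A}_h)\to 1$, and the union bound over subsets of $\mathcal{A}_h$ to handle the data-dependence of $\hat{\mathcal{A}}_h$ when invoking Proposition~\ref{prop0} and Lemma~\ref{Lemma:3}. The paper does not address this decoupling issue at all, so your treatment is in fact more careful than the original; your caveat about large $L$ and the sample-splitting fallback is a reasonable acknowledgment of where the argument could be tightened, though the paper implicitly treats $L$ as fixed.
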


	\section{Simulation studies\label{sec3}}
	In this section, we present simulations that illustrate the robustness and effectiveness of our proposed method for recovering a sparse signal from corrupted compressive samples. For comparison, we evaluate our method against several established techniques, including Lasso \cite{tibshirani1996regression}, robust Lasso (Rlasso) \cite{nguyen2012robust}, and Transfer Lasso \cite{li2022transfer}.
	
	The target data is generated based on a synthetic 12-sparse signal with $p=400$ features and $n=100$ observations, as depicted in Figure~\ref{fig:a}. Corruption is introduced following a uniform distribution $U[0.5, 1]$, with the fraction of corruptions $r=k/n$ varying from 10\% to 90\% across simulations. All noise in both target and source datasets is generated according to a normal distribution $N(0, 0.01)$.
	
	Specifically, the source data are generated as follows:
	\begin{equation}
		y^{(\mathrm{S}_j)}_i = \mathbf{X}^{(\mathrm{S}_j)}_i \boldsymbol{\beta}^{(\mathrm{S}_j)} + \epsilon_{i}^{(\mathrm{S}_j)}, \quad i=1,\dots,n_j, \quad j=1,\dots,L,
	\end{equation}
	with $n_j=100$, $j=1,...,L$, and $\|\boldsymbol{\beta}^{(\mathrm{S}_j)}\|_0 = 12U + (1-U)20$, where $U$ is a binomial variable such that $P(U=1)=1-\frac{1}{L}$. The common structure, defined as the cardinality of the set $\{j: \boldsymbol{\beta}^{(\mathrm{S}_j)}_j = \boldsymbol{\beta}_j^* \neq 0\}$, varies from 4 to 12. Additionally, $\Delta_j$, for $j=1,\dots,L$, varies within the range [2, 24]. All sensing matrices are generated using a  Gaussian distribution with covariance matrix $\mathbb{I}/\sqrt{n}$.
	
	The optimization problems described in Algorithms~\ref{alg:edsl}--\ref{alg:robust-trans-lasso} and Rlasso are convex and can be  solved using the Alternating Direction Method of Multipliers (ADMM) \cite{boyd2011distributed}, coordinate descent, or interior-point methods \cite{boyd2004convex}. For implementing the Lasso method, we utilize the well-established R package \texttt{glmnet}. The hyperparameters for the Lasso method are selected via cross-validation. In the case of the Rlasso method, hyperparameter selection is guided by recommendations from \cite{nguyen2012robust}. Additionally, the implementation of the Transfer Lasso method is based on the code provided in \cite{li2022transfer}.
	
	The reconstruction of sparse signals using various methodologies is illustrated in Figure~\ref{f1}. It is observed that the Lasso method becomes ineffective upon the introduction of corrupted data, while our proposed approach maintains robustness even at high levels of corruption. In Figure \ref{fig:ser}, the performance evaluation is conducted through the mean performance Signal-to-Error Ratio (SER) [dB] obtained from 1000 simulations. The SER [dB] is defined as:
	\begin{equation}
		\text{SER}(\mathbf{x},\hat{\mathbf{x}})[\mathrm{dB}] = 10 \log_{10} \left( \frac{\sum_{i=1}^{p} x_i^2}{\sum_{i=1}^{p} (x_i - \hat{x}_i)^2} \right),
	\end{equation}
	where $\hat{\mathbf{x}}=\{\hat{x}_i\}_{i=1}^{p}$ denotes the reconstructed signal and $\mathbf{x}=\{x_i\}_{i=1}^{p}$ represents the true signal. Higher SER values indicate superior performance. Remarkably, our method demonstrates a breakdown point exceeding 50\%. Specifically, in contexts with a minimal fraction of corruption, our approach achieves SER values comparable to those of the oracle case.

	\begin{figure}
		\centering     
		\subfigure[Noiseless]{\label{fig:a}\includegraphics[width=50mm]{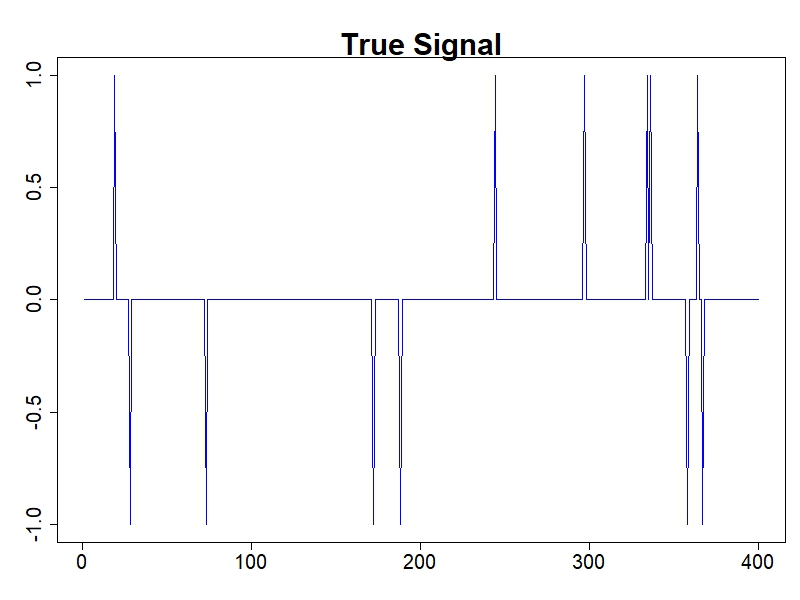}}
		\subfigure[$r$=0]{\label{fig:b1}\includegraphics[width=50mm]{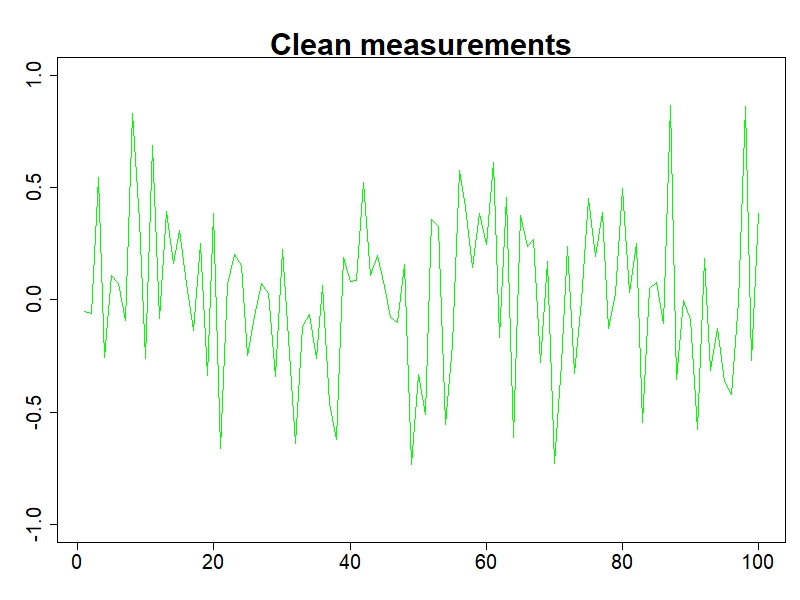}}
		\subfigure[$r$=0.9]{\label{fig:b2}\includegraphics[width=50mm]{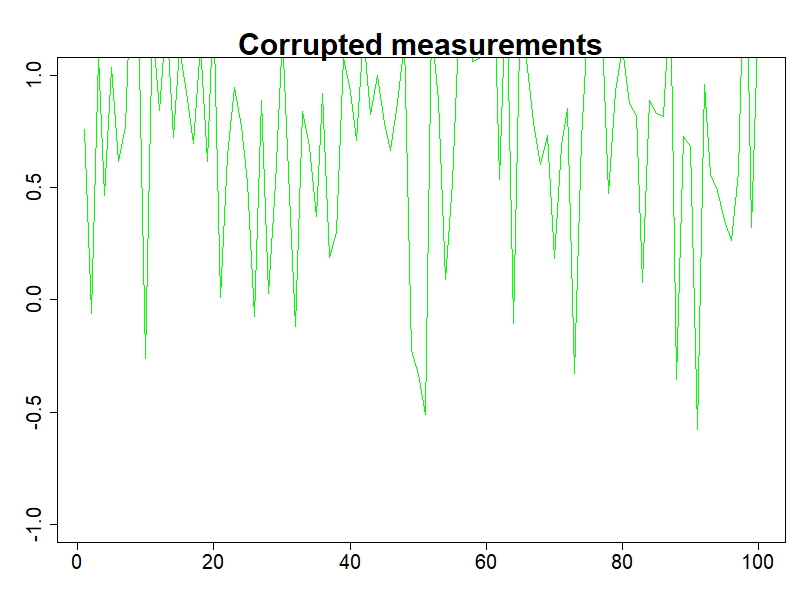}}
		\subfigure[$r$=0.1]{\label{fig:b3}\includegraphics[width=50mm]{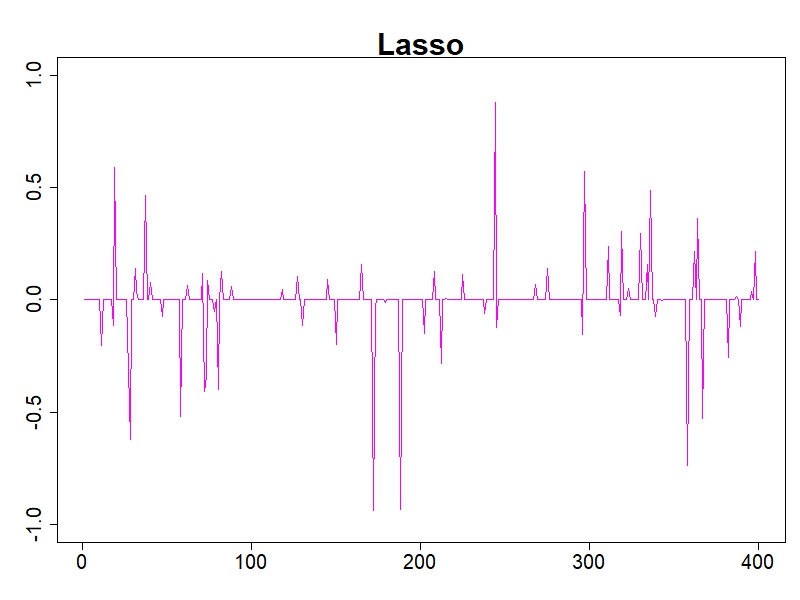}}
		\subfigure[$r$=0.5]{\label{fig:b4}\includegraphics[width=50mm]{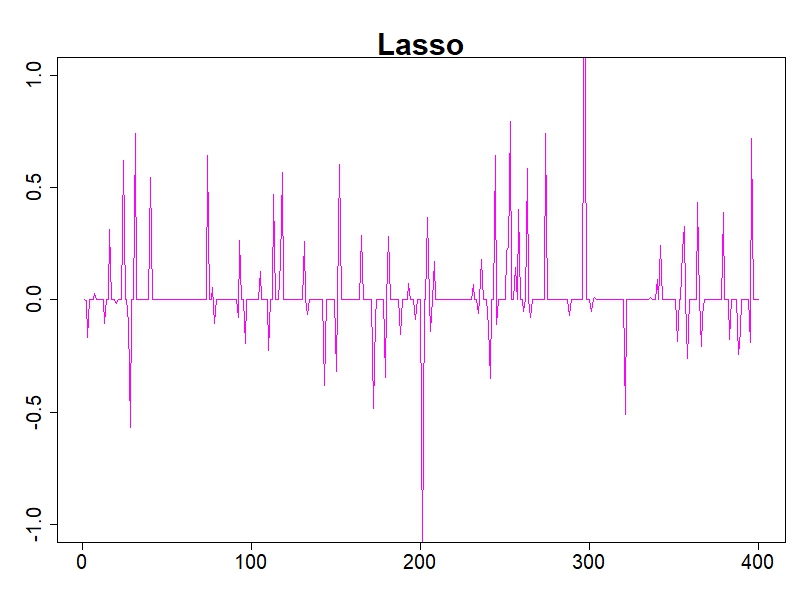}}
		\subfigure[$r$=0.9]{\label{fig:b5}\includegraphics[width=50mm]{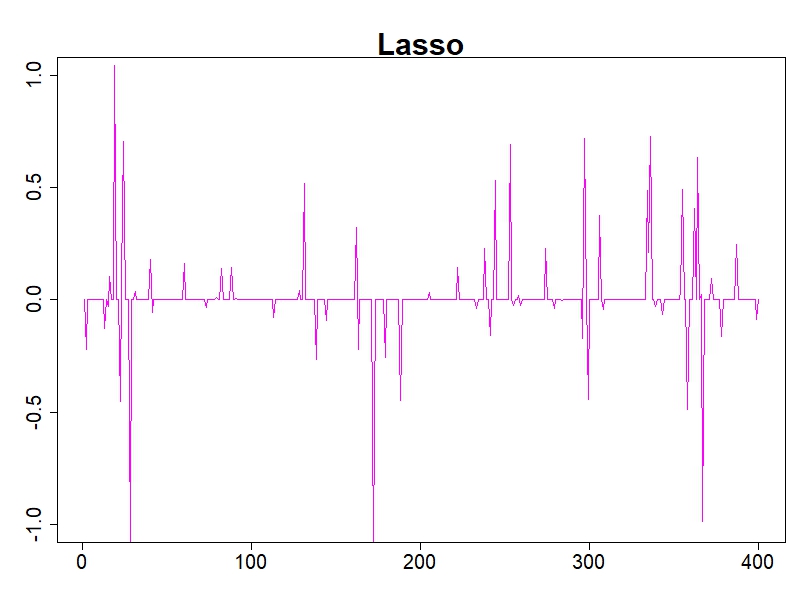}}
		\subfigure[$r$=0.1]{\label{fig:b6}\includegraphics[width=50mm]{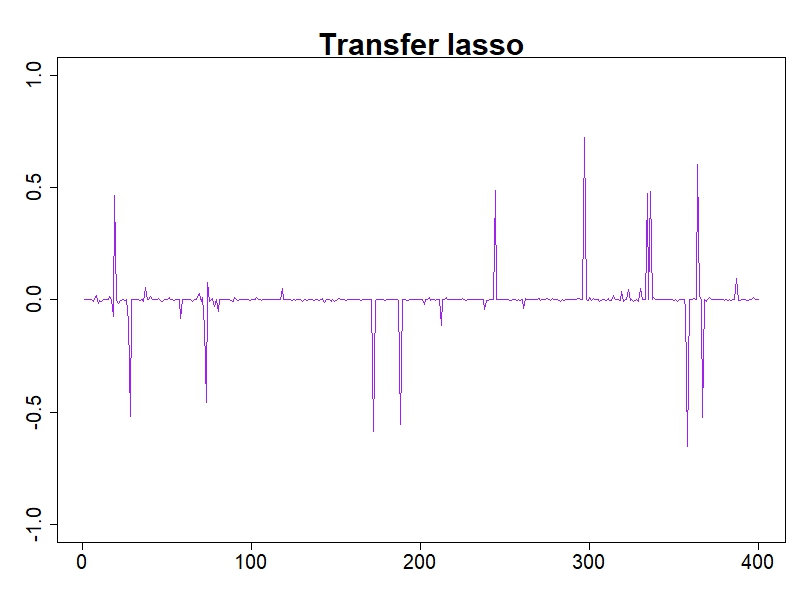}}
		\subfigure[$r$=0.5]{\label{fig:b7}\includegraphics[width=50mm]{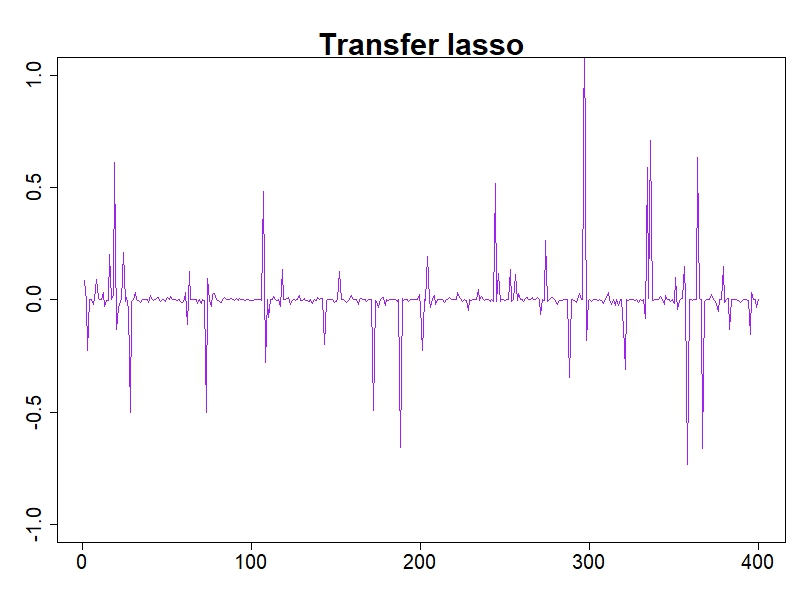}}
		\subfigure[$r$=0.9]{\label{fig:b8}\includegraphics[width=50mm]{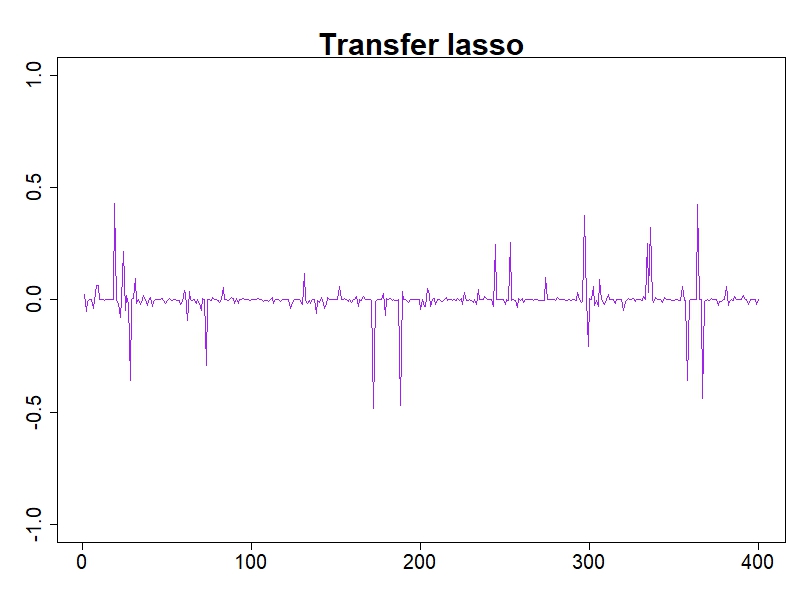}}
		\subfigure[$r$=0.1]{\label{fig:b9}\includegraphics[width=50mm]{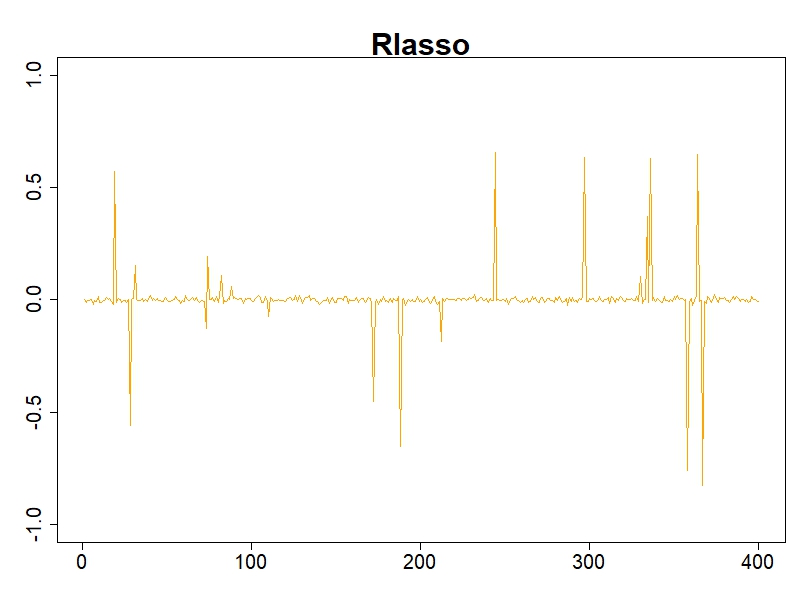}}
		\subfigure[$r$=0.5]{\label{fig:b10}\includegraphics[width=50mm]{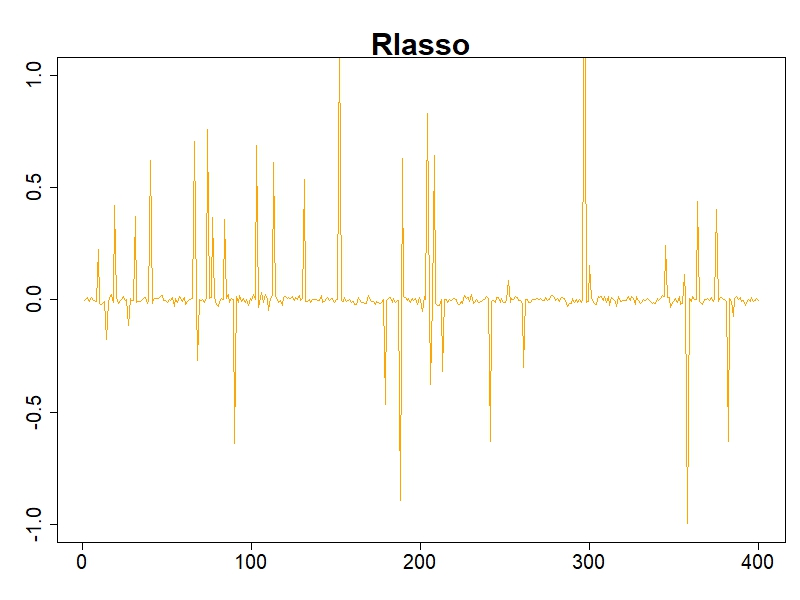}}
		\subfigure[$r$=0.9]{\label{fig:b11}\includegraphics[width=50mm]{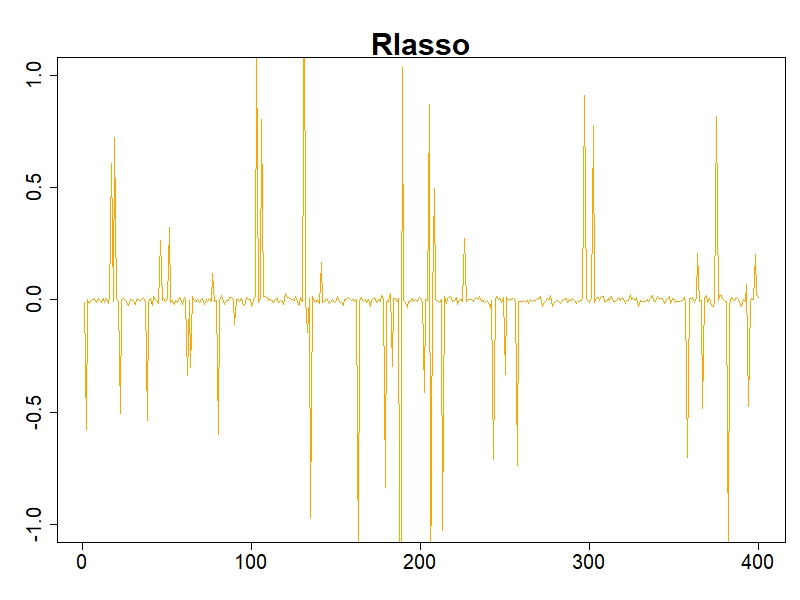}}
		\subfigure[$r$=0.1]{\label{fig:b12}\includegraphics[width=50mm]{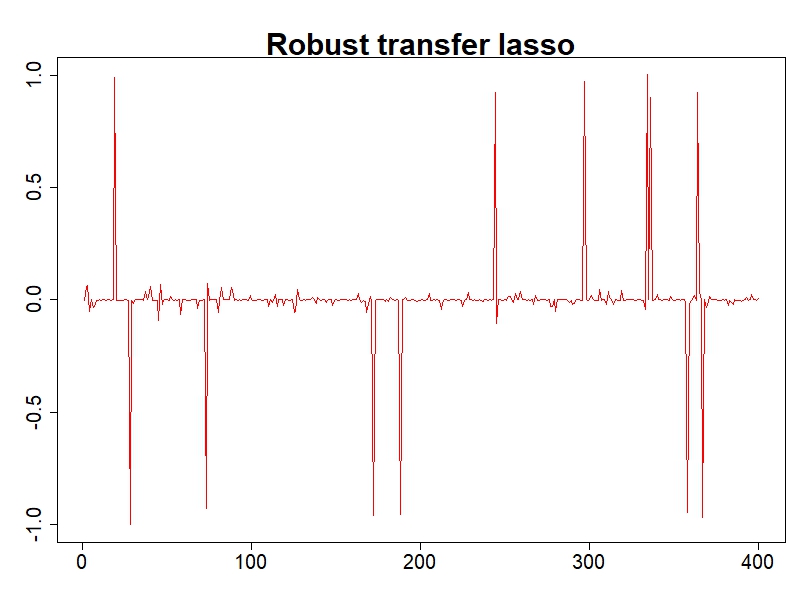}}
		\subfigure[$r$=0.5]{\label{fig:b13}\includegraphics[width=50mm]{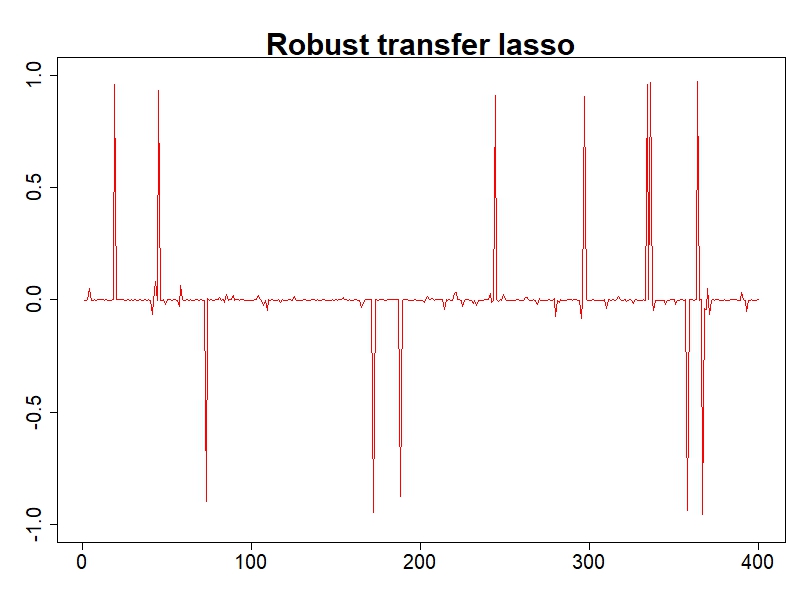}}
		\subfigure[$r$=0.9]{\label{fig:b14}\includegraphics[width=50mm]{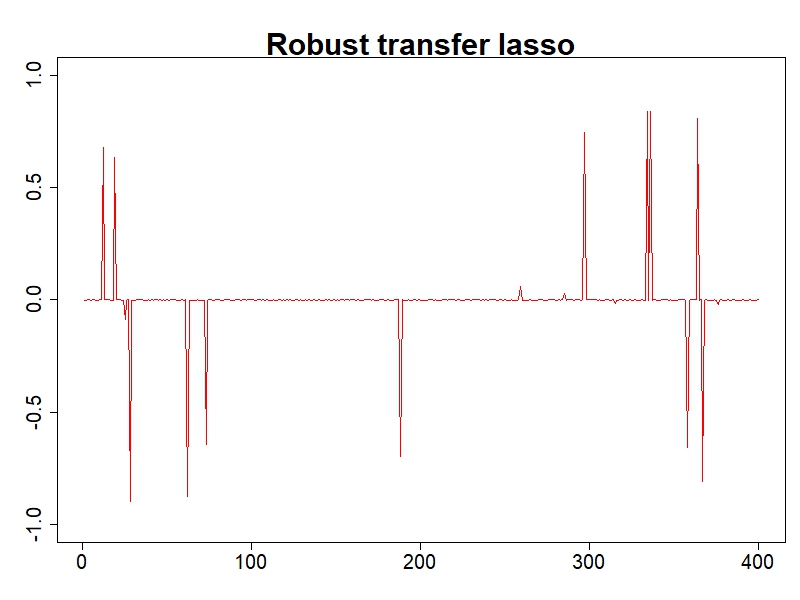}}
		\caption{\label{f1}Sparse signals reconstruction from  corrupted measurements, for $r$ varying from 10\% to 90\%. }
	\end{figure}
	
	\begin{figure}
		\centering
		\includegraphics[width=0.9\linewidth]{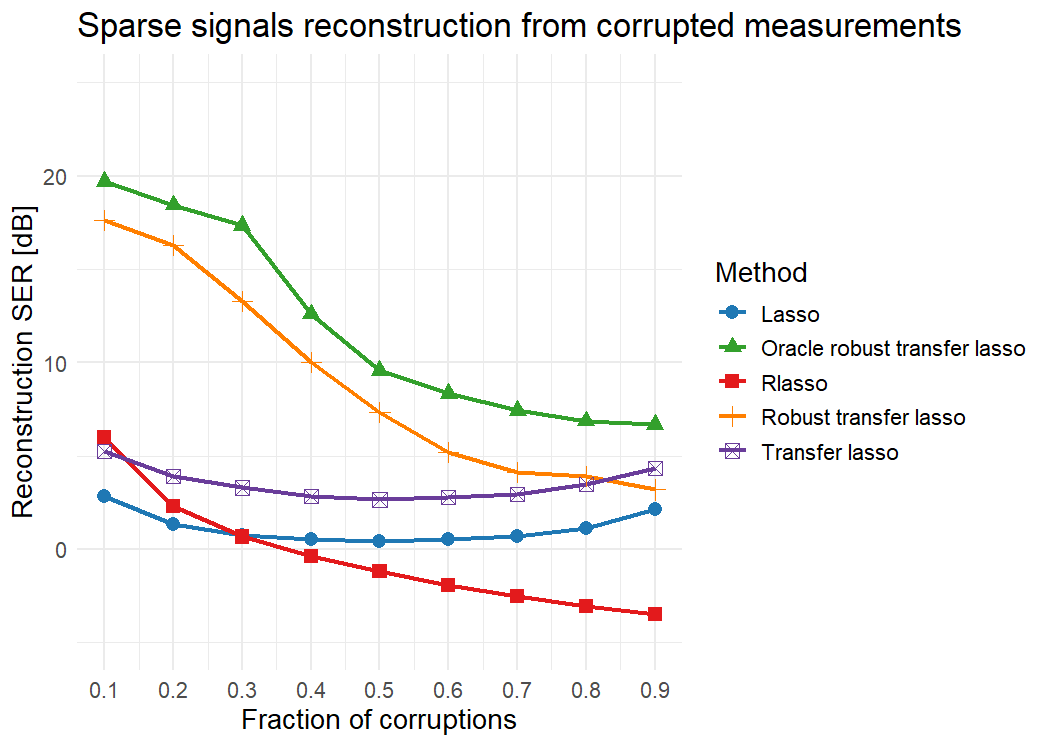}
		\caption{\label{fig:ser}Sparse signals reconstruction from  corrupted measurements, for $r$ varying from 10\% to 90\%. }
	\end{figure}
	
	\section{Analysis of MGMT Methylation and Gene Expression in GBM}
	Glioblastoma(GBM) is a highly aggressive brain tumor with limited treatment options, characterized by rapid proliferation, diffuse infiltration, and resistance to therapy.  O6-methylguanine-DNA methyltransferase (MGMT) methylation plays a pivotal role not only in predicting response to chemotherapy but also in guiding personalized treatment strategies and informing prognosis of GBM. In the field of bioinformatics, numerous studies have explored the relationship between DNA methylation and gene expression, including comprehensive analyses such as those reported by \cite{cancer2012comprehensive}. By employing transfer learning techniques, we aim to analyze the complex interplay between MGMT methylation and gene expression of GBM patients, leveraging the comprehensive datasets available from the Genotype-Tissue Expression (GTEx) project and The Cancer Genome Atlas(TCGA) data. The TCGA Glioblastoma cohort is selected as the target dataset, while the GDC TCGA Glioblastoma cohort is selected as source data 1 and the TCGA Lower Grade Glioma and Glioblastoma cohorts is chosen as source data 2. All datasets can be downloaded from the Xena Browser platform at https://xenabrowser.net/datapages/.
	
	The study begins with differential gene analysis, by leveraging normal brain sample data from GTEx and cancer brain sampel data from TCGA. These repositories provide invaluable resources by offering large-scale genomic and transcriptomic profiles across diverse healthy and pathological tissue samples. Afeter differential gene analysis, the samples without genetic testing data will be excluded from analysis. The data pre-processing flowchart is detailed in Fig. \ref{fig:flowchat}.
	\begin{figure}
		\centering
		\includegraphics[width=0.5\linewidth]{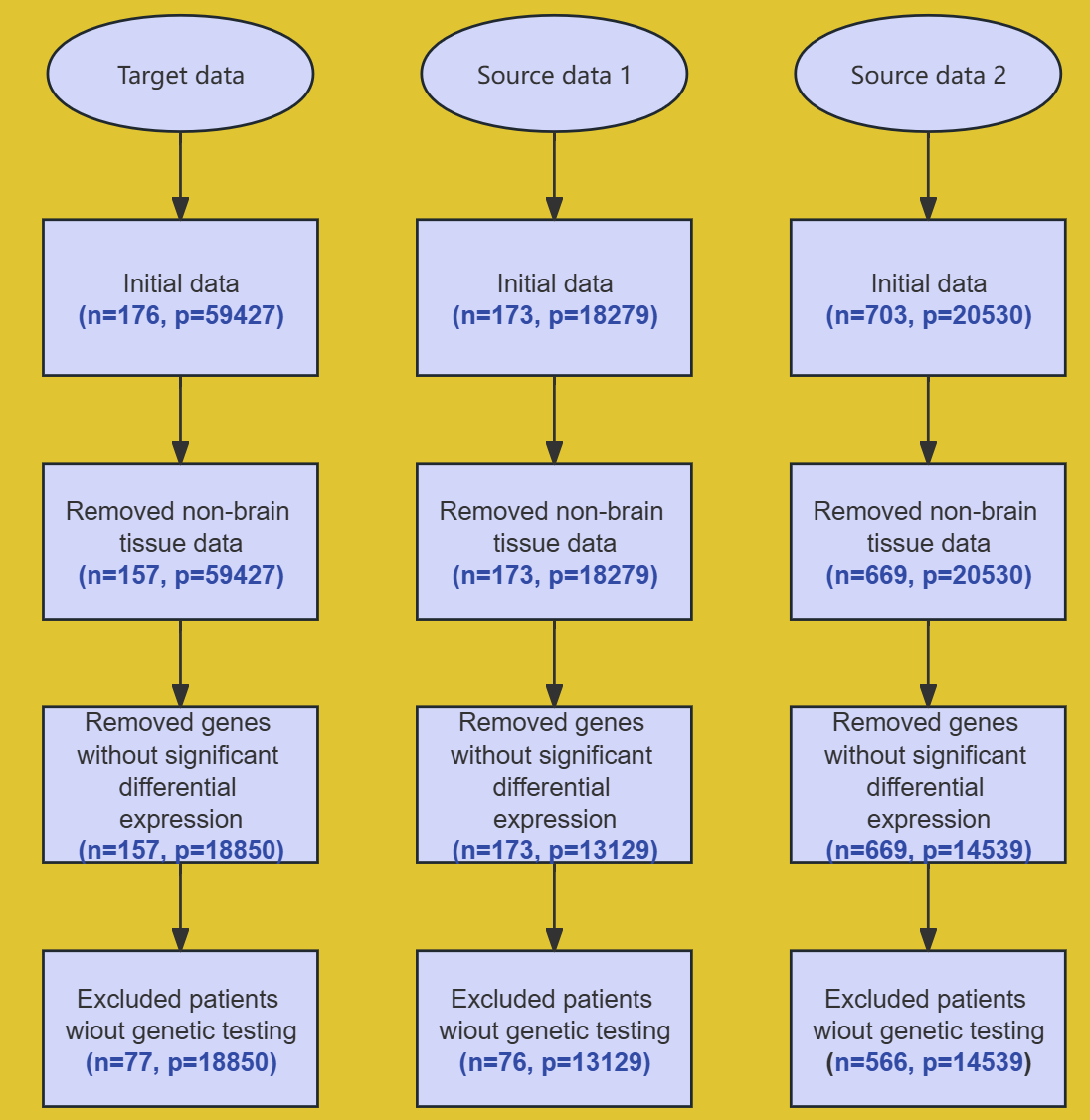}
		\caption{\label{fig:flowchat}Flowchart}
	\end{figure}
	The MGMT methylation data for the target dataset are incomplete, and we treat this incompleteness as corrupted data within our model. In contrast, MGMT methylation data for the two source datasets are nearly complete.
	For feature selection,  two genes, "FBN2" and "SNX31" were selected by both the target dataset and source 1, while no common selected genes were shared between the target dataset and Source 2. The reason maybe source  2 contains also low grade glioma. So only source 1 is selected for transfer regression.

	Unlike in simulations, there is no objective measure for identification accuracy in real-world scenarios. To indirectly address this issue, we perform a prediction evaluation based on 10-fold cross-validation. The prediction MSEs are 0.083(proposed), 0.157(Rlasso), 0.523(Lasso), 0.517(Transfer Lasso).
	
	The selected genes are presented in Figure \ref{gene}. We conducted a Gene Ontology (GO) enrichment analysis on the results obtained from the Robust Transfer Lasso method to identify the biological processes in GBM that are significantly associated with differentially expressed genes. The findings, illustrated in Figure \ref{go}, highlight the top-enriched GO terms, complete with their corresponding p-values and the number of genes involved in each category. Notably, several pathways related to astrocytes were identified, which have been historically linked to GBM. These include processes such as astrocyte development and differentiation (\cite{sofroniew2010astrocytes}, \cite{barres2008mystery}). Other processes, such as the cell surface receptor protein serine/threonine kinase signaling pathway, have also been reported in historical studies related to cancer (\cite{massague2008tgfbeta}).
		\begin{figure}
		\centering
		\includegraphics[width=1\linewidth]{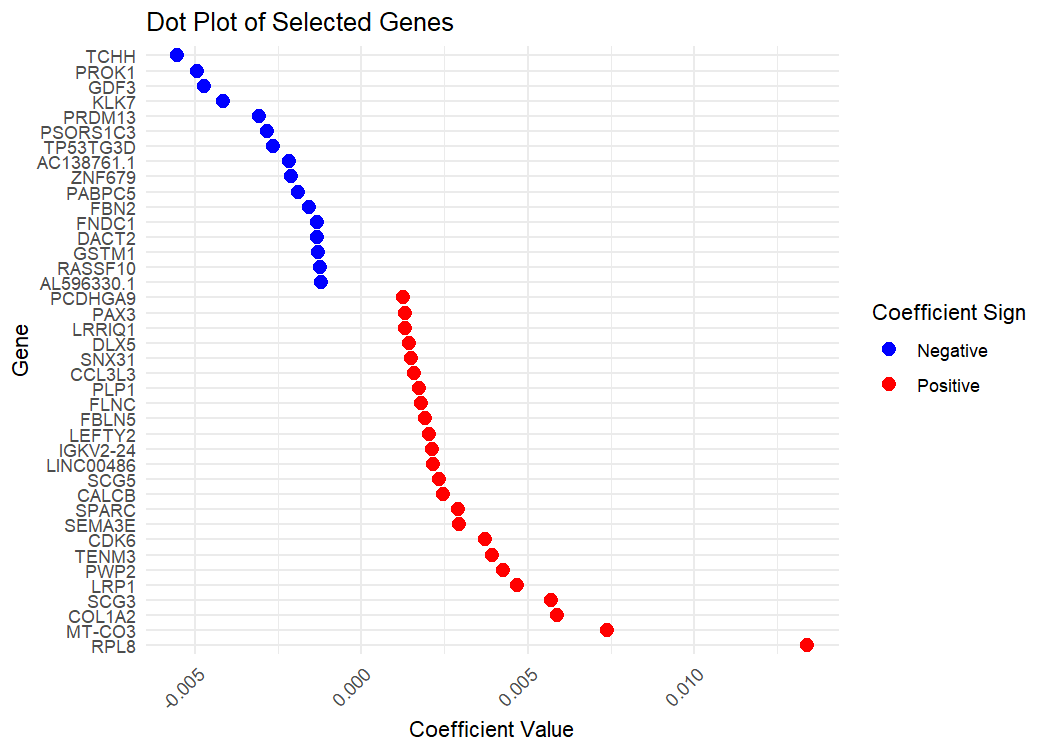}
		\caption{\label{gene}Selected genes}
	\end{figure}
	\begin{figure}
		\centering
		\includegraphics[width=1\linewidth]{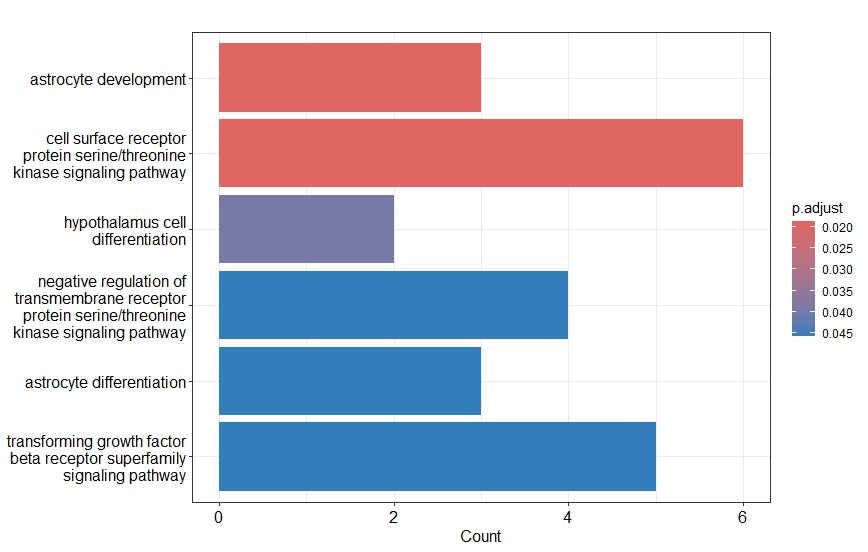}
		\caption{\label{go}GO enrichment analysis}
	\end{figure}

	\section{Discussion\label{sec5}}
	We propose an interpretable transfer learning framework for addressing potential corrupted labels, which can also be extended to handle missing label scenarios. Both theoretical analysis and simulation experiments demonstrate that our method outperforms conventional approaches that rely solely on target data, highlighting its robustness and effectiveness in practical applications.
	
	Several promising directions exist for extending this work in the future. First, in real-world datasets, corruption often affects not only labels but also predictors, which necessitates the development of methods to simultaneously address both types of corruption. While our current framework assumes that all source data are "clean," this assumption may not hold in certain real-world applications. To address this limitation, outlier detection techniques could be integrated into the preprocessing phase to identify and mitigate potential contamination in the source data before applying transfer regression. Additionally, future research could explore adaptive weighting schemes to dynamically balance the contributions of source and target data, particularly in scenarios where the quality of source data varies significantly. Finally, extending the proposed framework to handle non-linear relationships could further broaden its applicability to complex real-world problems.
	
	\section*{Acknowledgements}
	The authors would like to thank the anonymous referees, an Associate Editor and the Editor for their constructive comments that improved the quality of this paper.

	\section*{Appendix: Proof of Main Results
	}
Denote 
\begin{equation}
	\widetilde{\mathcal{L}}_v(\beta, \widehat{\ubeta}^{(t)}) = \mathcal{L}_v(\beta) + \left\langle \frac{1}{|\mathcal{A}_h|} \sum_{j \in \mathcal{A}_h} \nabla \mathcal{L}_j(\widehat{\ubeta}^{(t)}) - \nabla \mathcal{L}_{v}(\widehat{\ubeta}^{(t)}), \beta \right\rangle,
\end{equation}
where $\widehat{\ubeta}^{(t)}$ represents the estimated parameter vector at the $t$-th iteration for $\hat{\ubeta}^{\mathcal{A}_h}$. The following lemma extends theoretical results from distributed Lasso to scenarios where the coefficients of each dataset may differ, directly leveraging results from \cite{wang2017efficient}.

\begin{lemma}\label{lem:dl}
	Under conditions C1-C2, for sufficiently large $t$, with probability at least $1-2x$, we have
	\begin{align*}
		\left\|\widehat{\ubeta}^{(t)} - \bar{\ubeta}^{\mathcal{A}_h}\right\|_1 &\leq 49 \bar{s} \sigma_{\epsilon} \sqrt{\frac{\log(p/x)}{|\mathcal{A}_h|n}},\\
			\left\|\widehat{\ubeta}^{(t)} - \bar{\ubeta}^{\mathcal{A}_h}\right\|_2 &\leq 13 \bar{s} \sigma_{\epsilon} \sqrt{\frac{\log(p/x)}{|\mathcal{A}_h|n}}.
	\end{align*}
	Furthermore, with probability approaching 1, it holds that
	\begin{align*}
		&\left\| \frac{1}{|\mathcal{A}_h|} \sum_{j \in \mathcal{A}_h} \nabla \mathcal{L}_j(\widehat{\ubeta}^{(t)}) - \nabla \mathcal{L}_{v}(\widehat{\ubeta}^{(t)}) \right\|_{\infty} \\
		&\quad\leq \left\| \frac{1}{|\mathcal{A}_h|} \sum_{j \in \mathcal{A}_h} \nabla \mathcal{L}_j(\bar{\ubeta}^{\mathcal{A}_h}) \right\|_{\infty} + \frac{\log(np)}{n} \sqrt{\frac{\log(2p)}{n}} \| \bar{\ubeta}^{\mathcal{A}_h} - \widehat{\ubeta}^{(t)} \|_1 \\
		&\quad\quad+ C\left(\frac{\log(np)}{n}\right)^{2/3} \| \bar{\ubeta}^{\mathcal{A}_h} - \widehat{\ubeta}^{(t)} \|_1^2,
	\end{align*}
	for some universal constant $C$.
\end{lemma}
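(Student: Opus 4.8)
The plan is to treat $\bar{\ubeta}^{\mathcal{A}_h}$ as the effective target of the \emph{averaged} population loss $\frac{1}{|\mathcal{A}_h|}\sum_{j\in\mathcal{A}_h}\E\,\mathcal{L}_j(\beta)$ and then import the surrogate-loss contraction analysis of \cite{wang2017efficient}, whose homogeneous-coefficient assumption fails here. The one genuinely new ingredient is the \emph{score bound}
\[
\left\|\tfrac{1}{|\mathcal{A}_h|}\sum_{j\in\mathcal{A}_h}\nabla\mathcal{L}_j(\bar{\ubeta}^{\mathcal{A}_h})\right\|_{\infty} = O_P\!\left(\sqrt{\tfrac{\log p}{|\mathcal{A}_h|n}}\right).
\]
Writing $\nabla\mathcal{L}_j(\bar{\ubeta}^{\mathcal{A}_h}) = \widehat{\Sigma}_j(\bar{\ubeta}^{\mathcal{A}_h}-\ubeta^{(\mathrm{S}_j)}) - \tfrac{1}{n}\mathbb{X}^{(\mathrm{S}_j)\top}\epsilon^{(\mathrm{S}_j)}$ with $\widehat{\Sigma}_j = \tfrac{1}{n}\mathbb{X}^{(\mathrm{S}_j)\top}\mathbb{X}^{(\mathrm{S}_j)}$ and splitting $\widehat{\Sigma}_j=\Sigma+(\widehat{\Sigma}_j-\Sigma)$, the leading bias $\Sigma\,\tfrac{1}{|\mathcal{A}_h|}\sum_j(\bar{\ubeta}^{\mathcal{A}_h}-\ubeta^{(\mathrm{S}_j)})$ vanishes identically by the definition $\bar{\ubeta}^{\mathcal{A}_h}=\tfrac{1}{|\mathcal{A}_h|}\sum_j\ubeta^{(\mathrm{S}_j)}$. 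The noise average concentrates at rate $\sqrt{\log p/(|\mathcal{A}_h|n)}$ under the Gaussian noise of C2, while the residual $\tfrac{1}{|\mathcal{A}_h|}\sum_j(\widehat{\Sigma}_j-\Sigma)(\bar{\ubeta}^{\mathcal{A}_h}-\ubeta^{(\mathrm{S}_j)})$, being an average of independent mean-zero terms across datasets, is controlled by the sub-Gaussian covariance concentration of C1.

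Second, I would verify the structural inputs of \cite{wang2017efficient}: restricted strong convexity of the local surrogate loss $\mathcal{L}_v$ and the entrywise concentration $\|\widehat{\Sigma}_j-\Sigma\|_{\infty}=O_P(\sqrt{\log p/n})$, both of which follow from the sub-Gaussian design of C1. With these in hand, the update
\[
\widehat{\ubeta}^{(t+1)} = \argmin_{\beta}\Big\{\mathcal{L}_v(\beta) + \big\langle g(\widehat{\ubeta}^{(t)}),\beta\big\rangle + \lambda_{t+1}\|\beta\|_1\Big\},\qquad g(\beta):=\tfrac{1}{|\mathcal{A}_h|}\sum_{j}\nabla\mathcal{L}_j(\beta)-\nabla\mathcal{L}_v(\beta),
\]
admits the contraction $\|\widehat{\ubeta}^{(t+1)}-\bar{\ubeta}^{\mathcal{A}_h}\|_1 \le \rho\,\|\widehat{\ubeta}^{(t)}-\bar{\ubeta}^{\mathcal{A}_h}\|_1 + c\,\bar{s}\sigma_{\epsilon}\sqrt{\log p/(|\mathcal{A}_h|n)}$ with $\rho \asymp s\sqrt{\log p/n}<1$. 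The prescribed schedule $\lambda_t = c_{\lambda,1}\sqrt{\log p/(|\mathcal{A}_h|n)} + \sqrt{\log p/n}\,(c_{\lambda,2}s\sqrt{\log p/n})^t$ is exactly the per-iteration error envelope: its floor matches the statistical error and its geometric part matches the optimization error. Iterating until the geometric term is dominated yields both stated bounds, the $\ell_1$ bound directly and the $\ell_2$ bound via the restricted strong convexity of $\mathcal{L}_v$, with the explicit constants $49$ and $13$ carried over from the oracle inequalities of \cite{wang2017efficient}.

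Third, for the gradient-difference bound I would exploit that $g$ is affine in its argument, giving the exact identity $g(\widehat{\ubeta}^{(t)}) - g(\bar{\ubeta}^{\mathcal{A}_h}) = (\bar{\widehat{\Sigma}}-\widehat{\Sigma}_v)(\widehat{\ubeta}^{(t)}-\bar{\ubeta}^{\mathcal{A}_h})$, where $\bar{\widehat{\Sigma}}=\tfrac{1}{|\mathcal{A}_h|}\sum_j\widehat{\Sigma}_j$. Bounding $\|(\bar{\widehat{\Sigma}}-\widehat{\Sigma}_v)u\|_{\infty}\le\|\bar{\widehat{\Sigma}}-\widehat{\Sigma}_v\|_{\infty}\|u\|_1$ and applying the refined truncated tail bound for the entrywise Hessian fluctuation from \cite{wang2017efficient} produces the linear coefficient $\tfrac{\log(np)}{n}\sqrt{\log(2p)/n}$ together with the residual quadratic term $C(\log(np)/n)^{2/3}\|\bar{\ubeta}^{\mathcal{A}_h}-\widehat{\ubeta}^{(t)}\|_1^2$.

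The main obstacle is the score bound of the first step. In \cite{wang2017efficient} all datasets share a common coefficient, so the bias term is simply absent; here the per-dataset shifts $\bar{\ubeta}^{\mathcal{A}_h}-\ubeta^{(\mathrm{S}_j)}$ are generally large, and the whole argument rests on the fact that averaging forces their first moment to vanish exactly while their interaction with the Hessian fluctuation $\widehat{\Sigma}_j-\Sigma$ averages down by a further $1/\sqrt{|\mathcal{A}_h|}$. Once this is secured, the remainder transfers essentially verbatim from the homogeneous-coefficient distributed Lasso, since only C1 and C2 are used and the contraction mechanism is insensitive to the value of the target.
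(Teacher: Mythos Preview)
Your approach is essentially the same as the paper's: reduce to the distributed sparse learning analysis of \cite{wang2017efficient}, applying their Theorem~6 for the $\ell_1/\ell_2$ iterate bounds and their Lemma~8 for the gradient-difference inequality. The paper's own proof is extremely terse --- it literally cites these two results and invokes the covariance concentration $\|\hat\Sigma-\Sigma\|_\infty=O_P(\sqrt{\log p/n})$ to clean up constants --- and does not spell out how the heterogeneous-coefficient setting is absorbed.

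Your proposal is more careful on exactly this point. You correctly isolate the score bound $\|\tfrac{1}{|\mathcal{A}_h|}\sum_j\nabla\mathcal{L}_j(\bar\ubeta^{\mathcal{A}_h})\|_\infty$ as the one place where the argument of \cite{wang2017efficient} does not apply verbatim, and your decomposition via $\widehat\Sigma_j=\Sigma+(\widehat\Sigma_j-\Sigma)$ together with the cancellation $\sum_j(\bar\ubeta^{\mathcal{A}_h}-\ubeta^{(\mathrm{S}_j)})=0$ is the right mechanism. The paper's proof simply asserts that Theorem~6 of \cite{wang2017efficient} delivers the bound with $\bar\ubeta^{\mathcal{A}_h}$ in place of a common truth, without justifying this substitution; your treatment fills that gap. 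One small caveat: your residual term $\tfrac{1}{|\mathcal{A}_h|}\sum_j(\widehat\Sigma_j-\Sigma)(\bar\ubeta^{\mathcal{A}_h}-\ubeta^{(\mathrm{S}_j)})$ concentrates at rate $h\sqrt{\log p/(|\mathcal{A}_h|n)}$ rather than $\sqrt{\log p/(|\mathcal{A}_h|n)}$, so an implicit boundedness of $h$ is needed to match the stated constants --- but the paper's proof is equally silent on this, and in the regime of interest (Condition~C3) the term is anyway lower order.
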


\begin{proof}
	By applying Theorem 6 of \cite{wang2017efficient}, we obtain that with probability at least $1-2x$,
	\begin{align*}
		\left\|\widehat{\ubeta}^{(t)} - \bar{\ubeta}^{\mathcal{A}_h}\right\|_1 &\leq \frac{48 \bar{s} \sigma_{\epsilon} \|\mathrm{diag}(\Sigma)\|_{\infty}}{\|\hat{\Sigma}\|_{\infty}} \sqrt{\frac{\log(p/x)}{|\mathcal{A}_h|n}} + C\left(\sqrt{\frac{\log(2p/x)}{n}}\right)^{t}\|\widehat{\boldsymbol{\ubeta}}_0 - \bar{\ubeta}^{\mathcal{A}_h}\|_1,\\
		\left\|\widehat{\ubeta}^{(t)} - \bar{\ubeta}^{\mathcal{A}_h}\right\|_2 &\leq 12 \bar{s} \sigma_{\epsilon} \sqrt{\frac{\log(p/x)}{|\mathcal{A}_h|n}} + C\left(\sqrt{\frac{\log(2p/x)}{n}}\right)^{t}\|\widehat{\boldsymbol{\ubeta}}_0 - \bar{\ubeta}^{\mathcal{A}_h}\|_2
	\end{align*}
	which implies the first inequality for large enough $t$. 		By Theorem 9.3 in \cite{fan2020statistical},
	\begin{align*}
		\|\hat{\Sigma}-\Sigma\|_{\infty} = O_P\left(\sqrt{\frac{\log p}{n}}\right),
	\end{align*}
	which completes the proof of first part. The second part is a direct consequence of Lemma 8 in \cite{wang2017efficient}.
\end{proof}
	
	\begin{lemma}\label{Lemma:1}
		Assume conditions C1-C4 hold. Then, for sufficiently large $t$,
		\begin{align}\label{lem1}
			\frac{1}{n}\|\mathbb{X}(\hat{\ubeta}^{\mathcal{A}_h}-\bar{\ubeta}^{\mathcal{A}_h})\|_{2}^2=O_P\left(\frac{\bar{s}\log(p)}{|\mathcal{A}_h|n}\right),
		\end{align}
		where $\bar{s} = \|\bar{\ubeta}^{\mathcal{A}_h}\|_0$.
	\end{lemma}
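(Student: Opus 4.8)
The plan is to reduce the in-sample prediction error to a quadratic form in the target Gram matrix and to split off a population term from a stochastic remainder. Write $\delta := \hat{\ubeta}^{\mathcal{A}_h} - \bar{\ubeta}^{\mathcal{A}_h}$; for sufficiently large $t$ this coincides with $\widehat{\ubeta}^{(t)} - \bar{\ubeta}^{\mathcal{A}_h}$, so the $\ell_1$ and $\ell_2$ bounds of Lemma~\ref{lem:dl} are available. A key structural observation is that, for the fixed oracle index set $\mathcal{A}_h$, the EDSL iteration of Algorithm~\ref{alg:edsl} uses only the source samples $\{\mathbb{X}^{(\mathrm{S}_i)},\mathbb{Y}^{(\mathrm{S}_i)}\}_{i\in\mathcal{A}_h}$, so $\delta$ is a function of the source data alone and is independent of the target design $\mathbb{X}$. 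Writing $\hat{\Sigma}=\mathbb{X}^{\top}\mathbb{X}/n$, we have $\tfrac1n\|\mathbb{X}\delta\|_2^2 = \delta^{\top}\hat{\Sigma}\delta = \delta^{\top}\Sigma\delta + \delta^{\top}(\hat{\Sigma}-\Sigma)\delta$, and it suffices to control the two summands.

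For the population term, Condition (C1) bounds the largest eigenvalue of $\Sigma$, whence $\delta^{\top}\Sigma\delta \le \lambda_{\max}(\Sigma)\,\|\delta\|_2^2 = O(\|\delta\|_2^2)$. The $\ell_2$ bound of Lemma~\ref{lem:dl} gives $\|\delta\|_2^2 = O_P\!\left(\bar{s}\log p/(|\mathcal{A}_h|n)\right)$, which is exactly the asserted rate; this is the leading contribution.

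The remaining task --- and the main obstacle --- is to show the deviation $\delta^{\top}(\hat{\Sigma}-\Sigma)\delta$ is of smaller order. The route consistent with the toolkit already used for Lemma~\ref{lem:dl} combines H\"older's inequality, the entrywise concentration $\|\hat{\Sigma}-\Sigma\|_{\infty}=O_P(\sqrt{\log p/n})$ (Theorem~9.3 of \cite{fan2020statistical}), and the $\ell_1$ bound $\|\delta\|_1 = O_P(\bar{s}\sqrt{\log p/(|\mathcal{A}_h|n)})$ from Lemma~\ref{lem:dl}:
\[
\left|\delta^{\top}(\hat{\Sigma}-\Sigma)\delta\right| \le \|\hat{\Sigma}-\Sigma\|_{\infty}\,\|\delta\|_1^2 = O_P\!\left(\frac{\bar{s}^2 (\log p)^{3/2}}{|\mathcal{A}_h|\,n^{3/2}}\right),
\]
which is $o_P$ of the leading term precisely when $\bar{s}\sqrt{\log p/n}\to 0$, the usual sparse-scaling regime. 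Adding the two bounds gives the claim.

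A cleaner variant that avoids any extra scaling assumption exploits the independence of $\delta$ and $\mathbb{X}$ noted above. Conditionally on the source data, $\delta$ is deterministic, so the scalars $\mathbb{X}_i^{\top}\delta$ are i.i.d., mean zero (Condition (C1)), sub-Gaussian, with variance $\delta^{\top}\Sigma\delta$; a Bernstein/Hanson--Wright bound for the resulting sub-exponential average then yields $\tfrac1n\|\mathbb{X}\delta\|_2^2 = \delta^{\top}\Sigma\delta\,(1+O_P(n^{-1/2}))$. Combined with $\delta^{\top}\Sigma\delta \le \lambda_{\max}(\Sigma)\|\delta\|_2^2$, the $\ell_2$ bound of Lemma~\ref{lem:dl}, and a union bound over the high-probability events of the two lemmas, this delivers $\tfrac1n\|\mathbb{X}\delta\|_2^2 = O_P(\bar{s}\log p/(|\mathcal{A}_h|n))$ directly, and is the version I would ultimately write up.
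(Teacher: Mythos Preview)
Your first route coincides with the paper's proof: the same quadratic-form decomposition $\delta^{\top}\hat{\Sigma}\delta = \delta^{\top}\Sigma\delta + \delta^{\top}(\hat{\Sigma}-\Sigma)\delta$, the same eigenvalue bound on the population term via the $\ell_2$ estimate of Lemma~\ref{lem:dl}, and the same H\"older step combined with $\|\hat{\Sigma}-\Sigma\|_\infty=O_P(\sqrt{\log p/n})$ and the $\ell_1$ estimate of Lemma~\ref{lem:dl} for the remainder. The only place you diverge is in treating $\bar{s}\sqrt{\log p/n}\to 0$ as an ``extra scaling assumption''; the paper instead derives it from the stated hypotheses, arguing that under (C4) each $\|\Delta^{(S_j)}\|_0\le\gamma_1^{-1}h$ for $j\in\mathcal{A}_h$, whence $\bar{s}\le\gamma_1^{-1}|\mathcal{A}_h|h$, after which (C3) gives $|\mathcal{A}_h|h\sqrt{\log(p\vee n)/n}\to 0$. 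So the H\"older route already closes under (C1)--(C4) as written, and you need not fall back on the independence argument to finish.

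That said, your second variant is a genuinely different alternative not in the paper: exploiting that the EDSL output depends only on the source samples and is therefore independent of the target design $\mathbb{X}$, you obtain concentration of $\tfrac{1}{n}\|\mathbb{X}\delta\|_2^2$ around $\delta^{\top}\Sigma\delta$ directly, without routing $\bar{s}$ through (C3)--(C4). This buys robustness to the sparsity bookkeeping --- the implication $\|\Delta^{(S_j)}\|_0\le\gamma_1^{-1}h$ is a little delicate when the supports of $\ubeta^{\ast}$ and $\ubeta^{(S_j)}$ overlap --- at the price of making the independence of source and target samples explicit.
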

	
	\begin{proof}
		By Theorem 9.3 in \cite{fan2020statistical},
		\begin{align}\label{lem1:eq1}
			\|\hat{\Sigma}-\Sigma\|_{\infty} = O_P\left(\sqrt{\frac{\log p}{n}}\right).
		\end{align}
		Denote $\mathcal{Z}=\hat{\ubeta}^{\mathcal{A}_h}-\bar{\ubeta}^{\mathcal{A}_h}$, under condition C1, by Lemma \ref{lem:dl} and (\ref{lem1:eq1}), it holds that
		\begin{align}\label{lem1:seq1}
			\begin{split}
			\frac{1}{n}\|\mathbb{X}(\hat{\ubeta}^{\mathcal{A}_h}-\bar{\ubeta}^{\mathcal{A}_h})\|_{2}^2&=\mathcal{Z}\Sigma\mathcal{Z}+\mathcal{Z}\l(\hat{\Sigma}-\Sigma\r)\mathcal{Z}\\
			&\leq \lambda_{\max}(\Sigma)\|\mathcal{Z}\|_2^2+\l\|\hat{\Sigma}-\Sigma\r\|_{\infty}\|\mathcal{Z}\|_1^2\\
			&=O_P\l(\bar{s}\frac{\log(p)}{|\mathcal{A}_h|n} +\bar{s}^2\frac{\log(p)}{|\mathcal{A}_h|n}\sqrt{\frac{\log p}{n}} \r).
			\end{split}
		\end{align}
		By the defination of $\mathcal{A}_h$, it holds that under condition C4 
		$$
		\|\Delta^{(\mathrm{S}_j)}\|_0\leq \gamma_1^{-1}h,\quad j\in \mathcal{A}_h.
		$$
		Thus $\bar{s} \leq \gamma_1^{-1}|\mathcal{A}_h|h$. By condition C3,
		we have that
		$$
		\bar{s}^2\frac{\log(p)}{|\mathcal{A}_h|n}\sqrt{\frac{\log p}{n}}=o\l(\bar{s}\frac{\log(p)}{|\mathcal{A}_h|n} \r).
		$$
		By (\ref{lem1:seq1}), the proof is completed.
	\end{proof}
	\begin{lemma}\label{Lemma:2}
		Assume conditions C1-C2 hold. Further suppose that 
		$$
		\frac{1}{n} \|\mathbb{X}\epsilon\|_\infty \leq  \frac{\lambda_{\Delta}}{2}  \quad \text{and} \quad \frac{1}{\sqrt{n}} \|\epsilon\|_\infty \leq \frac{\lambda_{e}}{2},
		$$
		then
		\begin{align}\label{lem2}
			\|\hat{\Delta}^{\mathcal{A}_h}-\Delta^{\mathcal{A}_h}\|_{2}+ \|\hat{e}^{\mathcal{A}_h}-e^*\|_{2}=O_P\left(\sqrt{\frac{\bar{s}\log p}{|\mathcal{A}_h|n}}+ \sqrt{\frac{s_{\Delta}\log p}{n}}+\sqrt{\frac{k\log n}{n}}\right).
		\end{align}
	\end{lemma}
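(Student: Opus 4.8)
The plan is to recast the transfer-regression step \eqref{rg} as a single robust Lasso problem on the residual left after removing the aggregated source estimate, and then to run the standard ``basic inequality, cone condition, restricted eigenvalue'' argument of \cite{nguyen2012robust}, the only nonstandard ingredient being the extra noise injected by the source-aggregation error. Writing $R := \mathbb{Y}-\mathbb{X}\hat{\ubeta}^{D(\mathcal{A}_h)}$ and using $\Delta^{\mathcal{A}_h}=\ubeta^*-\bar{\ubeta}^{\mathcal{A}_h}$, I decompose
\begin{equation*}
R = \mathbb{X}\Delta^{\mathcal{A}_h} + e^* + \tilde{\epsilon},\qquad \tilde{\epsilon} := \mathbb{X}\bigl(\bar{\ubeta}^{\mathcal{A}_h}-\hat{\ubeta}^{D(\mathcal{A}_h)}\bigr)+\epsilon .
\end{equation*}
Thus $(\hat{\Delta}^{\mathcal{A}_h},\hat{e}^{\mathcal{A}_h})$ is exactly the robust Lasso estimator targeting $(\Delta^{\mathcal{A}_h}, e^*/\sqrt{n})$ (the corruption enters the optimization at scale $\sqrt{n}$) under the effective noise $\tilde{\epsilon}$. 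The point of this reformulation is that the source error is absorbed into $\tilde{\epsilon}$, whose deterministic part $\mathbb{X}(\bar{\ubeta}^{\mathcal{A}_h}-\hat{\ubeta}^{D(\mathcal{A}_h)})$ is controlled in prediction norm by Lemma \ref{Lemma:1}: setting $\delta_n:=\sqrt{\bar{s}\log p/(|\mathcal{A}_h|n)}$, we have $n^{-1/2}\|\mathbb{X}(\bar{\ubeta}^{\mathcal{A}_h}-\hat{\ubeta}^{D(\mathcal{A}_h)})\|_2 = O_P(\delta_n)$.

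Next I would write the basic inequality. With $u:=\hat{\Delta}^{\mathcal{A}_h}-\Delta^{\mathcal{A}_h}$, $w:=\hat{e}^{\mathcal{A}_h}-e^*/\sqrt{n}$, $T:=\mathrm{S}(\Delta^{\mathcal{A}_h})$ and $E:=\mathrm{S}(e^*)$, optimality against the feasible point together with expanding the square gives
\begin{equation*}
\frac{1}{2n}\|\mathbb{X}u+\sqrt{n}\,w\|_2^2 \leq \frac{1}{n}\bigl\langle\tilde{\epsilon},\,\mathbb{X}u+\sqrt{n}\,w\bigr\rangle + \lambda_{\Delta}\bigl(\|\Delta^{\mathcal{A}_h}\|_1-\|\hat{\Delta}^{\mathcal{A}_h}\|_1\bigr) + \lambda_{e}\bigl(\|e^*/\sqrt{n}\|_1-\|\hat{e}^{\mathcal{A}_h}\|_1\bigr).
\end{equation*}
For the cross term I split $\tilde{\epsilon}=\mathbb{X}\zeta+\epsilon$ with $\zeta:=\bar{\ubeta}^{\mathcal{A}_h}-\hat{\ubeta}^{D(\mathcal{A}_h)}$: the $\epsilon$-part is handled by H\"older together with the hypotheses $n^{-1}\|\mathbb{X}\epsilon\|_\infty\leq\lambda_{\Delta}/2$ and $n^{-1/2}\|\epsilon\|_\infty\leq\lambda_e/2$, producing $\tfrac{\lambda_{\Delta}}{2}\|u\|_1+\tfrac{\lambda_e}{2}\|w\|_1$, while the $\mathbb{X}\zeta$-part is handled by Cauchy--Schwarz and the bounded largest eigenvalue of $\hat{\Sigma}$ (Condition C1), giving $O_P(\delta_n)(\|u\|_2+\|w\|_2)$. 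Combining these with the support decompositions $\|\Delta^{\mathcal{A}_h}\|_1-\|\hat{\Delta}^{\mathcal{A}_h}\|_1\leq\|u_T\|_1-\|u_{T^c}\|_1$ and $\|e^*/\sqrt{n}\|_1-\|\hat{e}^{\mathcal{A}_h}\|_1\leq\|w_E\|_1-\|w_{E^c}\|_1$ and rearranging yields
\begin{equation*}
\frac{1}{2n}\|\mathbb{X}u+\sqrt{n}\,w\|_2^2 + \frac{\lambda_{\Delta}}{2}\|u_{T^c}\|_1 + \frac{\lambda_e}{2}\|w_{E^c}\|_1 \leq \frac{3\lambda_{\Delta}}{2}\|u_T\|_1 + \frac{3\lambda_e}{2}\|w_E\|_1 + C\delta_n(\|u\|_2+\|w\|_2),
\end{equation*}
which is precisely the cone membership required by the extended RE condition (Definition \ref{ere}), with $\lambda=\lambda_e/\lambda_{\Delta}$ and slack $a_n$ of order $\delta_n(\|u\|_2+\|w\|_2)/\lambda_{\Delta}$.

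Finally I would apply Definition \ref{ere} to lower-bound the left-hand prediction norm by $\kappa_l(\|u\|_2+\|w\|_2)$, and use $\|u_T\|_1\leq\sqrt{s_{\Delta}}\,\|u\|_2$ and $\|w_E\|_1\leq\sqrt{k}\,\|w\|_2$, so that the display collapses to a quadratic inequality $\tfrac{\kappa_l^2}{2}(\|u\|_2+\|w\|_2)^2 \lesssim \bigl(\sqrt{s_{\Delta}}\lambda_{\Delta}+\sqrt{k}\lambda_e+\delta_n\bigr)(\|u\|_2+\|w\|_2)$. Solving gives $\|u\|_2+\|w\|_2 \lesssim \sqrt{s_{\Delta}}\lambda_{\Delta}+\sqrt{k}\lambda_e+\delta_n$, and taking $\lambda_{\Delta},\lambda_e$ at their smallest admissible values, whose orders $\lambda_{\Delta}\asymp\sqrt{\log p/n}$ and $\lambda_e\asymp\sqrt{\log n/n}$ hold with probability tending to one under Condition C2 by the usual maximal-inequality tail bounds on $\|\mathbb{X}^\top\epsilon\|_\infty$ and $\|\epsilon\|_\infty$, reproduces the three-term rate \eqref{lem2}. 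The step I expect to be the main obstacle is the bookkeeping of $\mathbb{X}\zeta$: since it lies in the column space of $\mathbb{X}$ it is correlated with $\mathbb{X}u$ and cannot be treated as independent sub-Gaussian noise, so the clean bound on $n^{-1}\|\mathbb{X}^\top\tilde{\epsilon}\|_\infty$ is unavailable and one must route it through Cauchy--Schwarz and Lemma \ref{Lemma:1}, then verify via Condition C3 that the resulting slack $\delta_n(\|u\|_2+\|w\|_2)$ is genuinely lower order and can be absorbed into the left-hand side rather than inflating the final rate.
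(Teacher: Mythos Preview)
Your strategy is the same as the paper's: basic inequality from optimality, decompose the penalty terms over $T,T^c$ and $E,E^c$, isolate the source-aggregation error via Lemma~\ref{Lemma:1}, verify the cone condition, apply the extended RE, and solve the resulting quadratic. The overall architecture is correct.

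There is one step that does not go through as written. You bound the cross term $\tfrac{1}{n}\langle \mathbb{X}\zeta,\,\mathbb{X}u+\sqrt{n}\,w\rangle$ by Cauchy--Schwarz and then claim this is $O_P(\delta_n)(\|u\|_2+\|w\|_2)$ ``by the bounded largest eigenvalue of $\hat{\Sigma}$ (Condition C1)''. But Condition C1 bounds the eigenvalues of the \emph{population} covariance $\Sigma$, not of $\hat{\Sigma}=\mathbb{X}^\top\mathbb{X}/n$; in the high-dimensional regime $p\gg n$ the largest eigenvalue of $\hat{\Sigma}$ is of order $p/n$, so the passage from $\tfrac{1}{\sqrt{n}}\|\mathbb{X}u+\sqrt{n}\,w\|_2$ to $C(\|u\|_2+\|w\|_2)$ is not available at this stage (the cone membership has not yet been established, so restricted-eigenvalue-type upper bounds are not yet usable either). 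The paper handles exactly this term differently: after Cauchy--Schwarz it applies Young's inequality,
\[
\tfrac{1}{n}\langle \mathbb{X}\zeta,\,\mathbb{X}u+\sqrt{n}\,w\rangle \;\le\; \tfrac{1}{4n}\|\mathbb{X}u+\sqrt{n}\,w\|_2^2 \;+\; \tfrac{1}{n}\|\mathbb{X}\zeta\|_2^2,
\]
absorbs the first piece into the left-hand side (leaving a $\tfrac{1}{4n}$ coefficient there), and bounds the second by $C\,\bar{s}\log p/(|\mathcal{A}_h|n)=C\delta_n^2$ via Lemma~\ref{Lemma:1}. This produces a \emph{constant} additive slack $a_n\asymp\delta_n^2$ in the cone inequality, rather than one that depends on the unknown $(\|u\|_2+\|w\|_2)$, and avoids any operator-norm requirement on $\mathbb{X}$. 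With this single replacement your argument becomes identical to the paper's and the stated rate follows.
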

	
\begin{proof}
	By the definition of $\hat{\Delta}^{\mathcal{A}_h}$ and $\hat{e}^{\mathcal{A}_h}$,
	\begin{equation}\label{eq11.19}
		\begin{aligned}
			&\frac{1}{2n} \left\| \mathbb{Y}- \mathbb{X} \hat{\ubeta}^{\mathcal{A}_h} - \mathbb{X} \hat{\Delta}^{\mathcal{A}_h} - \sqrt{n} \hat{e}^{\mathcal{A}_h} \right\|_2^2 \\
			&\quad+ \lambda_{\Delta} \left\| \hat{\Delta}^{\mathcal{A}_h} \right\|_1 + \lambda_{e} \left\| \hat{e}^{\mathcal{A}_h} \right\|_1 \\
			&\leq \frac{1}{2n} \left\|\mathbb{Y} - \mathbb{X} \hat{\ubeta}^{\mathcal{A}_h} - \mathbb{X} \Delta^{\mathcal{A}_h} - \sqrt{n} e^* \right\|_2^2 \\
			&\quad+ \lambda_{\Delta} \left\| \Delta^{\mathcal{A}_h} \right\|_1 + \lambda_{e} \left\| e^* \right\|_1.
		\end{aligned}
	\end{equation}
	
	Define $z=\hat{\Delta}^{\mathcal{A}_h}-\Delta^{\mathcal{A}_h}$ and $v=\hat{e}^{\mathcal{A}_h}-e^*$, it holds that:
	\begin{equation}\label{eq11.20}
		\begin{aligned}
			\left\| \mathbb{Y}- \mathbb{X} \hat{\ubeta}^{\mathcal{A}_h} - \mathbb{X} \hat{\Delta}^{\mathcal{A}_h} - \sqrt{n} e^* \right\|_2^2 =&
			\left\| \mathbb{Y}- \mathbb{X} \hat{\ubeta}^{\mathcal{A}_h} -  \mathbb{X} \Delta^{\mathcal{A}_h} -\sqrt{n} e^* \right\|_2^2+ \left\|  \mathbb{X}z + \sqrt{n} v \right\|_2^2 \\
			&- 2 \langle  \mathbb{Y}- \mathbb{X}\hat{\ubeta}^{\mathcal{A}_h} -  \mathbb{X} \Delta^{\mathcal{A}_h} -\sqrt{n} e^*,  \mathbb{X} z+ \sqrt{n}v  \rangle .
		\end{aligned}
	\end{equation}
	
	Moreover,
	\begin{align*}
		\left\|\Delta^{\mathcal{A}_h}\right\|_1 - \left\|\widehat{\Delta}\right\|_1 &= \left\|\Delta^{\mathcal{A}_h}\right\|_1 - \left\|\Delta^{\mathcal{A}_h} + z\right\|_1 \\
		&= \left\|\Delta^{\mathcal{A}_h}\right\|_1 - \left\|\Delta^{\mathcal{A}_h} + z_T\right\|_1 - \left\|z_{T^c}\right\|_1 \\
		&\leq \left\|z_T\right\|_1 - \left\|z_{T^c}\right\|_1,
	\end{align*}
	similarly,
	\begin{align*}
		\left\|e\right\|_1 - \left\|\widehat{e}\right\|_1 \leq \left\|v_E\right\|_1 - \left\|v_{E^c}\right\|_1.
	\end{align*}
	
	Combining these pieces together yields:
	\vspace{-10pt}
	\begin{equation}
		\begin{aligned}
			\frac{1}{2n} \|\mathbb{X}z + \sqrt{n} v\|_2^2 
			&\leq \frac{1}{n} \langle \mathbb{Y} - \mathbb{X} \hat{\ubeta}^{\mathcal{A}_h} - \mathbb{X} \Delta^{\mathcal{A}_h} - \sqrt{n} e^*, \mathbb{X}z + \sqrt{n} v \rangle \\
			&\quad+ \lambda_{\Delta} (\|z_T\|_1 - \|z_{T^c}\|_1) + \lambda_{e} (\|v_E\|_1 - \|v_{E^c}\|_1) \\
			&\leq \frac{1}{n} \|\mathbb{X}^\top \epsilon\|_\infty \|z\|_1 + \frac{1}{\sqrt{n}} \|\epsilon\|_\infty \|v\|_1 \\
			&\quad+ \frac{1}{4n} \|\mathbb{X}z + \sqrt{n} v\|_2^2 + \frac{1}{n} \|\mathbb{X}(\bar{\ubeta}^{\mathcal{A}_h}-\hat{\ubeta}^{\mathcal{A}_h})\|_2^2 \\
			&\quad+ \lambda_{\Delta} (\|z_T\|_1 - \|z_{T^c}\|_1) + \lambda_{e} (\|v_E\|_1 - \|v_{E^c}\|_1) \\
			&\leq \left(\frac{1}{n} \|\mathbb{X}^\top \epsilon\|_\infty + \lambda_{\Delta}\right) \|z_T\|_1 - \left(\lambda_{\Delta} - \frac{1}{n} \|\mathbb{X}^\top \epsilon\|_\infty \right) \|z_{T^c}\|_1 \\
			&\quad+ \frac{1}{4n} \|\mathbb{X}z + \sqrt{n} v\|_2^2 + \frac{1}{n} \|\mathbb{X}(\bar{\ubeta}^{\mathcal{A}_h}-\hat{\ubeta}^{\mathcal{A}_h})\|_2^2 \\
			&\quad+ \left(\frac{1}{\sqrt{n}} \|\epsilon\|_\infty  + \lambda_{e}\right) \|v_E\|_1 - \left(\lambda_{e} - \frac{1}{\sqrt{n}} \|\epsilon\|_\infty \right) \|v_{E^c}\|_1.
		\end{aligned}
	\end{equation}
	
	By the choice of $\lambda_{\Delta}$ and $\lambda_{e}$ and Lemma \ref{Lemma:1}, with probability approaching 1 as $n \to \infty$, it holds that,
	\begin{align}\label{l2}
		\begin{split}
			\frac{1}{4n} \|\mathbb{X}z + \sqrt{n} v\|_2^2 &\leq \frac{3}{2}\lambda_{\Delta}\|z_T\|_1-
			\frac{1}{2}\lambda_{\Delta} \|z_{T^c}\|_1+\frac{3}{2}\lambda_{e}\|v_E\|_1- \frac{1}{2}\lambda_{e} \|v_{E^c}\|_1\\
			&\quad + \frac{1}{n} \|\mathbb{X}(\bar{\ubeta}^{\mathcal{A}_h}-\hat{\ubeta}^{\mathcal{A}_h})\|_2^2\\
			&\leq \frac{3}{2}\lambda_{\Delta}\|z_T\|_1-
			\frac{1}{2}\lambda_{\Delta} \|z_{T^c}\|_1+\frac{3}{2}\lambda_{e}\|v_E\|_1- \frac{1}{2}\lambda_{e} \|v_{E^c}\|_1\\
			&\quad + C\frac{\bar{s}\log p}{|\mathcal{A}_h|n},
		\end{split}   
	\end{align}
	for some universal constant $C$. Then we have that with probability approaching 1 as $n \to \infty$,
	\begin{align}\label{rec0}
		\begin{split}
			\lambda_{\Delta} \|z\|_1&\leq  4\lambda_{\Delta}\|z_T\|_1 + 3\lambda_{e}\|v_E\|_1
			+  C\frac{\bar{s}\log p}{|\mathcal{A}_h|n}\\
			&\leq  4\sqrt{s_{\Delta}}\lambda_{\Delta}\|z\|_2 + 3\sqrt{k}\lambda_{e}\|v_E\|_2
			+  C\frac{\bar{s}\log p}{|\mathcal{A}_h|n}.
		\end{split}
	\end{align}
	
	By the extend RE condition(C1), 
	$$
	\frac{1}{4n} \|\mathbb{X}z + \sqrt{n} v\|_2^2 \geq \kappa_l^2(\| z \|_2 + \| f \|_2)^2 +  C\frac{\bar{s}\log p}{|\mathcal{A}_h|n}\sqrt{\frac{\log p}{n}}.
	$$
	
	Hence, with probability approaching 1 as $n \to \infty$,
	\begin{align*}
		\kappa_l^2(\|z\|_2 + \|v\|_2)^2 &\leq 4\lambda_{\Delta}\|z_T\|_1 + 4\lambda_{e}\|v_S\|_1 +  C\frac{\bar{s}\log p}{|\mathcal{A}_h|n}\\
		&\leq 4\lambda_{\Delta}\sqrt{s_{\Delta}}\|z\|_2 + 4\lambda_{e}\sqrt{k}\|v\|_2+  C\frac{\bar{s}\log p}{|\mathcal{A}_h|n}.
	\end{align*}
	Thus
	$$
	\|z\|_2 + \|v\|_2 \leq 4\kappa_l^{-2}\l[\max\l\{\lambda_{\Delta}\sqrt{s_{\Delta}}, \lambda_{e}\sqrt{k}\r\}+C\sqrt{\frac{\bar{s}\log p}{|\mathcal{A}_h|n}}\r].
	$$
		Notice that under condition C1-C2, it holds with probability to 1 that,
	\begin{align*}
	\frac{\|\mathbb{X}\epsilon\|_{\infty}}{n} &\leq 2\sigma_{\epsilon}K_{clm}\sqrt{\frac{\log p}{n}},\\
		\frac{\|\epsilon\|_{\infty}}{\sqrt{n}} &\leq 2\sigma_{\epsilon}\sqrt{\frac{\log n}{n}},
	\end{align*}
	which completes our proof.
\end{proof}
		\begin{lemma}\label{Lemma:5}
		Assume conditions C1-C4 hold, then for large enough $t$, it holds with probability approach 1 as $n \to \infty$,
		$$
		\l\|\hat{\ubeta}^{\mathcal{A}_h} -\bar{\ubeta}^{\mathcal{A}_h} \r\|_{\infty}\leq 3\sigma_{\epsilon}K_{clm}\sqrt{\frac{\log p}{n}}+4\frac{1}{\sqrt{\bar{C}_{\min}}}\sigma_{\epsilon}\lambda_{t}+\l\|\l(\frac{\mathbb{X}_{\bar{T}_h}^{\mathrm{S}_v\top} \mathbb{X}_{\bar{T}_h}^{\mathrm{S}_v}}{n}\r)^{-1}\mathrm{sign}(\bar{\ubeta}^{(t)}_{(\bar{T}_h)})\r\|_{\infty}\lambda_{t},
		$$
		where $\hat{\Sigma}^{\mathrm{S}_v}=\mathbb{X}^{\mathrm{S}_v\top}\mathbb{X}^{\mathrm{S}_v}/n$, and 
		$$
		\bar{C}_{\min}:=\lambda_{\min}\left(\frac{\mathbb{X}_{\bar{T}_h}^{\mathrm{S}_v\top} \mathbb{X}_{\bar{T}_h}^{\mathrm{S}_v}}{n}\right).
		$$
	\end{lemma}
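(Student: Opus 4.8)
The plan is to convert the iterative EDSL estimator into a KKT/fixed-point description and then run a primal--dual witness (PDW) argument, since the target is an $\ell_\infty$ guarantee with implicit support control. For large $t$ the iterate $\hat{\ubeta}^{\mathcal{A}_h}=\widehat{\ubeta}^{(t+1)}$ minimizes $\widetilde{\mathcal{L}}_v(\beta,\widehat{\ubeta}^{(t)})+\lambda_{t}\|\beta\|_1$, whose zero-subgradient condition, after adding and subtracting $\nabla\mathcal{L}_v(\widehat{\ubeta}^{(t)})$, reads
$$
\frac{\mathbb{X}^{\mathrm{S}_v\top}\mathbb{X}^{\mathrm{S}_v}}{n}\left(\hat{\ubeta}^{\mathcal{A}_h}-\widehat{\ubeta}^{(t)}\right)+\frac{1}{|\mathcal{A}_h|}\sum_{j\in\mathcal{A}_h}\nabla\mathcal{L}_j(\widehat{\ubeta}^{(t)})+\lambda_{t}\hat{z}=0,\qquad \hat z\in\partial\|\hat{\ubeta}^{\mathcal{A}_h}\|_1 .
$$
Substituting the source models $\mathbb{Y}^{(\mathrm{S}_j)}=\mathbb{X}^{(\mathrm{S}_j)}\ubeta^{(\mathrm{S}_j)}+\epsilon^{(\mathrm{S}_j)}$ into each $\nabla\mathcal{L}_j$ and using $\bar{\ubeta}^{\mathcal{A}_h}=\frac{1}{|\mathcal{A}_h|}\sum_j\ubeta^{(\mathrm{S}_j)}$ isolates $\hat{\Sigma}^{\mathrm{S}_v}(\hat{\ubeta}^{\mathcal{A}_h}-\bar{\ubeta}^{\mathcal{A}_h})$ against an aggregate noise score $\frac{1}{|\mathcal{A}_h|}\sum_j\frac{1}{n}\mathbb{X}^{(\mathrm{S}_j)\top}\epsilon^{(\mathrm{S}_j)}$, the penalty term $\lambda_t\hat z$, and a remainder $R$ collecting the covariance-heterogeneity term $\frac{1}{|\mathcal{A}_h|}\sum_j(\hat{\Sigma}^{(\mathrm{S}_j)}-\Sigma)(\bar{\ubeta}^{\mathcal{A}_h}-\ubeta^{(\mathrm{S}_j)})$ and the design-mismatch term $(\bar{\hat\Sigma}-\hat{\Sigma}^{\mathrm{S}_v})(\widehat{\ubeta}^{(t)}-\bar{\ubeta}^{\mathcal{A}_h})$. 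The validation loss supplies the effective curvature, which is exactly why $\bar{C}_{\min}$ and $\mathbb{X}^{\mathrm{S}_v}_{\bar{T}_h}$ enter the bound.

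Next I would run the PDW construction on the support $\bar{T}_h=\mathrm{S}(\bar{\ubeta}^{\mathcal{A}_h})$. Solve the restricted program with $\beta_{\bar{T}_h^c}$ fixed at $0$, set $\hat z_{\bar{T}_h}=\mathrm{sign}(\bar{\ubeta}^{\mathcal{A}_h}_{\bar{T}_h})$ (the signs being identified via the beta-min condition C4, which keeps the restricted entries away from zero), and recover $\hat z_{\bar{T}_h^c}$ from the off-support stationarity equation. The crucial step is the strict dual feasibility check $\|\hat z_{\bar{T}_h^c}\|_\infty<1$, which follows from the mutual incoherence condition (Definition \ref{MI}, valid for $\mathbb{X}^{\mathrm{S}_v}$ because C1 posits a common covariate distribution) once the noise score and $R$ are shown to be $o(\lambda_t)$ off the support. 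This forces $\mathrm{S}(\hat{\ubeta}^{\mathcal{A}_h})\subset\bar{T}_h$, after which the error solves the restricted linear system
$$
\hat{\ubeta}^{\mathcal{A}_h}_{\bar{T}_h}-\bar{\ubeta}^{\mathcal{A}_h}_{\bar{T}_h}=\left(\frac{\mathbb{X}^{\mathrm{S}_v\top}_{\bar{T}_h}\mathbb{X}^{\mathrm{S}_v}_{\bar{T}_h}}{n}\right)^{-1}\left[\Big(\tfrac{1}{|\mathcal{A}_h|}\sum_{j\in\mathcal{A}_h}\tfrac{1}{n}\mathbb{X}^{(\mathrm{S}_j)\top}\epsilon^{(\mathrm{S}_j)}\Big)_{\bar{T}_h}-\lambda_{t}\hat z_{\bar{T}_h}+R_{\bar{T}_h}\right].
$$

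Finally I would take $\ell_\infty$ norms term by term. The penalty piece reproduces exactly the third summand $\|(\mathbb{X}^{\mathrm{S}_v\top}_{\bar{T}_h}\mathbb{X}^{\mathrm{S}_v}_{\bar{T}_h}/n)^{-1}\mathrm{sign}(\cdot)\|_\infty\lambda_{t}$. The stochastic score, after passing through the inverse Gram matrix, splits into a direct contribution controlled by sub-Gaussian concentration (C2) and the normalized-column constant $K_{clm}$, yielding the $\sigma_\epsilon K_{clm}\sqrt{\log p/n}$ term, and a curvature-filtered contribution bounded through the smallest eigenvalue $\bar{C}_{\min}$ together with the order of $\lambda_t$, yielding the $\sigma_\epsilon\lambda_t/\sqrt{\bar{C}_{\min}}$ term. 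The remainder $R$ must be certified lower order: the heterogeneity term is $O_P(\bar s\log p/(|\mathcal{A}_h|n))$ under C3, and the design-mismatch term is handled by combining $\|\bar{\hat\Sigma}-\hat{\Sigma}^{\mathrm{S}_v}\|_\infty=O_P(\sqrt{\log p/n})$ with the $\ell_1$ rate $\|\widehat{\ubeta}^{(t)}-\bar{\ubeta}^{\mathcal{A}_h}\|_1=O_P(\bar s\sqrt{\log p/(|\mathcal{A}_h|n)})$ from Lemma \ref{lem:dl}. \textbf{The main obstacle} is precisely this uniform $\ell_\infty$ control of the distributed gradient-correction: because the source coefficients $\ubeta^{(\mathrm{S}_j)}$ are heterogeneous, the surrogate correction does not vanish and could in principle destroy both the dual-feasibility margin $\gamma$ and the final bound, so I must invoke the geometric contraction encoded in the $(c_{\lambda,2}s\sqrt{\log p/n})^{t}$ factor of $\lambda_t$ to give meaning to ``large enough $t$,'' and lean on the second part of Lemma \ref{lem:dl} to keep the linearization remainder at $o(\lambda_t)$.
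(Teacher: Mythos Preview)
Your proposal is correct and follows essentially the same route as the paper: write the KKT condition for the surrogate problem $\widetilde{\mathcal{L}}_v(\beta,\widehat{\ubeta}^{(t)})+\lambda_t\|\beta\|_1$, run the primal--dual witness construction on $\bar{T}_h$, invoke mutual incoherence (Definition~\ref{MI}) for strict dual feasibility, solve the restricted linear system, and control the gradient-correction via Lemma~\ref{lem:dl}. The only organizational difference is that you expand the surrogate correction into explicit aggregate-noise, covariance-heterogeneity, and design-mismatch pieces and bound each separately, whereas the paper keeps $\frac{1}{|\mathcal{A}_h|}\sum_{j}\nabla\mathcal{L}_j(\widehat{\ubeta}^{(t)})-\nabla\mathcal{L}_v(\widehat{\ubeta}^{(t)})$ as a single block and applies the second conclusion of Lemma~\ref{lem:dl} directly to obtain the $3\sigma_\epsilon K_{clm}\sqrt{\log p/n}$ term (the $4\sigma_\epsilon\lambda_t/\sqrt{\bar C_{\min}}$ and the sign term then come from the validation-noise and penalty pieces via the argument in Theorem~11.3 of \cite{hastie2015statistical}); your finer decomposition is a valid alternative and in fact makes the role of the heterogeneity $\ubeta^{(\mathrm{S}_j)}-\bar{\ubeta}^{\mathcal{A}_h}$ more transparent.
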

		\begin{proof}
	By the zero-subgradient conditions, at the $t$-th iteration, it holds that
	\begin{align}\label{lem5:kkt}
		\begin{split}
			-\frac{1}{n}\mathbb{X}^{\mathrm{S}_v\top}(\mathbb{Y}^{\mathrm{S}_v}-\mathbb{X}^{\mathrm{S}_v}\bar{\ubeta}^{\mathcal{A}_h})+\frac{1}{|\mathcal{A}_h|} \sum_{j \in \mathcal{A}_h} \nabla \mathcal{L}_j(\widehat{\ubeta}^{(t)}) - \nabla \mathcal{L}_{v}(\widehat{\ubeta}^{(t)})+\lambda_t\bar{z}=0,
		\end{split}
	\end{align}
	where 
	$$
	\frac{1}{|\mathcal{A}_h|} \sum_{j \in \mathcal{A}_h} \nabla \mathcal{L}_j(\widehat{\ubeta}^{(t)}) - \nabla \mathcal{L}_{v}(\widehat{\ubeta}^{(t)})=-\frac{1}{|\mathcal{A}_h|}\sum_{j \in \mathcal{A}_h}\frac{1}{n}\mathbb{X}^{\mathrm{S}_j\top}(\mathbb{Y}^{(\mathrm{S}_j)}-\mathbb{X}^{(\mathrm{S}_j)}\hat{\ubeta}^{(t)})+
	\frac{1}{n}\mathbb{X}^{\mathrm{S}_v\top}(\mathbb{Y}^{\mathrm{S}_v}-\mathbb{X}^{\mathrm{S}_v}\hat{\ubeta}^{(t)}),
	$$
	and $ \bar{z} \in \partial \|\ubeta\|_1 $ is a sub gradient. Denote $\bar{T}_h$ as the support set of $\bar{\ubeta}^{\mathcal{A}_h}$.
	We use the primal-dual witness method(\cite{mj2009sharp}) :\\
	(i): Set $\widehat{\ubeta}^{(t)}_{(\bar{T}_h^c)}=0$.\\
	(ii): Determine $\widehat{\ubeta}^{(t)}_{(\bar{T}_h)}, \bar{z}$ by solving (\ref{lem5:kkt}).\\
	(iii):Check whether or not the strict dual feasibility condition $\|\bar{z}_{(\bar{T}_h^c)}\|_{\infty}<1$ hold.\\
	Writing the zero-subgradient conditions $\ref{lem5:kkt}$ in  block matrix form, we obtain
	$$
	\frac{1}{n}\mathcal{D}
	\begin{bmatrix}
		\widehat{\ubeta}^{(t)}-\bar{\ubeta}^{\mathcal{A}_h}_{(T)} \\
		0
	\end{bmatrix}
	- \frac{1}{n}
	\begin{bmatrix}
		\mathbb{X}^{\mathrm{S}_v\top}_{\bar{T}_h} \epsilon^{\mathrm{S}_v}+\l(\frac{1}{|\mathcal{A}_h|} \sum_{j \in \mathcal{A}_h} \nabla \mathcal{L}_j(\widehat{\ubeta}^{(t)}) - \nabla \mathcal{L}_{v}(\widehat{\ubeta}^{(t)} \r)\\
		\mathbb{X}^{\mathrm{S}_v\top}_{\bar{T^c}} \epsilon^{\mathrm{S}_v}+\l(\frac{1}{|\mathcal{A}_h|} \sum_{j \in \mathcal{A}_h} \nabla \mathcal{L}_j(\widehat{\ubeta}^{(t)}) - \nabla \mathcal{L}_{v}(\widehat{\ubeta}^{(t)} \r)
	\end{bmatrix}
	+ \lambda_{t}
	\begin{bmatrix}
		\bar{z}_{(\bar{T}_h)} \\
		\bar{z}_{(\bar{T}_h^c)}
	\end{bmatrix}
	=
	\begin{bmatrix}
		0 \\
		0
	\end{bmatrix}.
	$$
	where
	$$
	\mathcal{D} = \begin{bmatrix}
		\mathbb{X}_{\bar{T}_h}^{\mathrm{S}_v\top} \mathbb{X}_{\bar{T}_h}^{\mathrm{S}_v} & \mathbb{X}_{\bar{T}_h}^{\mathrm{S}_v\top} \mathbb{X}_{\bar{T}_h^c}^{\mathrm{S}_v} \\
		\mathbb{X}_{\bar{T}_h}^{\mathrm{S}_v\top} \mathbb{X}_{\bar{T}_h^c}^{\mathrm{S}_v} & \mathbb{X}_{\bar{T}_h^c}^{\mathrm{S}_v\top} \mathbb{X}_{\bar{T}_h^c}^{\mathrm{S}_v}
	\end{bmatrix}.
	$$
	By \cite{Tibshirani2012TheLP}, under absolutely continuous distribution condition(C1), the solution for $\widehat{\ubeta}^{(t)}$ is unique. Under absolutely continuous distribution condition(C1) and the condition that $\Sigma$ has minimum eigenvalue bounded from 0(C1), $\mathbb{X}_{\bar{T}_h}^{\mathrm{S}_v\top} \mathbb{X}_{\bar{T}_h}^{\mathrm{S}_v}$ is invertible, and satisfy
	$$
	\bar{C}_{\min}:=\lambda_{\min}\left(\frac{\mathbb{X}_{\bar{T}_h}^{\mathrm{S}_v\top} \mathbb{X}_{\bar{T}_h}^{\mathrm{S}_v}}{n}\right)  >0,
	$$
	Solve for the vector $ \widehat{\ubeta}^{(t)}_{(T)}-\bar{\ubeta}^{(t)}_{(T)}$ yields,
	\begin{align}\label{l5:eq1}
		\begin{split}
			\widehat{\ubeta}^{(t)}_{(T)}-\bar{\ubeta}^{(t)}_{(T)}=&\l(\frac{\mathbb{X}_{\bar{T}_h}^{\mathrm{S}_v\top} \mathbb{X}_{\bar{T}_h}^{\mathrm{S}_v}}{n}\r)^{-1}\frac{\mathbb{X}_{\bar{T}_h}^{\mathrm{S}_v\top} \epsilon^{\mathrm{S}_v} }{n}-\lambda_{t}\l(\frac{\mathbb{X}_{\bar{T}_h}^{\mathrm{S}_v\top} \mathbb{X}_{\bar{T}_h}^{\mathrm{S}_v}}{n}\r)^{-1}\mathrm{sign}(\bar{\ubeta}^{(t)}_{(T)})\\
			&+\l(\frac{\mathbb{X}_{\bar{T}_h}^{\mathrm{S}_v\top} \mathbb{X}_{\bar{T}_h}^{\mathrm{S}_v}}{n}\r)^{-1}\frac{1}{n}\l(\frac{1}{|\mathcal{A}_h|} \sum_{j \in \mathcal{A}_h} \nabla \mathcal{L}_j(\widehat{\ubeta}^{(t)}) - \nabla \mathcal{L}_{v}(\widehat{\ubeta}^{(t)})\r).
		\end{split}
	\end{align}
	Solve for the vector $ \bar{z}_{(T^c)}$  yields,
	\begin{align}\label{l5:eq2}
		\begin{split}
			\bar{z}_{(T^c)} &= \frac{1}{\lambda_{t}}\Bigg(\frac{1}{n}\mathbb{X}_{\bar{T}_h^c}^{\mathrm{S}_v\top} \epsilon^{\mathrm{S}_v} 
			- \frac{\mathbb{X}_{\bar{T}_h^c}^{\mathrm{S}_v\top} \mathbb{X}_{\bar{T}_h}^{\mathrm{S}_v}}{n}\left(\widehat{\ubeta}^{(t)}_{(\bar{T}_h)}-\bar{\ubeta}^{(t)}_{(\bar{T}_h)}\right) \\
			&\quad -\frac{1}{n}\left(\frac{1}{|\mathcal{A}_h|} \sum_{j \in \mathcal{A}_h} \nabla \mathcal{L}_j(\widehat{\ubeta}^{(t)}) - \nabla \mathcal{L}_{v}(\widehat{\ubeta}^{(t)})\right)\Bigg) \\
			&= \mathcal{B}\mathrm{sign}(\bar{\ubeta}^{(t)}_{(\bar{T}_h)})+\mathbb{X}_{\bar{T}_h^c}^{\mathrm{S}_v\top}(\mathbb{I}-\Pi_{X_T})\left(\frac{\epsilon^{\mathrm{S}_v}}{\lambda_{t}n}\right)+\frac{1}{n\lambda_{t}}\left(\frac{1}{|\mathcal{A}_h|} \sum_{j \in \mathcal{A}_h} \nabla \mathcal{L}_j(\widehat{\ubeta}^{(t)}) - \nabla \mathcal{L}_{v}(\widehat{\ubeta}^{(t)})\right).
		\end{split}
	\end{align}
	where $\mathcal{B}=\mathbb{X}_{\bar{T}_h^c}^{\mathrm{S}_v\top}\mathbb{X}_{\bar{T}_h}^{\mathrm{S}_v}\l(\mathbb{X}_{\bar{T}_h}^{\mathrm{S}_v\top}\mathbb{X}_{\bar{T}_h}^{\mathrm{S}_v}\r)^{-1}$, $\Pi_{X_T}=\mathbb{X}_{\bar{T}_h}\l(\mathbb{X}_{\bar{T}_h}^{\mathrm{S}_v\top}\mathbb{X}_{\bar{T}_h}^{\mathrm{S}_v}\r)^{-1}\mathbb{X}_{\bar{T}_h}^{\mathrm{S}_v\top}$. According to the proof of Theorem 11.3 in \cite{hastie2015statistical} and under Condition C1, we have:
	\begin{align}\label{lem5:eq3}
		\begin{split}
			\|\mathcal{B}\mathrm{sign}(\bar{\ubeta}^{(t)}_{\bar{T}_h})\|_{\infty} &< 1,\\
			\left\|\frac{1}{n\lambda_{t}}\mathbb{X}_{\bar{T}_h^c}^{\mathrm{S}_v^\top}(\mathbb{I}-\Pi_{X_T})\left(\frac{\epsilon^{\mathrm{S}_v}}{\lambda_{t} n}\right)\right\|_{\infty} &\xrightarrow{P} 0.
		\end{split}
	\end{align}
	By Lemma \ref{lem:dl}, it holds with probability approaching 1 as $n \to \infty$ that:
	\begin{align}\label{lem5:eq4}
		\begin{split}
			\left\| \frac{1}{|\mathcal{A}_h|} \sum_{j \in \mathcal{A}_h} \nabla \mathcal{L}_j(\widehat{\ubeta}^{(t)}) - \nabla \mathcal{L}_{v}(\widehat{\ubeta}^{(t)}) \right\|_{\infty} 
			\leq &\left\| \frac{1}{|\mathcal{A}_h|} \sum_{j \in \mathcal{A}_h} \nabla \mathcal{L}_j(\bar{\ubeta}^{\mathcal{A}_h}) \right\|_{\infty} \\
			&+ \frac{\log(np)}{n} \sqrt{\frac{\log(2p)}{n}} \| \bar{\ubeta}^{\mathcal{A}_h} - \widehat{\ubeta}^{(t)} \|_1 \\
			&+ C\left(\frac{\log(np)}{n}\right)^{2/3} \| \bar{\ubeta}^{\mathcal{A}_h} - \widehat{\ubeta}^{(t)} \|_1^2,
		\end{split}
	\end{align}
	for some universal constant $C$. By Lemma \ref{lem:dl},
	$$
	\| \widehat{\ubeta}^{(t)} - \bar{\ubeta}^{\mathcal{A}_h} \|_1 = O_P\left( \bar{s} \sqrt{\frac{\log p}{|\mathcal{A}_h|n}} + \left(\bar{s} \sqrt{\frac{\log p}{n}} \right)^{t+1} \right),
	$$
	thus, by Condition C3, the last two terms in (\ref{lem5:eq4}) are $o_P\left(\sqrt{\frac{\log p}{n}}\right)$.
	
	Notice that under condition C1-C2, it holds with probability to 1 that,
	$$
	\frac{\|\mathbb{X}^{\top}\epsilon\|_{\infty}}{n} \leq 2\sigma_{\epsilon}K_{clm}\sqrt{\frac{\log p}{n}},
	$$
	thus
	\begin{align}\label{lem5:eq6}
		\begin{split}
			&\left\| \frac{1}{|\mathcal{A}_h|} \sum_{j \in \mathcal{A}_h} \nabla \mathcal{L}_j(\bar{\ubeta}^{\mathcal{A}_h}) \right\|_{\infty} \leq 2\sigma_{\epsilon}K_{clm}\sqrt{\frac{\log p}{n}}.
		\end{split}
	\end{align}
	
	Therefore, we have with probability approaching 1 as $n \to \infty$:
	\begin{align}\label{lem5:eq5}
		\begin{split}
			\left\| \frac{1}{|\mathcal{A}_h|} \sum_{j \in \mathcal{A}_h} \nabla \mathcal{L}_j(\widehat{\ubeta}^{(t)}) - \nabla \mathcal{L}_{v}(\widehat{\ubeta}^{(t)}) \right\|_{\infty} 
			\leq 3\sigma_{\epsilon}K_{clm}\sqrt{\frac{\log p}{n}}.
		\end{split}
	\end{align}
	By (\ref{l5:eq2}), (\ref{lem5:eq3}) and (\ref{lem5:eq5}), under condition C3, we have that with probability approach 1 as $n\to\infty$,
	$$
	\| z^{(\Delta)}_{(T^c)}\|_{\infty}<1.
	$$
	By the proof of Theorem 11.3 in \cite{hastie2015statistical}, under condition C1-2, with probability at least $1-2\exp\{-c_2\lambda_{t}^2n\}$ for some constant $c_2$, 
	\begin{align}\label{l5:eq7}
		\begin{split}
			&\l\|\l(\frac{\mathbb{X}_{\bar{T}_h}^{\mathrm{S}_v\top} \mathbb{X}_{\bar{T}_h}^{\mathrm{S}_v}}{n}\r)^{-1}\frac{\mathbb{X}_{\bar{T}_h}^{\mathrm{S}_v\top}\epsilon^{\mathrm{S}_v} }{n}-\lambda_{t}\l(\frac{\mathbb{X}_{\bar{T}_h}^{\mathrm{S}_v\top} \mathbb{X}_{\bar{T}_h}^{\mathrm{S}_v}}{n}\r)^{-1}\mathrm{sign}(\bar{\ubeta}^{(t)}_{(\bar{T}_h)})\r\|_{\infty}\\
			& \leq 4\frac{1}{\sqrt{\bar{C}_{\min}}}\sigma_{\epsilon}\lambda_{t}+\l\|\l(\frac{\mathbb{X}_{\bar{T}_h}^{\mathrm{S}_v\top} \mathbb{X}_{\bar{T}_h}^{\mathrm{S}_v}}{n}\r)^{-1}\mathrm{sign}(\bar{\ubeta}^{(t)}_{(\bar{T}_h)})\r\|_{\infty}\lambda_{t}.
		\end{split}
	\end{align}
	Combine (\ref{l5:eq1}), (\ref{lem5:eq6}) , (\ref{l5:eq7}), we have that with probability approaching 1 as $n\to \infty$,
	\begin{align}\label{l5:eq8}
		\begin{split}
			\| \widehat{\ubeta}^{(t)} - \bar{\ubeta}^{\mathcal{A}_h} \|_{\infty}
			\leq & 3\sigma_{\epsilon}K_{clm}\sqrt{\frac{\log p}{n}}+4\frac{1}{\sqrt{\bar{C}_{\min}}}\sigma_{\epsilon}\lambda_{t}\\
			&+\quad \l\|\l(\frac{\mathbb{X}_{\bar{T}_h}^{\mathrm{S}_v\top} \mathbb{X}_{\bar{T}_h}^{\mathrm{S}_v}}{n}\r)^{-1}\mathrm{sign}(\bar{\ubeta}^{(t)}_{(\bar{T}_h)})\r\|_{\infty}\lambda_{t}.
		\end{split}
	\end{align}
\end{proof}
	
	{\bf Proof of Lemma \ref{Lemma:3}:}
According to the zero-subgradient conditions, we have:
\begin{align}\label{kkt}
	\begin{split}
		-\frac{1}{n}\mathbb{X}^\top (\mathbb{Y} -\mathbb{X} \hat{\ubeta}^{\mathcal{A}_h} - \mathbb{X} \hat{\Delta}^{\mathcal{A}_h} - \sqrt{n}\hat{e}^{\mathcal{A}_h}) + \lambda_{\Delta} z^{(\Delta)} &= 0,\\
		-\frac{1}{\sqrt{n}} (\mathbb{Y} -\mathbb{X} \hat{\ubeta}^{\mathcal{A}_h} - \mathbb{X} \hat{\Delta}^{\mathcal{A}_h} - \sqrt{n}\hat{e}^{\mathcal{A}_h}) + \lambda_{e} z^{(e)} &= 0,
	\end{split}
\end{align}
where $z^{(e)} \in \partial \|e\|_1$ and $z^{(\Delta)} \in \partial \|\Delta\|_1$ are subgradients. Denote $T$ and $E$ be the support sets of $\Delta^{\mathcal{A}_h}$ and $e^*$, respectively. We apply the primal-dual witness method(\cite{mj2009sharp}) as follows:

(i) Set $\hat{\Delta}^{\mathcal{A}_h}_{(T^c)}=0$, $\hat{e}^{\mathcal{A}_h}_{(E^c)}=0$.

(ii) Determine $\hat{e}^{\mathcal{A}_h}_E, z^{(e)}, \hat{\Delta}^{\mathcal{A}_h}_{(T)}, z^{(\Delta)}$ by solving (\ref{kkt}).

(iii) Check whether the strict dual feasibility conditions $\|z^{(e)}_{(E^c)}\|_{\infty}<1$ and $\|z^{(\Delta)}_{(T^c)}\|_{\infty}<1$ hold.

Rewriting the zero-subgradient conditions (\ref{kkt}) in block matrix form yields:
$$
\frac{1}{n}
\mathcal{A} 
\begin{bmatrix}
	\hat{\Delta}^{\mathcal{A}_h}_{(T)}-\Delta^{\mathcal{A}_h}_{(T)} \\
	0
\end{bmatrix}
- \frac{1}{n}
\begin{bmatrix}
	\mathbb{X}_{T}^\top (\epsilon +\mathcal{T})+(\widehat{e}^{\mathcal{A}_h} - e^*)\\
	\mathbb{X}_{T^c}^\top (\epsilon+\mathcal{T})+(\widehat{e}^{\mathcal{A}_h} - e^*)
\end{bmatrix}
+ \lambda_{\Delta}
\begin{bmatrix}
	z^{(\Delta)}_{T} \\
	z^{(\Delta)}_{(T^c)}
\end{bmatrix}
=
\begin{bmatrix}
	0 \\
	0
\end{bmatrix}.
$$

Similarly for the $e$ terms:
$$
\begin{bmatrix}
	\widehat{e}_{E}^{\mathcal{A}_h} - e_{(E)}^* \\
	0
\end{bmatrix}
-
\frac{1}{\sqrt{n}}
\begin{bmatrix}
	\mathcal{T}_{E}+\mathbb{X}_{ET}(\hat{\Delta}^{\mathcal{A}_h}_{(T)}-\Delta^{\mathcal{A}_h}_{(T)})+\epsilon_{E}   \\
	\mathcal{T}_{(E^c)}+\mathbb{X}_{E^cT}(\hat{\Delta}^{\mathcal{A}_h}_{(T)}-\Delta^{\mathcal{A}_h}_{(T)})+\epsilon_{(E^c)}
\end{bmatrix}
+
\lambda_{e}
\begin{bmatrix}
	z^{(e)}_{E} \\
	z^{(e)}_{(E^c)}
\end{bmatrix}
=
\begin{bmatrix}
	0 \\
	0
\end{bmatrix},
$$
where
$$
\mathcal{A} = \begin{bmatrix}
	\mathbb{X}_{T}^\top \mathbb{X}_{T} & \mathbb{X}_{T}^\top \mathbb{X}_{T^c} \\
	\mathbb{X}_{T}^\top \mathbb{X}_{T^c} & \mathbb{X}_{T^c}^\top \mathbb{X}_{T^c}
\end{bmatrix}, \quad \mathcal{T} = \mathbb{X}(\hat{\ubeta}^{\mathcal{A}_h}-\bar{\ubeta}^{\mathcal{A}_h}).
$$

By \cite{Tibshirani2012TheLP}, under absolutely continuous distribution condition(C1), the solution for $\hat{\Delta}^{\mathcal{A}_h}$ is unique. Under absolutely continuous distribution condition(C1) and the condition that $\Sigma$ has minimum eigenvalue bounded from 0(C1), $\mathbb{X}_{T}^\top \mathbb{X}_{T}$ is invertible, satisfying
$$
C_{\min} := \lambda_{\min}\left(\frac{\mathbb{X}_{T}^\top \mathbb{X}_{T}}{n}\right) > 0.
$$

Solving for the vectors $\hat{\Delta}^{\mathcal{A}_h}_{(T)}-\Delta^{\mathcal{A}_h}_{(T)}$ and $\hat{e}^{\mathcal{A}_h}_{(E)}-e_{(E)}^*$ result in:
\begin{align}\label{l3:eq1}
	\begin{split}
		\hat{\Delta}^{\mathcal{A}_h}_{(T)}-\Delta^{\mathcal{A}_h}_{(T)}=&\left(\frac{\mathbb{X}_{T}^\top \mathbb{X}_{T}}{n}\right)^{-1}\frac{\mathbb{X}_{T}^\top (\epsilon +\mathcal{T})}{n}-\lambda_{\Delta}\left(\frac{\mathbb{X}_{T}^\top \mathbb{X}_{T}}{n}\right)^{-1}\mathrm{sign}(\Delta^{\mathcal{A}_h}_{(T)})\\
		&+\left(\frac{\mathbb{X}_{T}^\top \mathbb{X}_{T}}{n}\right)^{-1}\frac{1}{n}\mathbb{X}_{T}^\top(\widehat{e}^{\mathcal{A}_h} - e^*),\\
		\hat{e}^{\mathcal{A}_h}_{(E)}-e_{(E)}^*=&\frac{1}{\sqrt{n}}\mathcal{T}_{E}+\frac{1}{\sqrt{n}}\mathbb{X}_{ET}(\hat{\Delta}^{\mathcal{A}_h}_{(T)}-\Delta^{\mathcal{A}_h}_{(T)})+\frac{1}{\sqrt{n}} \epsilon_{E}-\lambda_{e}\mathrm{sign}(e^*_{(E)}).
	\end{split}
\end{align}
Solving for the vectors $z^{(\Delta)}_{(T^c)}$ and $z^{(e)}_{(E^c)}$ yields:
\begin{align}\label{l3:eq2}
	\begin{split}
		z^{(\Delta)}_{(T^c)} &= \frac{1}{\lambda_{\Delta}}\left[\frac{1}{n}\mathbb{X}_{T^c}^\top (\epsilon + \mathcal{T} + \widehat{e}^{\mathcal{A}_h} - e^*) - \frac{\mathbb{X}_{T^c}^\top \mathbb{X}_{T}}{n}(\hat{\Delta}^{\mathcal{A}_h}_{(T)} - \Delta^{\mathcal{A}_h}_{(T)})\right]\\
		&= \mathcal{B}\mathrm{sign}(\Delta^{\mathcal{A}_h}_{(T)}) + \mathbb{X}_{T^c}^\top(\mathbb{I} - \Pi_{X_T})\left(\frac{\epsilon + \mathcal{T} + \widehat{e}^{\mathcal{A}_h} - e^*}{\lambda_{\Delta} n}\right),\\
		z^{(e)}_{(E^c)} &= \frac{1}{\lambda_e}\left[\frac{1}{\sqrt{n}}\mathcal{T}_{(E^c)} + \frac{1}{\sqrt{n}}\mathbb{X}_{E^cT}(\hat{\Delta}^{\mathcal{A}_h}_{(T)} - \Delta^{\mathcal{A}_h}_{(T)}) + \frac{1}{\sqrt{n}}\epsilon_{(E^c)}\right],
	\end{split}
\end{align}
where $\mathbb{I}$ is the identity matrix, $\mathcal{B} = \mathbb{X}_{T^c}^\top \mathbb{X}_{T}(\mathbb{X}_{T}^\top \mathbb{X}_{T})^{-1}$ and $\Pi_{X_T} = \mathbb{X}_{T}(\mathbb{X}_{T}^\top \mathbb{X}_{T})^{-1}\mathbb{X}_{T}^\top$. According to the proof of Theorem 11.3 in \cite{hastie2015statistical}, under condition C1-C2, we have:
\begin{align}\label{l3:eq2_1}
	\begin{split}
		\|\mathcal{B}\mathrm{sign}(\Delta^{\mathcal{A}_h}_{(T)})\|_{\infty} &< 1,\\
		\left\|\frac{1}{n\lambda_{\Delta}}\mathbb{X}_{T^c}^\top(\mathbb{I} - \Pi_{X_T})\left(\frac{\epsilon}{\lambda_{\Delta} n}\right)\right\|_{\infty} &\xrightarrow{P} 0.
	\end{split}
\end{align}
By (\ref{rec0}) and Lemma \ref{Lemma:2},
\begin{align}\label{e}
	\begin{split}
	\|\widehat{e}^{\mathcal{A}_h} - e^*\|_1&\leq
4\max\left(\sqrt{s_{\Delta}}\lambda_{\Delta}/\lambda_e,\sqrt{k}\right)\left(\sqrt{\frac{s_{\Delta}\log p}{n}}+\sqrt{\frac{k\log n}{n}}\right) + C\frac{\bar{s}\sqrt{\log p}}{|\mathcal{A}_h|\sqrt{n}}\\
&=O_P\l(\frac{s_{\Delta}\log p}{\sqrt{n\log n}}+k\sqrt{\frac{\log n}{n}} +\frac{\bar{s}\sqrt{\log p}}{|\mathcal{A}_h|\sqrt{n}}\r).
\end{split}
\end{align}

It holds that for $j\in\mathcal{A}_h$,
$$
h\geq \| \Delta^{(S_j)} \|_1 \geq \l(\min_{j : \ubeta_j^* \neq 0} |\ubeta_j^*|\wedge  \min_{j : \ubeta_j^{(\mathrm{S}_j)} \neq 0} |\ubeta_j^{(\mathrm{S}_j)}|\r)\| \Delta^{(S_j)} \|_0,
$$
thus  by condition C3-4,
$$
s_{\Delta}\leq \sum_{j\in\mathcal{A}_h}  \|\Delta^{(S_j)} \|_0\leq (\gamma_1\wedge\gamma_2)^{-1}h|\mathcal{A}_h |,
$$
and $\bar{s} \leq \gamma_2^{-1}|\mathcal{A}_h|h$.
By Lemma \ref{Lemma:1}, with probability to 1,
$$
\left\|\mathcal{T}\right\|_1 \leq \sqrt{n}\left\|\mathcal{T}\right\|_2 \leq\sqrt{\frac{49 \bar{s} }{2} \frac{\log(p)}{|\mathcal{A}_h|}}.
$$
Since it holds with probability to 1 that,
$$
\l\|\mathbb{X}_{T^c}^\top(\mathbb{I} - \Pi_{X_T})\r\|_{\infty}\leq \l\|\mathbb{X}_{T^c}^\top\r\|_{\infty}\leq 2\|\mathrm{diag}(\Sigma)\|_{\infty}\sqrt{\log((p-s_{\delta})n)},
$$
by (\ref{e}), under condition C3, it holds with probability approaching 1 as $n \to \infty$ that, 
\begin{align}\label{l3:eq2_2}
	\begin{split}
		&\l\|\mathbb{X}_{T^c}^\top(\mathbb{I} - \Pi_{X_T})\frac{\mathcal{T} + \widehat{e}^{\mathcal{A}_h} - e^*}{\lambda_{\Delta} n}\r\|_{\infty}\\
		&\leq \frac{\|\mathbb{X}_{T^c}\|_{\infty}}{\lambda_{\Delta} n}\left[  \|\mathcal{T}\|_1 + \|\widehat{e}^{\mathcal{A}_h} - e^*\|_1\right]\\
		&=o_P(1).
	\end{split}
\end{align}

Combining (\ref{l3:eq2}), (\ref{l3:eq2_1}), and (\ref{l3:eq2_2}), it follows that with probability approaching 1 as $n \to \infty$,
$$
\|z^{(\Delta)}_{(T^c)}\|_{\infty} < 1.
$$

With probability to 1, by Lemma \ref{Lemma:5} and the mutual incoherence condition(C1), it holds that,
\begin{align}\label{l3:eq3}
	\begin{split}
		&\l\|\l(\frac{\mathbb{X}_{T}^{\top} \mathbb{X}_{T}}{n}\r)^{-1}\frac{\mathbb{X}_{T}^{\top}  \mathcal{T}}{n}\r\|_{\infty}\\
		&\leq
		\|\hat{\ubeta}_T^{\mathcal{A}_h}-\bar{\ubeta}_T^{\mathcal{A}_h}\|_{\infty} + \l\|\l(\frac{\mathbb{X}_{T}^{\top} \mathbb{X}_{T}}{n}\r)^{-1}\frac{\mathbb{X}_{T}^{\top}  \mathbb{X}_{T^c}\l(\hat{\ubeta}_{T^c}^{\mathcal{A}_h}-\bar{\ubeta}_{T^c}^{\mathcal{A}_h} \r)}{n}\r\|_{\infty}
		\\
		& \leq 	(2-\gamma)\left\|\hat{\ubeta}^{\mathcal{A}_h} - \bar{\ubeta}^{\mathcal{A}_h}\right\|_{\infty} \\
		&\leq 6\sigma_{\epsilon}K_{clm}\sqrt{\frac{\log p}{n}}+8\frac{1}{\sqrt{\bar{C}_{\min}}}\sigma_{\epsilon}\lambda_{t}
		+ 2\l\|\l(\frac{\mathbb{X}_{\bar{T}_h}^{\mathrm{S}_v\top} \mathbb{X}_{\bar{T}_h}^{\mathrm{S}_v}}{n}\r)^{-1}\mathrm{sign}(\bar{\ubeta}^{(t)}_{(\bar{T}_h)})\r\|_{\infty}\lambda_{t}.
	\end{split}
\end{align}

According to the proof of Theorem 11.3 in \cite{hastie2015statistical}, with probability at least $1 - 2\exp\{-c_2 \lambda_{\Delta}^2 n\}$ for some constant $c_2$, we have:
\begin{align}\label{l3:eq4}
	\begin{split}
		&\left\|\left(\frac{\mathbb{X}_{T}^\top \mathbb{X}_{T}}{n}\right)^{-1}\frac{\mathbb{X}_{T}^\top \epsilon}{n} - \lambda_{\Delta}\left(\frac{\mathbb{X}_{T}^\top \mathbb{X}_{T}}{n}\right)^{-1}\mathrm{sign}(\Delta^{\mathcal{A}_h}_{(T)})\right\|_{\infty} \\
		&\leq 4\frac{1}{\sqrt{C_{\min}}} \sigma_{\epsilon} \lambda_{\Delta} + \left\|\left(\frac{\mathbb{X}_{T}^\top \mathbb{X}_{T}}{n}\right)^{-1}\mathrm{sign}(\Delta^{\mathcal{A}_h}_{(T)})\right\|_{\infty} \lambda_{\Delta}.
	\end{split}
\end{align}
By \ref{rec0}, with probability approach 1 as $n\to \infty$, under condition C1, it holds that, 
\begin{align}\label{l3:eq4_1}
	\begin{split}
		&\left\|\left(\frac{\mathbb{X}_{T}^{\top} \mathbb{X}_{T}}{n}\right)^{-1}\frac{1}{n}\mathbb{X}_{T}^{\top}\left(\widehat{e}^{\mathcal{A}_h} - e^* \right)\right\|_{\infty} \\
		&\quad \leq \left\|\left(\frac{\mathbb{X}_{T}^{\top} \mathbb{X}_{T}}{n}\right)^{-1}\right\|_{L_1}\left\|\left(\frac{1}{n}\mathbb{X}_{T}^{\top}\right)\right\|_{\infty}\|\widehat{e}^{\mathcal{A}_h} - e^* \|_1\\
		&\quad \leq \left\|\left(\frac{\mathbb{X}_{T}^{\top} \mathbb{X}_{T}}{n}\right)^{-1}\right\|_{L_1}\frac{\|\Sigma\|_{\infty}\sqrt{\log(s_{\Delta}n)}}{n} \left(4\max\left(\sqrt{s_{\Delta}}\lambda_{\Delta}/\lambda_e,\sqrt{k}\right)\right.\\
		&\qquad\left.\cdot\left(\sqrt{\frac{s_{\Delta}\log p}{n}}+\sqrt{\frac{k\log n}{n}}\right) + C\frac{\bar{s}\log p}{|\mathcal{A}_h|n}\right),
	\end{split}
\end{align}
Combine (\ref{l3:eq1}), (\ref{l3:eq3}) , (\ref{l3:eq4}), and (\ref{l3:eq4_1}), we have that with probability approaching 1 as $n\to \infty$,
\begin{align}\label{l3:eq5}
	\begin{split}
		&\|\hat{\Delta}^{\mathcal{A}_h}-\Delta^{\mathcal{A}_h}\|_{\infty} \\
		&\quad \leq 6\sigma_{\epsilon}K_{clm}\sqrt{\frac{\log p}{n}}+8\frac{1}{\sqrt{\bar{C}_{\min}}}\sigma_{\epsilon}\lambda_{t}\\
		&\qquad+ 2\l\|\l(\frac{\mathbb{X}_{\bar{T}_h}^{\mathrm{S}_v\top} \mathbb{X}_{\bar{T}_h}^{\mathrm{S}_v}}{n}\r)^{-1}\mathrm{sign}(\bar{\ubeta}^{(t)}_{(\bar{T}_h)})\r\|_{\infty}\lambda_{t} \\
		&\quad\quad + 4\frac{1}{\sqrt{C_{\min}}}\sigma_{\epsilon}\lambda_{\Delta} + \left\|\left(\frac{\mathbb{X}_{T}^\top \mathbb{X}_{T}}{n}\right)^{-1}\mathrm{sign}(\Delta^{\mathcal{A}_h}_{(T)})\right\|_{\infty}\lambda_{\Delta} \\
		&\quad\quad + \left\|\left(\frac{\mathbb{X}_{T}^{\top} \mathbb{X}_{T}}{n}\right)^{-1}\right\|_{L_1}\frac{\|\Sigma\|_{\infty}\sqrt{\log(s_{\Delta}n)}}{n} \times \\
		&\qquad\quad \left[4\max\left(\sqrt{s_{\Delta}}\lambda_{\Delta}/\lambda_e,\sqrt{k}\right) \left(\sqrt{\frac{s_{\Delta}\log p}{n}}+\sqrt{\frac{k\log n}{n}}\right)\right. \\
		&\qquad \quad\left.+ C\frac{\bar{s}\log p}{|\mathcal{A}_h|n}\right],
	\end{split}
\end{align}
By combining Lemma \ref{Lemma:5} with equation (\ref{l3:eq5}), we complete the proof of the first part of this lemma.

Observe that 
\begin{align*}
	&4\max\left(\sqrt{s_{\Delta}}\lambda_{\Delta}/\lambda_e,\sqrt{k}\right) \left(\sqrt{\frac{s_{\Delta}\log p}{n}}+\sqrt{\frac{k\log n}{n}}\right)\\
	&\quad =O_P\l(\frac{s_{\Delta}\log p}{\sqrt{n\log n}}+k\sqrt{\frac{\log n}{n}} \r).
\end{align*}
By Lemma 5 of \cite{mj2009sharp}, it holds that,
\begin{align*}
	\left\|\left(\frac{\mathbb{X}_{T}^{\top} \mathbb{X}_{T}}{n}\right)^{-1}\right\|_{L_1} &\leq \left\|\sqrt{\Omega_{TT}}\right\|_{L_1}^2, \\
	\left\|\left(\frac{\mathbb{X}_{\bar{T}_h}^{\mathrm{S}_v\top} \mathbb{X}_{\bar{T}_h}^{\mathrm{S}_v}}{n}\right)^{-1}\mathrm{sign}(\bar{\beta}^{\mathcal{A}_h}_{(\bar{T}_h)})\right\|_{\infty} &\leq \left\|\sqrt{\Omega_{TT}^{\mathrm{S}_v}}\right\|_{L_1}^2, \\
	\left\|\left(\frac{\mathbb{X}_{T}^{\top} \mathbb{X}_{T}}{n}\right)^{-1}\mathrm{sign}(\Delta^{\mathcal{A}_h}_{(T)})\right\|_{\infty} &\leq \left\|\sqrt{\Omega_{TT}}\right\|_{L_1}^2.
\end{align*}

Put these pieces together, by condition C3, 
we have that, with probability approaching 1 as $n\to \infty$,
\begin{align}\label{l3:eq51}
	\begin{split}
		&\|\hat{\Delta}^{\mathcal{A}_h}-\Delta^{\mathcal{A}_h}\|_{\infty} \\
		&\quad \leq 6\sigma_{\epsilon}K_{clm}\sqrt{\frac{\log p}{n}}+8\frac{1}{\sqrt{\bar{C}_{\min}}}\sigma_{\epsilon}\lambda_{t} \\
		&\quad\quad + 2\left\|\left(\frac{\mathbb{X}_{\bar{T}_h}^{\mathrm{S}_v\top} \mathbb{X}_{\bar{T}_h}^{\mathrm{S}_v}}{n}\right)^{-1}\mathrm{sign}(\bar{\ubeta}^{(t)}_{(\bar{T}_h)})\right\|_{\infty}\lambda_{t} \\
		&\quad\quad + 4\frac{1}{\sqrt{C_{\min}}}\sigma_{\epsilon}\lambda_{\Delta} + \left\|\left(\frac{\mathbb{X}_{T}^\top \mathbb{X}_{T}}{n}\right)^{-1}\mathrm{sign}(\Delta^{\mathcal{A}_h}_{(T)})\right\|_{\infty}\lambda_{\Delta} \\
		&\quad\quad +o_P\l(\sqrt{\frac{\log p}{n}}\r)\\
		&\quad=O_P\l(\sqrt{\frac{\log p}{n}}\r).
	\end{split}
\end{align}
By (\ref{l3:eq2}), we have that,
\begin{align}\label{rlasso3}
	\begin{split}
		\lambda_e\|z^{(e)}_{(E^c)}\|_{\infty} \leq& \l\|\frac{1}{\sqrt{n}}\mathbb{X}_{r(E)}\r\|_{\infty}\l\|\hat{\ubeta}^{\mathcal{A}_h}-\bar{\ubeta}^{\mathcal{A}_h})\r\|_{1}+ \l\|\frac{1}{\sqrt{n}}\mathbb{X}_{r(E)}(\hat{\Delta}^{\mathcal{A}_h}-\Delta^{\mathcal{A}_h})\r\|_{\infty}+\l\|\frac{1}{\sqrt{n}} \epsilon_{(E)}\r\|_{\infty}.
	\end{split}
\end{align}
By condition C1,
$$
\frac{1}{\sqrt{n}}\|\mathbb{X}_{r(E^c)}\|_{\infty}=O_P\l(\sqrt{\frac{\log((n-k) p)}{n}}\r).
$$
Thus by Lemma \ref{lem:dl}, for large enough $t$, it holds that
\begin{align}\label{eq19}
	\begin{split}
		\|\frac{1}{\sqrt{n}}\mathbb{X}_{r(E^c)}\|_{\infty}\|\hat{\ubeta}^{\mathcal{A}_h}-\bar{\ubeta}^{\mathcal{A}_h})\|_{1}=O_P\l(\bar{s} \frac{\sqrt{\log((n-k)p)\log p}}{\sqrt{|\mathcal{A}_h|}n}\r).
	\end{split}
\end{align}
By (\ref{l3:eq51}),
\begin{align}\label{eq20}
	\begin{split}
		\l\|\frac{1}{\sqrt{n}}\mathbb{X}_{E^cT}(\hat{\Delta}^{\mathcal{A}_h}_{(T)}-\Delta^{\mathcal{A}_h}_{(T)})\r\|_{\infty}&\leq \l\|\frac{1}{\sqrt{n}}\mathbb{X}_{E^cT}\r\|_{\infty}s_{\Delta}\l\|\hat{\Delta}^{\mathcal{A}_h}_{(T)}-\Delta^{\mathcal{A}_h}_{(T)}\r\|_{\infty}\\
		&= O_P\l(s_{\Delta}\frac{\sqrt{\log((n-k)s_{\Delta})\log p}}{n}\r).
	\end{split}
\end{align}
By (\ref{rlasso3}), (\ref{eq19}) and (\ref{eq20}), and the choice of $\lambda_e$, under condition C3, it holds that with probability  to 1,
$$
\|z^{(e)}_{(E^c)}\|_{\infty}<1.
$$
By a similar arguments, under condition C3, it holds that,
\begin{align*}
	\|\hat{e}^{\mathcal{A}_h}-e^*\|_{\infty} \leq& \|\frac{1}{\sqrt{n}}\|\mathbb{X}_{r(E)}\|_{\infty}\| \hat{\ubeta}^{\mathcal{A}_h} -\bar{\ubeta}^{\mathcal{A}_h}\|_1+\l\|\frac{1}{\sqrt{n}}\mathbb{X}_{ET}\r\|_{\infty}s_{\Delta}\l\|\hat{\Delta}^{\mathcal{A}_h}_{(T)}-\Delta^{\mathcal{A}_h}_{(T)}\r\|_{\infty}\\
	&+ \|\frac{1}{\sqrt{n}} \epsilon_{(E)}\|_{\infty}+\lambda_{e}\\
	=&O_P\l(\bar{s} \frac{\sqrt{\log(kp)\log p}}{\sqrt{|\mathcal{A}_h|}n}+ s_{\Delta}\frac{\sqrt{\log(kp)\log p}}{n}+3\sigma_{\epsilon}\sqrt{\frac{\log n}{n}}\r)\\
	=& O_P\l(\sqrt{\frac{\log n}{n}}\r).
\end{align*}
The proof is completed.

	{\bf Proof of Proposition \ref{prop0}:}
		Define $z=\hat{\Delta}^{\mathcal{A}_h}-\Delta^{\mathcal{A}_h}$ and $v=\hat{e}^{\mathcal{A}_h}-e^*$.
	For the proof of the first part, observe that,
	\begin{equation}\label{eq:m00}
		\begin{aligned}
			\frac{1}{2n} \|\mathbb{X}z + \sqrt{n} v\|_2^2 
			&\leq \frac{1}{n} \langle y - \mathbb{X} \hat{\ubeta}^{\mathcal{A}_h} - \mathbb{X} \Delta^{\mathcal{A}} - \sqrt{n} e^*, \mathbb{X}z + \sqrt{n} v \rangle \\
			&\quad+  \lambda_{\Delta} (\|\Delta^{\mathcal{A}_h}\|_1 - \|\hat{\Delta}^{\mathcal{A}}\|_1) + \lambda_{e} (\|e^*\|_1 - \|\hat{e}^{\mathcal{A}_h}\|_1)\\
			&= \frac{1}{n} \langle \epsilon, \mathbb{X}z + \sqrt{n} v \rangle + \lambda_{\Delta} (\|\Delta^{\mathcal{A}_h}\|_1 - \|\hat{\Delta}^{\mathcal{A}}\|_1) \\
			&\quad+  \frac{1}{n}\langle \mathbb{X}(\bar{\ubeta}^{\mathcal{A}_h}-\hat{\ubeta}^{\mathcal{A}_h}), \mathbb{X}z + \sqrt{n} v \rangle + \lambda_{e} (\|e^*\|_1 - \|\hat{e}^{\mathcal{A}_h}\|_1) \\
			&\leq \frac{1}{n} \|\mathbb{X}^\top \epsilon\|_\infty \|z\|_1 + \frac{1}{\sqrt{n}} \|\epsilon\|_\infty \|v\|_1 \\
			&\quad+ \lambda_{\Delta} (\|\Delta^{\mathcal{A}_h}\|_1 - \|\hat{\Delta}^{\mathcal{A}}\|_1) + \lambda_{e} (\|e^*\|_1 - \|\hat{e}^{\mathcal{A}_h}\|_1) \\
			&\quad+ \frac{1}{4n} \|\mathbb{X}z + \sqrt{n} v\|_2^2+ \frac{1}{n} \|\mathbb{X}(\bar{\ubeta}^{\mathcal{A}_h}-\hat{\ubeta}^{\mathcal{A}_h})\|_2^2 \\
			&\leq -\frac{1}{2}\lambda_{\Delta} \|z\|_1  + 2\lambda_{\Delta} \|\Delta^{\mathcal{A}_h}\|_1   \\    
			&\quad+ \left(\frac{1}{\sqrt{n}} \|\epsilon\|_\infty + \lambda_{e}\right) \|v_E\|_1- \left(\lambda_{e} - \frac{1}{\sqrt{n}} \|\epsilon\|_\infty \right) \|v_{E^c}\|_1\\
			&\quad+ \frac{1}{4n} \|\mathbb{X}z + \sqrt{n} v\|_2^2+ \frac{1}{n} \|\mathbb{X}(\bar{\ubeta}^{\mathcal{A}_h}-\hat{\ubeta}^{\mathcal{A}_h})\|_2^2 \\
		\end{aligned}
	\end{equation}
	By Lemma \ref{Lemma:3}, with probability to 1, it holds that
	\begin{equation}\label{eq:m0}
		\begin{aligned}
			\frac{1}{2n} \|\mathbb{X}z + \sqrt{n} v\|_2^2 
			&\leq -\frac{1}{2}\lambda_{\Delta} \|z\|_1  + 2\lambda_{\Delta} \|\Delta^{\mathcal{A}_h}\|_1   \\    
			&\quad+C \left(\frac{1}{\sqrt{n}} \|\epsilon\|_\infty + \lambda_{e}\right)k \sqrt{\frac{\log n}{n}}\\
			&\quad+ \frac{1}{4n} \|\mathbb{X}z + \sqrt{n} v\|_2^2+ \frac{1}{n} \|\mathbb{X}(\bar{\ubeta}^{\mathcal{A}_h}-\hat{\ubeta}^{\mathcal{A}_h})\|_2^2, \\
		\end{aligned}
	\end{equation}
	for some universal constant $C$.
	
	(i) if $\frac{1}{n} \|\mathbb{X}(\bar{\ubeta}^{\mathcal{A}_h}-\hat{\ubeta}^{\mathcal{A}_h})\|_2^2\leq \lambda_{\Delta} \|\Delta^{\mathcal{A}_h}\|_1$, by (\ref{eq:m0}), it holds with probability approaching 1 as $n\to \infty$,
	\begin{equation}\label{eq2:m1}
		\begin{aligned}
			\frac{1}{4n} \|\mathbb{X}z + \sqrt{n} v\|_2^2 
			&\leq -\frac{1}{2}\lambda_{\Delta} \|z\|_1  + 3\lambda_{\Delta} \|\Delta^{\mathcal{A}_h}\|_1 +C\frac{k\log n}{n}\\
			&\leq 3 C\l(\lambda_{\Delta}h\wedge h^2\r)+C\frac{k\log n}{n}  \\
		\end{aligned}
	\end{equation}
	for some universal constant $C$. By the extended RE condition,
	\begin{equation}\label{eq2:m2}
		\begin{aligned}
			\kappa_l^2(\|z\|_2 + \|v\|_2)^2 &\leq  3 C\l(\lambda_{\Delta}h\wedge h^2\r)  +C\frac{k\log n}{n}
		\end{aligned}
	\end{equation}
	(ii) if $\frac{1}{n} \|\mathbb{X}(\bar{\ubeta}^{\mathcal{A}_h}-\hat{\ubeta}^{\mathcal{A}_h})\|_2^2\geq \lambda_{\Delta} \|\Delta^{\mathcal{A}_h}\|_1$, by Lemma \ref{Lemma:1} and (\ref{eq:m0}),
	\begin{equation}\label{eq2:m3}
		\begin{aligned}
			\frac{1}{4n} \|\mathbb{X}z + \sqrt{n} v\|_2^2 
			&\leq -\frac{1}{2}\lambda_{\Delta} \|z\|_1  -\frac{1}{2}\lambda_{e} \|v\|_1 + \frac{2}{n} \|\mathbb{X}(\bar{\ubeta}^{\mathcal{A}_h}-\hat{\ubeta}^{\mathcal{A}_h})\|_2^2+C\frac{k\log n}{n}  \\
			&= O_P\l(\frac{\bar{s}\log p}{|\mathcal{A}_h|n}+\frac{k\log n}{n}\r)
		\end{aligned}
	\end{equation}
	by the extended RE condition,
	\begin{equation}\label{eq2:m4}
		\begin{aligned}
			\kappa_l^2(\|z\|_2 + \|v\|_2)^2 = O_P\l( \frac{\bar{s}\log p}{|\mathcal{A}_h|n}+\frac{k\log n}{n}\r)
		\end{aligned}
	\end{equation}
	combine (\ref{eq2:m2}) and (\ref{eq2:m4}),
	\begin{equation}\label{eq2:m5}
		\begin{aligned}
			\|z\|_2 + \|v\|_2 = O_P\l( \sqrt{\frac{\bar{s}\log p}{|\mathcal{A}_h|n}}+\sqrt{\lambda_{\Delta} h}\wedge h   +\sqrt{\frac{k\log n}{n}}\r)
		\end{aligned}
	\end{equation}
	By Lemma \ref{Lemma:2} and (\ref{eq2:m5}),
	\begin{equation}\label{eq2:m6}
		\begin{aligned}
			\|z\|_2 + \|v\|_2 = O_P\l( \sqrt{\frac{\bar{s}\log p}{|\mathcal{A}_h|n}}+\sqrt{\lambda_{\Delta} h}\wedge h \wedge\sqrt{\frac{s_{\Delta}\log p}{n}}+\sqrt{\frac{k\log n}{n}}\r)
		\end{aligned}
	\end{equation}
	By equation (\ref{eq2:m6}), we complete the proof of the first part.
	
	The proof of the second part is a direct consequence of Lemma \ref{Lemma:3} and Condition C4.
	
   Denote 
	\begin{equation}
		\begin{split}
		 r_n := & 9K_{clm}\hat{\sigma}_{\epsilon}\sqrt{\frac{\log p}{n}} + 12\frac{1}{\sqrt{\bar{C}_{\min}}}\hat{\sigma}_{\epsilon}\lambda_{t} \\
			& \quad + 3\left\|\left(\frac{\mathbb{X}_{\bar{T}_h}^{\mathrm{S}_v\top} \mathbb{X}_{\bar{T}_h}^{\mathrm{S}_v}}{n}\right)^{-1}\operatorname{sign}(\bar{\ubeta}^{\mathcal{A}_h}_{(\bar{T}_h)})\right\|_{\infty}\lambda_{t} \\
			& \quad + 4\frac{1}{\sqrt{C_{\min}}}\hat{\sigma}_{\epsilon}\lambda_{\Delta} + \left\|\left(\frac{\mathbb{X}_{T}^\top \mathbb{X}_{T}}{n}\right)^{-1}\operatorname{sign}(\Delta^{\mathcal{A}_h}_{(T)})\right\|_{\infty}\lambda_{\Delta},
		\end{split}
	\end{equation}
		By Lemma \ref{Lemma:3}, it holds with probability to 1 that
		$$
			\|\hat{\ubeta}^{\mathrm{oracle}}-\ubeta^*\|_{\infty}\leq(1+o(1))r_n.
		$$
	If the covariates follow a standard Gaussian distribution, then by the Marchenko–Pastur law (see, e.g., \cite{couillet2011random}), $\bar{C}_{\min}$ and $C_{\min}$ converge in distribution to 1. 	By Lemma 5 of \cite{mj2009sharp}, it holds that,
	\begin{align*}
		\left\|\left(\frac{\mathbb{X}_{\bar{T}_h}^{\mathrm{S}_v\top} \mathbb{X}_{\bar{T}_h}^{\mathrm{S}_v}}{n}\right)^{-1}\mathrm{sign}(\bar{\beta}^{\mathcal{A}_h}_{(\bar{T}_h)})\right\|_{\infty} &\leq \left\|\sqrt{\Omega_{TT}^{\mathrm{S}_v}}\right\|_{L_1}^2, \\
		\left\|\left(\frac{\mathbb{X}_{T}^{\top} \mathbb{X}_{T}}{n}\right)^{-1}\mathrm{sign}(\Delta^{\mathcal{A}_h}_{(T)})\right\|_{\infty} &\leq \left\|\sqrt{\Omega_{TT}}\right\|_{L_1}^2.
	\end{align*}
			By Theorem 9.3 in \cite{fan2020statistical},
	\begin{align*}
		\|\hat{\Sigma}^{\mathrm{S}_v}-\Sigma\|_{\infty} = O_P\left(\sqrt{\frac{\log p}{n}}\right).
	\end{align*}
	Furthermore, under standard Gaussian design, the noise level could be estimated consistently using well developed technique(see, e.g. \cite{bayati2013estimating}).
	Thus, the proof of the third part follows from (\ref{l3:eq51}) and Condition C4.
	
{\bf Proof of Lemma \ref{lem:tr}:}
The  proof  is similar to the proof of Lemma \ref{Lemma:2}, for the completeness, we give a detail proof here.
By (\ref{v1}),
\begin{equation}
	\begin{aligned}
		&\frac{1}{4n}\l(\|\mathbb{\bar{Y}}-\mathbb{\bar{X}}\hat{\ubeta}^{(j)} -\sqrt{2n}\hat{e}\|_2^2\r) \\
		&\quad+ \lambda_{\beta}^{(j)} \left\| \hat{\ubeta}^{(j)} \right\|_1 + \lambda^{(j)}_{e} \left\| \hat{e} \right\|_1 \\
		&\leq \frac{1}{4n} \left\|\mathbb{\bar{Y}} - \mathbb{\bar{X}} \ubeta^{(j)} - \sqrt{2n} e^* \right\|_2^2 \\
		&\quad+ \lambda_{\beta}^{(j)} \left\| \ubeta^{(j)} \right\|_1 + \lambda^{(j)}_{e} \left\| e^* \right\|_1.
	\end{aligned}
\end{equation}
Denote $\tilde{z}=\hat{\ubeta}^{(j)}-\ubeta^{(j)}, \tilde{v}=\hat{e}^{(j)}-e^{(j)}$, where $e^{(j)}_{1:n}=e^*, e^{(j)}_{(n+1):2n}=0$, let $v^{(1)}=\tilde{v}_{1:n}, v^{(2)}=\tilde{v}_{(n+1):2n}$, it holds that,
\begin{equation}
	\begin{aligned}
		\|\mathbb{\bar{Y}}-\mathbb{\bar{X}}\hat{\ubeta}^{(j)} -\sqrt{2n}\hat{e}\|_2^2 =&
		\|\mathbb{\bar{Y}}-\mathbb{\bar{X}}\ubeta^{(j)} -\sqrt{2n}e^{(j)}\|_2^2 \\
		& - 2 \langle  \mathbb{\bar{Y}}-\mathbb{\bar{X}}\ubeta^{(j)} -\sqrt{n}e, \mathbb{\bar{X}} \tilde{z}+ \sqrt{n}v  \rangle \\
		&+ \|\mathbb{\bar{X}}\tilde{z}+ \sqrt{2n}\tilde{v}\|_2^2\\
		=&\|\mathbb{\bar{Y}}-\mathbb{\bar{X}}\ubeta^{(j)} -\sqrt{2n}e^{(j)}\|_2^2 \\
		& - 2 \langle  \epsilon, \mathbb{X} \tilde{z}+ \sqrt{2n}v^{(1)}   \rangle- 2 \langle  \epsilon^{(\mathrm{S}_j)}, \mathbb{X}^{(\mathrm{S}_j)} z+ \sqrt{2n}v^{(2)}  \rangle \\
		& - \langle  \mathbb{X}\Delta^{(\mathrm{S}_j)}, \mathbb{X} \tilde{z}+ \sqrt{2n}v^{(1)}  \rangle+  \langle  \mathbb{X}^{(\mathrm{S}_j)}\Delta^{(\mathrm{S}_j)}, \mathbb{X}^{(\mathrm{S}_j)} \tilde{z}+ \sqrt{2n}v^{(2)}  \rangle \\
		&+ \|\mathbb{\bar{X}}\tilde{z}+ \sqrt{2n}\tilde{v}\|_2^2.
	\end{aligned}
\end{equation}
Denote $T(j)=\mathrm{S}(\Delta^{(j)}), E=\mathrm{S}(e^*)$. Putting these pieces together, 
\begin{equation}\label{eq:inequality}
	\begin{aligned}
		&\frac{1}{4n} \|\mathbb{\bar{X}}\tilde{z} + \sqrt{2n}\tilde{v}\|_2^2 \\
		&\quad\leq \frac{1}{2n}  \langle  \epsilon, \mathbb{X} \tilde{z}+ \sqrt{n}v^{(1)}  \rangle 
		+ \frac{1}{2n}  \langle  \epsilon^{(\mathrm{S}_j)}, \mathbb{X}^{(\mathrm{S}_j)} \tilde{z}+ \sqrt{n}v^{(2)}   \rangle \\
		&\quad\quad + \frac{1}{4n}\langle  \mathbb{X}\Delta^{(\mathrm{S}_j)}, \mathbb{X} \tilde{z}+ \sqrt{n}v^{(1)}   \rangle 
		- \frac{1}{4n}\langle  \mathbb{X}^{(\mathrm{S}_j)}\Delta^{(\mathrm{S}_j)}, \mathbb{X}^{(\mathrm{S}_j)} \tilde{z}+ \sqrt{n}v^{(2)}   \rangle \\
		&\quad\quad + \lambda_{\beta}^{(j)} (\|\tilde{z}_{T(j)}\|_1 - \|\tilde{z}_{T(j)^c}\|_1) 
		+ \lambda^{(j)}_{e} (\|\tilde{v}_E\|_1 - \|\tilde{v}_{E^c}\|_1) \\
		&\quad\leq \frac{1}{2n} \|\mathbb{X}^\top \epsilon\|_\infty \|\tilde{z}\|_1 
		+ \frac{1}{2n} \|\mathbb{X}^{(\mathrm{S}_j)\top} \epsilon^{(\mathrm{S}_j)}\|_\infty \|\tilde{z}\|_1 \\
		&\quad\quad + \frac{1}{2\sqrt{n}} \|\epsilon\|_\infty \|v^{(1)}\|_1 
		+ \frac{1}{2\sqrt{n}} \|\epsilon^{(\mathrm{S}_j)}\|_\infty \|v^{(2)}\|_1 \\
		&\quad\quad + \lambda_{\beta}^{(j)} (\|\tilde{z}_{T(j)}\|_1 - \|\tilde{z}_{T(j)^c}\|_1) 
		+ \lambda^{(j)}_{e} (\|\tilde{v}_E\|_1 - \|\tilde{v}_{E^c}\|_1) \\
		&\quad\quad + \frac{1}{4n}\|\hat{\Sigma}-\hat{\Sigma}^{(\mathrm{S}_j)}\|_{\infty}\|\Delta^{(\mathrm{S}_j)}\|_1\|\tilde{z}\|_1 \\
		&\quad\quad + \frac{1}{4\sqrt{n}}\left(\|\mathbb{X}\|_{\infty}+\|\mathbb{X}^{(\mathrm{S}_j)}\|_{\infty}\right)\|\Delta^{(\mathrm{S}_j)}\|_1\|\tilde{v}\|_1 \\
		&\quad\leq \left(\frac{1}{2n} \|\mathbb{X}^\top \epsilon\|_\infty + \frac{1}{2n} \|\mathbb{X}^{(\mathrm{S}_j)\top} \epsilon^{(\mathrm{S}_j)}\|_\infty + \lambda_{\beta}^{(j)}\right) \|\tilde{z}_{T(j)}\|_1 \\
		&\quad\quad - \left( \lambda_{\beta}^{(j)} - \frac{1}{2n} \|\mathbb{X}^{(\mathrm{S}_j)\top} \epsilon^{(\mathrm{S}_j)}\|_\infty - \frac{1}{2n} \|\mathbb{X}^\top \epsilon\|_\infty \right) \|\tilde{z}_{T(j)^c}\|_1 \\
		&\quad\quad + \left(\frac{\|\epsilon\|_\infty \vee \|\epsilon^{(\mathrm{S}_j)}\|_\infty }{\sqrt{n}} + \lambda^{(j)}_{e} + \frac{1}{4\sqrt{n}}\left(\|\mathbb{X}\|_{\infty}+\|\mathbb{X}^{(\mathrm{S}_j)}\|_{\infty}\right)\|\Delta^{(\mathrm{S}_j)}\|_1\right) \|\tilde{v}_E\|_1 \\
		&\quad\quad - \left(\lambda^{(j)}_{e} - \frac{\|\epsilon\|_\infty\vee \|\epsilon^{(\mathrm{S}_j)}\|_{\infty}}{\sqrt{n}} - \frac{1}{4\sqrt{n}}\left(\|\mathbb{X}\|_{\infty}+\|\mathbb{X}^{(\mathrm{S}_j)}\|_{\infty}\right)\|\Delta^{(\mathrm{S}_j)}\|_1\right) \|\tilde{v}_{E^c}\|_1.
	\end{aligned}
\end{equation}
By condition C1,
$$
\frac{1}{4\sqrt{n}}\l(\|\mathbb{X}\|_{\infty}+\|\mathbb{X}^{(\mathrm{S}_j)}\|_{\infty}\r)=O_P\l(\sqrt{\frac{\log (np)}{n}} \r),
$$
Thus, by the extended RE condition C1, it holds with probability approach to 1 as $n\to \infty$ that,
\begin{align*}
	\kappa_l^2(\|\tilde{z}\|_2 + \|\tilde{v}\|_2)^2 &\leq  C\sqrt{s_{j,0}}\sqrt{\frac{\log (p)}{n}}\|\tilde{z}\|_2+ C\sqrt{k}\sqrt{\frac{\log (np)}{n}}\|\Delta^{(\mathrm{S}_j)}\|_1 \|\tilde{v}\|_2,
\end{align*}
for some universal constant $C$. Combining these pieces together, we conclude
\begin{equation}\label{lem6:eq1}
\|\tilde{z}\|_2 + \|\tilde{v}\|_2 =O_P\l(\sqrt{s_{j,0}\vee k}\sqrt{\frac{\log (np)}{n}}\|\Delta^{(\mathrm{S}_j)}\|_1 \r).
\end{equation}
By (\ref{eq:inequality}),
\begin{align*}
	&\left(\lambda^{(j)}_{e} - \frac{\|\epsilon\|_\infty\vee \|\epsilon^{(\mathrm{S}_j)}\|_{\infty}}{\sqrt{n}} - \frac{1}{4\sqrt{n}}\left(\|\mathbb{X}\|_{\infty}+\|\mathbb{X}^{(\mathrm{S}_j)}\|_{\infty}\right)\|\Delta^{(\mathrm{S}_j)}\|_1\right) \|\tilde{v}_{E^c}\|_1\\
	&\quad\leq \left(\frac{1}{2n} \|\mathbb{X}^\top \epsilon\|_\infty + \frac{1}{2n} \|\mathbb{X}^{(\mathrm{S}_j)\top} \epsilon^{(\mathrm{S}_j)}\|_\infty + \lambda_{\beta}^{(j)}\right) \|\tilde{z}_{T(j)}\|_1 \\
&\quad\quad + \left(\frac{\|\epsilon\|_\infty \vee \|\epsilon^{(\mathrm{S}_j)}\|_\infty }{\sqrt{n}} + \lambda^{(j)}_{e} + \frac{1}{4\sqrt{n}}\left(\|\mathbb{X}\|_{\infty}+\|\mathbb{X}^{(\mathrm{S}_j)}\|_{\infty}\right)\|\Delta^{(\mathrm{S}_j)}\|_1\right) \|\tilde{v}_E\|_1,
\end{align*}
thus, with probability to 1, we have that
\begin{equation}\label{lem6:eq2}
 \|\tilde{v}\|_1\leq  C\sqrt{s_{j,0}}\sqrt{\frac{\log p}{\log n}} \|\tilde{z}\|_2  + \sqrt{k} \|\tilde{v}\|_2,
\end{equation}
for some universal constant $C$. By the L1 bound of Lasso estimator(Corollary4.5 in \cite{fan2020statistical}), (\ref{lem6:eq1}) and (\ref{lem6:eq2}), 
 \begin{equation}\label{lem6:eq3}
 	\begin{aligned}
|\hat{h}_j - \|\Delta^{(j)}\|_1| &\leq 2\left\|\hat{\ubeta}^{(j)} -\ubeta^{(j)}\right\|_1+2\left\| \hat{\ubeta}^{(\mathrm{S}_j)}-\ubeta^{(\mathrm{S}_j)}\right\|_1\\
&\leq 2\left\|\hat{\ubeta}^{(j)} -\ubeta^{(j)}\right\|_1+2\left\| \hat{\ubeta}^{(\mathrm{S}_j)}-\ubeta^{(\mathrm{S}_j)}\right\|_1\\
&=O_P\l(s_{j,0}\vee k\sqrt{\frac{\log p}{\log n}}\sqrt{\frac{\log (np)}{n}}\|\Delta^{(\mathrm{S}_j)}\|_1 +s_{j,0}\sqrt{\frac{\log p}{n}}\r),
\end{aligned}
 \end{equation}
which completes the proof.
	
	{\bf Proof of Theorem \ref{th:1}:}
	Theorem (\ref{th:1} is a direct consequence of proposition \ref{prop0} and Lemma \ref{lem:tr}.
	\bibliography{ref}
\end{document}